\newtheorem{theorem}{Theorem}
\newtheorem{proposition}[theorem]{Proposition}
\newtheorem{lemma}[theorem]{Lemma}
\newtheorem{definition}[theorem]{Definition}
\newtheorem{corollary}[theorem]{Corollary}
\newtheorem{remark}[theorem]{Remark}
\newtheorem{construction}[theorem]{Construction}
\numberwithin{equation}{section}
\numberwithin{theorem}{section}
\newcommand{\R}{\ensuremath{\mathbb{R}}}
\newcommand{\Lt}{\ensuremath{\mathrm{L}}}
\newcommand{\Rt}{\ensuremath{\mathrm{R}}}
\newcommand{\norm}[1]{\ensuremath{\left| #1 \right|}}
\newcommand{\ora}[1]{\ensuremath{\overrightarrow{#1}}}
\newcommand{\inn}{\ensuremath{\mathrm{in}}}
\newcommand{\out}{\ensuremath{\mathrm{out}}}
\newcommand{\conv}{\ensuremath{\mathrm{conv}}}
\newcommand{\new}{\ensuremath{\mathrm{new}}}
\newcommand{\ave}{\ensuremath{\mathrm{ave}}}
\newcommand{\proj}{\ensuremath{\mathrm{pr}}}
\DeclareMathOperator{\Arctan}{Arctan}
\begin{document}
\title[New efficient 3D gadgets in origami extrusions]%
{New efficient flat-back 3D gadgets in origami extrusions compatible with the conventional pyramid-supported 3D gadgets}

\author{Mamoru Doi}
\address{11-9-302 Yumoto-cho, Takarazuka, Hyogo 665-0003, Japan}
\email{doi.mamoru@gmail.com}
\maketitle
\noindent{\bfseries Abstract. }
An origami extrusion is a folding of a $3$D object in the middle of a flat piece of paper, using $3$D gadgets which create faces with solid angles.
Our main concern is to make origami extrusions of polyhedrons using $3$D gadgets with simple outgoing pleats, where a simple pleat
is a pair of a mountain fold and a valley fold which are parallel to each other.
In this paper we present a new type of $3$D gadgets with simple outgoing pleats in origami extrusions and their construction.
Our $3$D gadgets are downward compatible with the conventional pyramid-supported gadgets developed by Calros Natan as a generalization of the cube gadget, 
in the sense that in many cases we can replace the conventional gadgets with the new ones with the same outgoing pleats while the converse is not always possible.
We can also change angles of the outgoing pleats under certain conditions.
Unlike the conventional pyramid-supported $3$D gadgets, the new ones have flat back sides above the ambient paper, 
and thus we can make flat-foldable origami extrusions.
Furthermore, since our new $3$D gadgets are less interfering with adjacent gadgets than the conventional ones, 
we can use wider pleats at one time to make the extrusion higher.
For example, we prove that the maximal height of the prism of any convex polygon (resp. any triangle)
that can be extruded with our new gadgets is more than $4/3$ times (resp. $\sqrt{2}$ times) of that with the conventional ones. 
We also present explicit constructions of division/repetition and negative versions of the new $3$D gadgets.

\section{Introduction}
There are many studies on folding $3$D objects from a flat piece of paper.
In particular, E. Demaine, M. Demaine and Mitchell proved in \cite{DDM} that any polyhedron can be folded from a piece of paper,
and a practical software package for designing the crease pattern of any polyhedron was developed by E. Demaine and Tachi \cite{DT}.

In this paper we are concerned with \emph{origami extrusions}, which are a different type of $3$D origami.
An origami extrusion is a folding of a $3$D object in the middle of a flat piece of paper, using \emph{$3$D gadgets} which create faces with solid angles.
Among many kinds of $3$D gadgets known in origami extrusions, the \emph{cube gadget}, four copies of which are shown in Figure $\ref{fig:cube_conv}$,
was independently discovered by David Huffman and other paper folders, which creates the top and two vertical faces of a cube,
where the side faces created are supported by a triangular pyramid from inside 
(see \cite{BDDO}, in which this cube gadget is referred to as the $\Arctan (1/2)$ gadget).
In \cite{BDDO}, Benbernou, E. Demaine, M. Demaine, and Ovadyr studied  origami extrusions of polycubes introducing three kinds of
cube gadgets including the above one.
Also, E. Demaine, M. Demaine and Ku extruded in \cite{DDK} any orthogonal maze introducing six kinds of vertex gadgets.
Notably, Carlos Natan generalized in \cite{Natan} the pyramid-supported cube gadget to one which creates a top face parallel to the ambient paper, 
and two side faces in a general position which share a ridge.
In each of Natan's $3$D gadgets, a triangular pyramid supports from inside the two side faces created, 
and the side faces do not attach to each other directly, the ridge between which we can see the inside space.

A common feature of the above $3$D gadgets is that they have \emph{simple} outgoing pleats, 
where a simple pleat is a pair of a mountain fold and a valley fold which are parallel to each other.
In origami extrusions, the use of $3$D gadgets with simple outgoing pleats
is of great advantage to successive extrusions because the paper remains flat outside the extruded object after the folding.
See also Cheng's paper \cite{Cheng} for examples of successive extrusions.
Simple outgoing pleats have another advantage in making \emph{origami $3$D tessellations} 
in which adjacent extrusions share some of their outgoing pleats.

Our main concern is to make origami extrusions of polyhedrons using $3$D gadgets with simple outgoing pleats.
The purpose of this paper is to present a new type of $3$D gadgets in origami extrusions, which improve the conventional $3$D gadgets by Natan
in several respects, where we use the term `the conventional $3$D gadgets' to indicate the above pyramid-supported ones. 

Firstly, our new $3$D gadgets are \emph{downward compatible} with the conventional ones. 
This means that we can replace the conventional gadgets with
the new ones with the same outgoing pleats in many cases (see Theorem $\ref{thm:downward_compatibility}$), 
but the converse replacement is not always possible.
For example, the crease pattern (abbreviated as `CP') in Figure $\ref{fig:cube_conv}$ 
can be replaced by that in Figure $\ref{fig:cube_new}$ using our new cube gadgets.
\begin{figure}[htbp]
  \begin{center}
    \begin{tabular}{c}
\addtocounter{theorem}{1}
      \begin{minipage}{0.5\hsize}
        \begin{center}
          \includegraphics[width=\hsize]{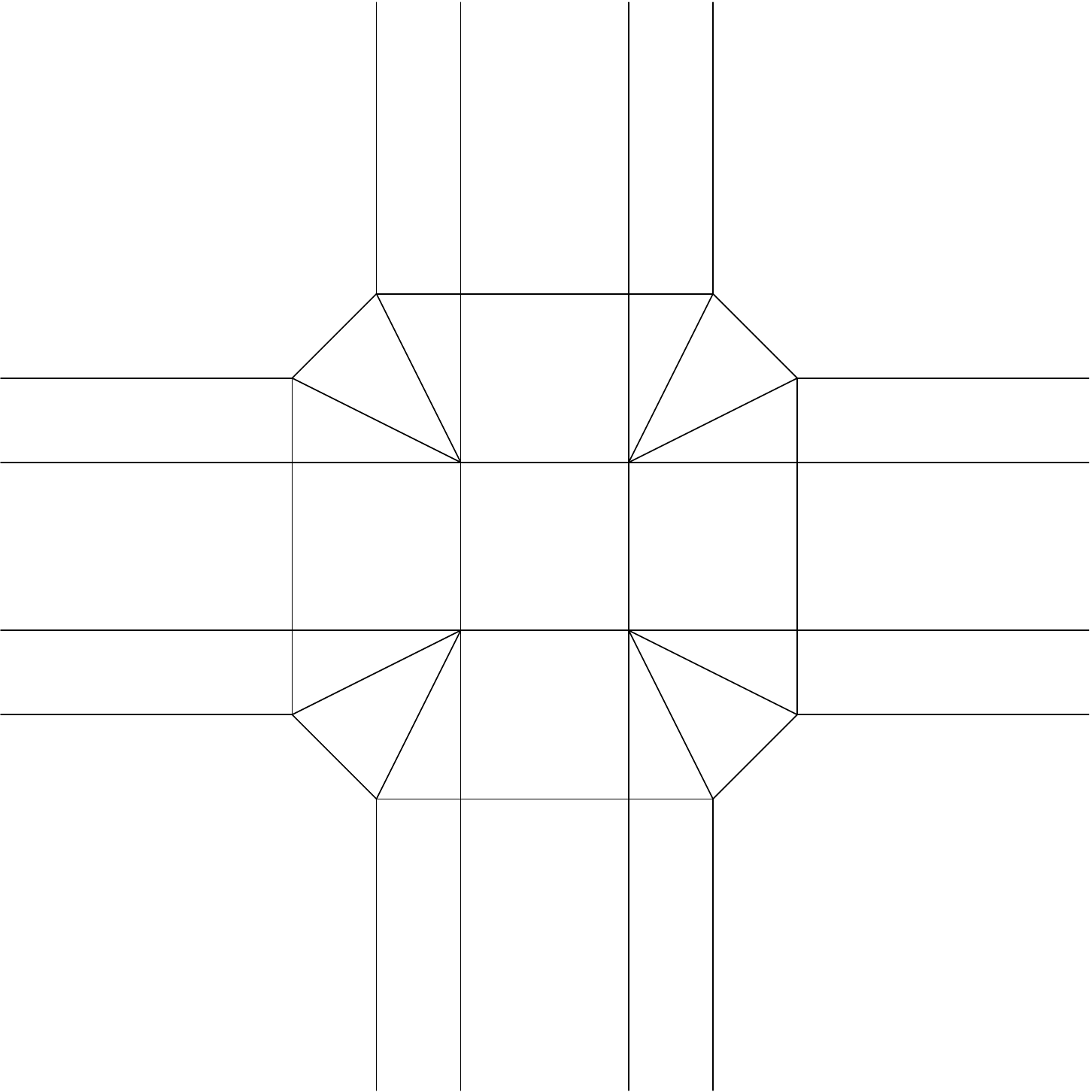}
        \end{center}
    \caption{CP of the cube extruded with the conventional cube gadgets}
    \label{fig:cube_conv}
      \end{minipage}
\addtocounter{theorem}{1}
      \begin{minipage}{0.5\hsize}
        \begin{center}
          \includegraphics[width=\hsize]{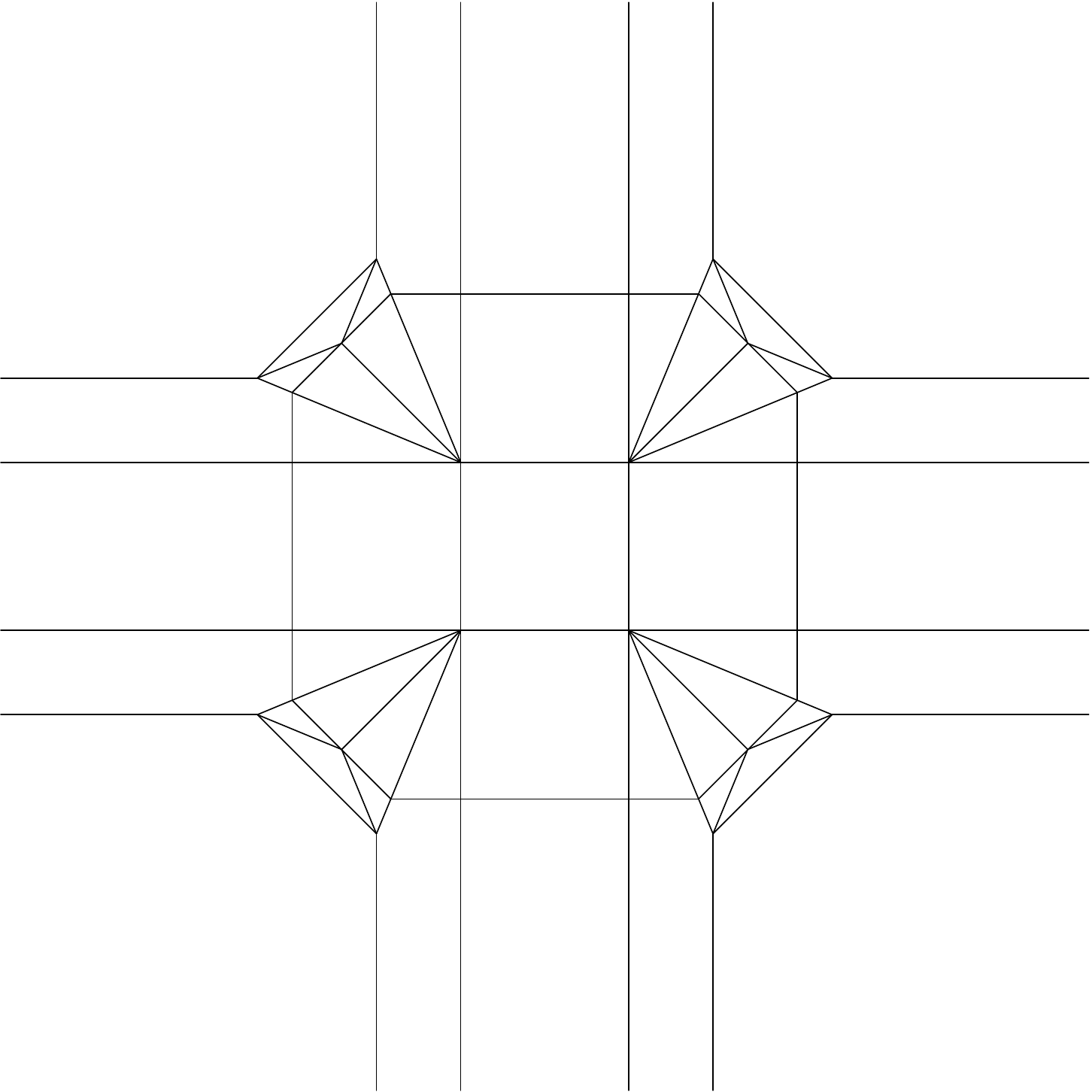}
        \end{center}
    \caption{CP of the cube extruded with the new cube gadgets}
    \label{fig:cube_new}
      \end{minipage}
    \end{tabular}
  \end{center}
\end{figure}

Secondly, the new $3$D gadgets are less interfering with adjacent gadgets, and thus more effieient than the conventional ones.
Observe that in the resulting extrusion of Figure $\ref{fig:cube_conv}$, the triangular pyramid inside each cube gadget touches those of the adjacent gadgets, 
while the `ears' and `tongue' (see the third paragraph in Section $\ref{sec:5}$) of each new cube gadget of Figure 
$\ref{fig:cube_new}$ does not interfere with adjacent gadgets. 
Indeed, we can further make the extruded square prism higher by widening the outgoing pleats. 
The crease pattern of the highest square prism, $\sqrt{2}$ times as high as the cube, is shown in Figure $\ref{fig:cube_new_max}$, 
in which we cannot replace our new cube gadget with the conventional one. 

Thirdly, our new $3$D gadgets have flat back sides above the ambient paper, and no gap between the side faces to see the inside unlike the conventional ones. 
If we add some creases to the crease pattern of Figure $\ref{fig:cube_new}$ as in Figure $\ref{fig:flat-foldable}$, 
we can extrude a cube which is flat-foldable by a twist with the added dotted creases.
Also, we can make an extrusion with curved creases (see Section $\ref{sec:9}$).

Fourthly, we can easily change angles of the two outgoing pleats associated with a new $3$D gadget independently 
under a certain condition although the crease pattern changes completely, which is impossible or at least difficult for the conventional gadgets.
\begin{figure}[htbp]
  \begin{center}
    \begin{tabular}{c}
\addtocounter{theorem}{1}
      \begin{minipage}{0.5\hsize}
        \begin{center}
          \includegraphics[width=\hsize]{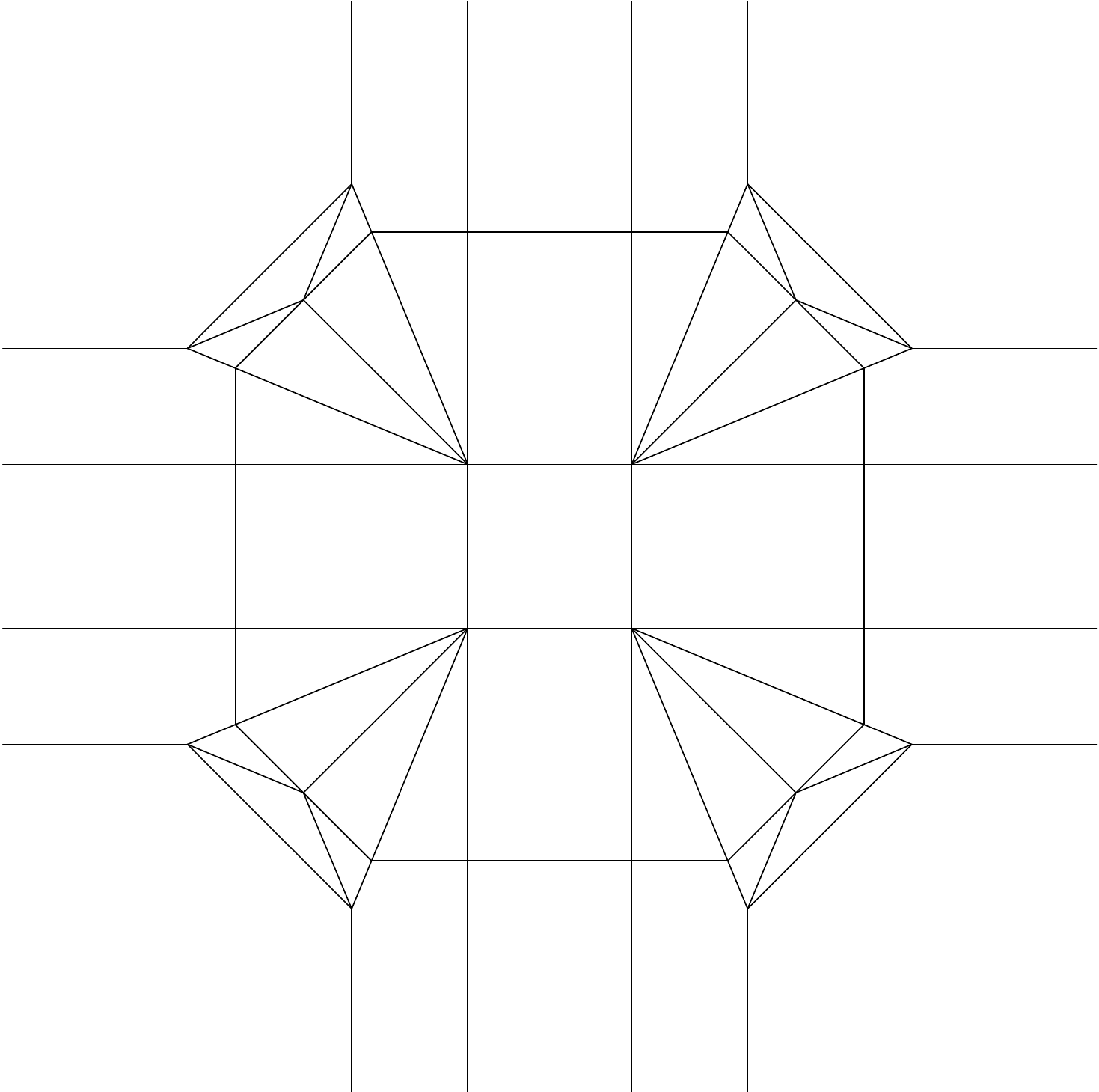}
        \end{center}
    \caption{CP of the square prism of maximal height extruded with the new cube gadgets}
    \label{fig:cube_new_max}
      \end{minipage}
\addtocounter{theorem}{1}
      \begin{minipage}{0.5\hsize}
        \begin{center}
          \includegraphics[width=\hsize]{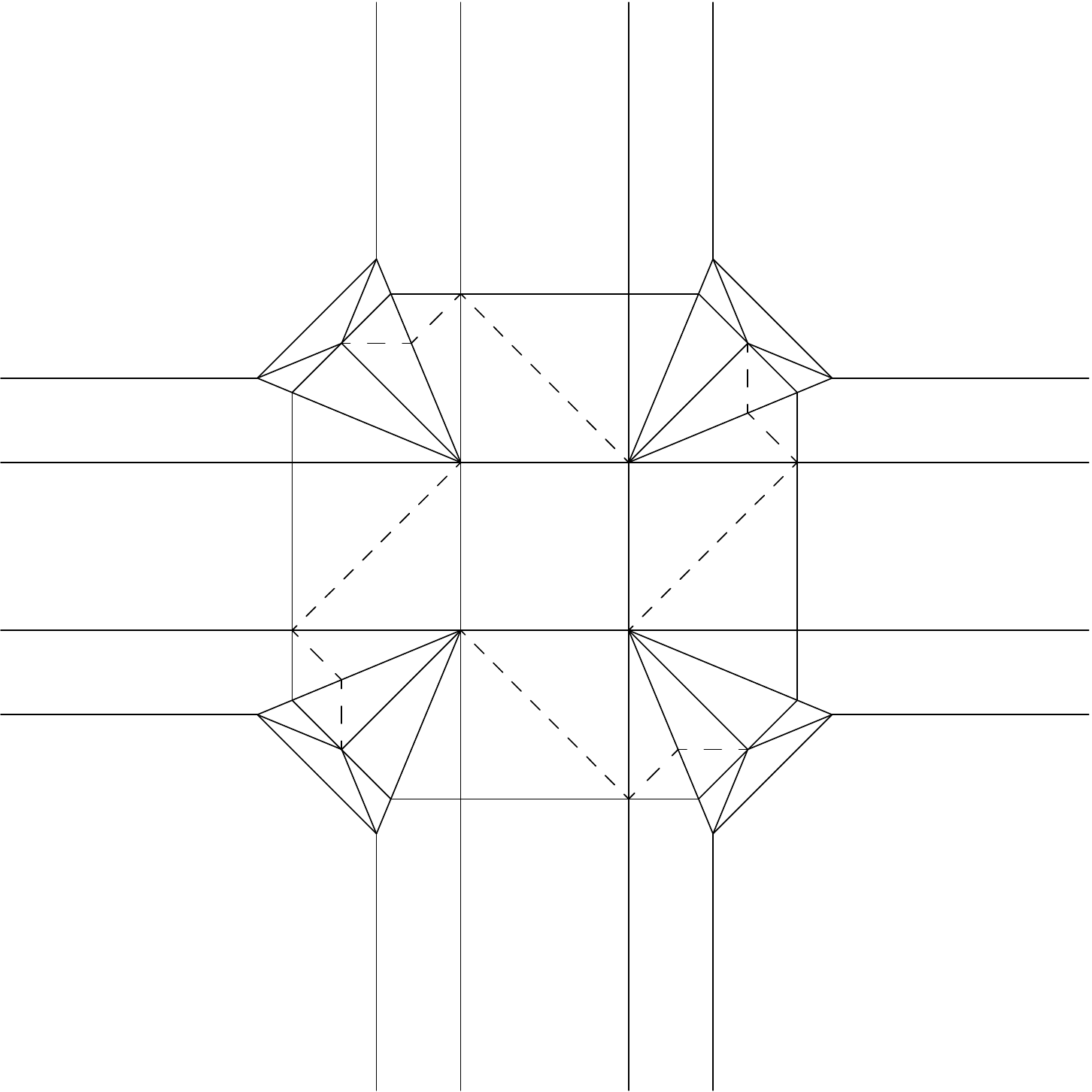}
        \end{center}
    \caption{CP of a flat-foldable cube extruded with the new cube gadgets}
    \label{fig:flat-foldable}
      \end{minipage}
    \end{tabular}
  \end{center}
\end{figure}

This paper is organized as follows. 
Section $\ref{sec:2}$ reviews the construction of the conventional $3$D gadgets developed by Natan, fixing our settings and notation throughout this paper.
In Section $\ref{sec:3}$, we give an explicit construction of our new gadgets in a general setting. 
In Section $\ref{sec:4}$ we discuss the validity of the conditions given in the construction of our new gadgets.
We also check the constructibility and foldablity of the resulting crease pattern. 
In Section $\ref{sec:5}$ we introduce the notion of interference coefficients for both the conventional and the $3$D new gadgets. 
Also, we give formulas in Theorems $\ref{thm:height_conv}$ and $\ref{thm:height_new}$ 
for calculating the maximal height of the resulting extrusion from initially given data.
Section $\ref{sec:6}$ is devoted to the proof of the downward compatibility theorem of the new gadgets with the conventional gadgets. 
Section $\ref{sec:7}$ focuses on extruding prisms of convex polygons and refine the formulas given in Theorem $\ref{thm:height_conv}$.
In particular, we prove in Corollary $\ref{cor:comparison_height_prism}$ that the maximal height of the prism of any convex polygon (resp. any triangle)
that can be extruded with our new gadgets is more than $4/3$ times (resp. $\sqrt{2}$ times) of that with the conventional ones. 
In Section $\ref{sec:8}$ we consider repetition of a $3$D gadget to make the extrusion higher with the same interference distances.
This is also regarded as division of a gadget to make the interference distances smaller.
In Section $\ref{sec:9}$ we give some remarks on our constructions of the new $3$D gadgets and further applications such as negative $3$D gadgets and
extrusions with curved creases.
Finally, Section $\ref{sec:conclusion}$ gives our conclusion.
Throughout this paper, we assume that the paper we fold is `ideal', and thus its thickness can be ignored.

As mentioned above, we are concerned with extruding polyhedrons on a flat piece of paper.
In particular, a polyhedron $\Delta$ we want to extrude is as follows.
\begin{enumerate}[(a)]
\item The bottom and the top faces $P_0,P_h$ are included in the $xy$-plane $H_0=\set{z=0}$ and $H_h=\set{z=h},h>0$ respectively.
\item The bottom face $P_0$ is a convex polygon $B_1 B_2\dots B_N$ for some $N\geqslant 3$, where 
the subscripts of $B$'s are numbered in the counterclockwise direction and $B_0=B_N, B_{N+1}=B_1$ for the later convenience.
\item Each side face $T_i$ for $i=1,\dots ,N$ is a trapezoid $A_i B_i B_{i+1} A_{i+1}$ with $A_i,A_{i+1}$ in $H_h$, and thus edge $A_iA_{i+1}$ 
is parallel to edge $B_i B_{i+1}$, where $A_0=A_N,A_{N+1}=A_1$ and we allow the case $A_i=A_{i+1}$, that is, $T_i$ is a triangle.
\item Outside $\Delta$, the paper is flat and there are only simple pleats, 
where a simple pleat is a pair of a mountain fold and a valley fold which are parallel to each other.
\end{enumerate}
Throughout this paper, we will use $A_1,A_2,\dots$ for the vertices of the top face $P_h$, and $B_1,B_2,\dots$ for those of the bottom face $P_0$.
In principle, from condition (d), we can successively extrude polyhedrons satisfying the above conditions (a)--(d) 
so that the top or a side face of each polyhedron includes the bottom face of the previous polyhedron.
However, more layers of polyhedrons make the crease pattern much more complicated, 
and so we need to adjust widths and angles of the outgoing pleats appropiately.
For $i=1,\dots N$, let $\alpha_i$  to be the inner angle $\angle B_{i-1}B_i B_{i+1}$ of $P_0=B_1\dots B_N$ at vertex $B_i$, and 
$\beta_{i,\Lt}=\pi -\angle A_iB_i B_{i-1},\beta_{i,\Rt}=\pi -\angle A_iB_iB_{i+1}$, where the subscripts $\Lt$ and $\Rt$ stand for
`left' and `right' respectively. 
We have 
\begin{equation}\label{def:angles}
\angle A_{i-1}A_i A_{i+1}=\alpha_i,\quad\angle A_{i-1}A_i B_i=\beta_{i,\Lt},\quad\text{and}\quad\angle B_i A_i A_{i+1}=\beta_{i,\Rt}
\end{equation}
if $A_{i-1}\neq A_i$ and $A_i\neq A_{i+1}$. 
For consistency, we consider $\eqref{def:angles}$ as the definition of the angles on the left-hand side when $A_{i-1}=A_i$ or $A_i=A_{i+1}$.
Note that the above $\alpha_i,\beta_{i,\Lt},\beta_{i,\Rt}$ satisfy the following conditions:
\begin{enumerate}[(i)]
\item $\alpha_i <\beta_{i,\Lt}+ \beta_{i,\Rt}$, $\beta_{i,\Lt}<\alpha_i+ \beta_{i,\Rt}$ and $\beta_{i,\Rt}<\alpha_i+ \beta_{i,\Lt}$, and
\item $\alpha_i +\beta_{i,\Lt} +\beta_{i,\Rt} <2\pi$ for all $i=1,\dots ,N$.
\end{enumerate}
Condition $\mathrm{(i)}$ is necessary for each $A_i$ to form a non-flat solid angle.
Condition $\mathrm{(ii)}$ enables us to develop $\Delta$ without dividing any face of $\Delta$, preserving angles between edges of $\Delta$ in $P_0$ and $P_h$.

\section{The conventional $3$D gadgets developed by Natan}\label{sec:2}
In this section we review the conventional $3$D gadgets developed by Carlos Natan \cite{Natan}, which generalizes the \emph{cube gadget} given in Figure 
$\ref{fig:cube_conv}$.
As a generalization of the cube gadget, let us consider a \emph{local model} of an above-mentioned polyhedron $\Delta$ as follows.
\begin{enumerate}[(A)]
\item The object we want to extrude in the middel of the paper (which we suppose to be in the $xy$-plane $H_0=\set{z=0}$) with a $3$D gadget 
is a part of a polyhedron such that its top face is bounded by two rays $j_\Lt$ and $j_\Rt$ starting from $A$ in $H_h=\set{z=h}$,
and its bottom face is bounded by two rays $k_\Lt$ and $k_\Rt$ with $k_\sigma$ parallel to $j_\sigma$ for $\sigma =\Lt ,\Rt$, 
starting from $B$ in $H_0=\set{z=0}$.
Suppose the inner angle $\alpha$ of the top face at $A$ (and so the inner angle of the bottom face at $B$) satisfies $0<\alpha <\pi$.
\item There are only simple pleats outside the extruded object.
\end{enumerate}
Let $\beta_\sigma =\angle BAj_\sigma =\pi -\angle ABk_\sigma$ for $\sigma =\Lt ,\Rt$ and $\gamma =2\pi -\alpha -\beta_\Lt -\beta_\Rt$.
Then we develop the top and side faces on the paper as in Figure $\ref{fig:development_conv}$, 
where $\ell_\Lt$ and $\ell_\Rt$ are the mountain folds of the outgoing pleats, and $A B_\Lt$ and $A B_\Rt$ assemble to form ridge $AB$.
\begin{figure}[htbp]
  \begin{center}
\addtocounter{theorem}{1}
          \includegraphics[width=0.75\hsize]{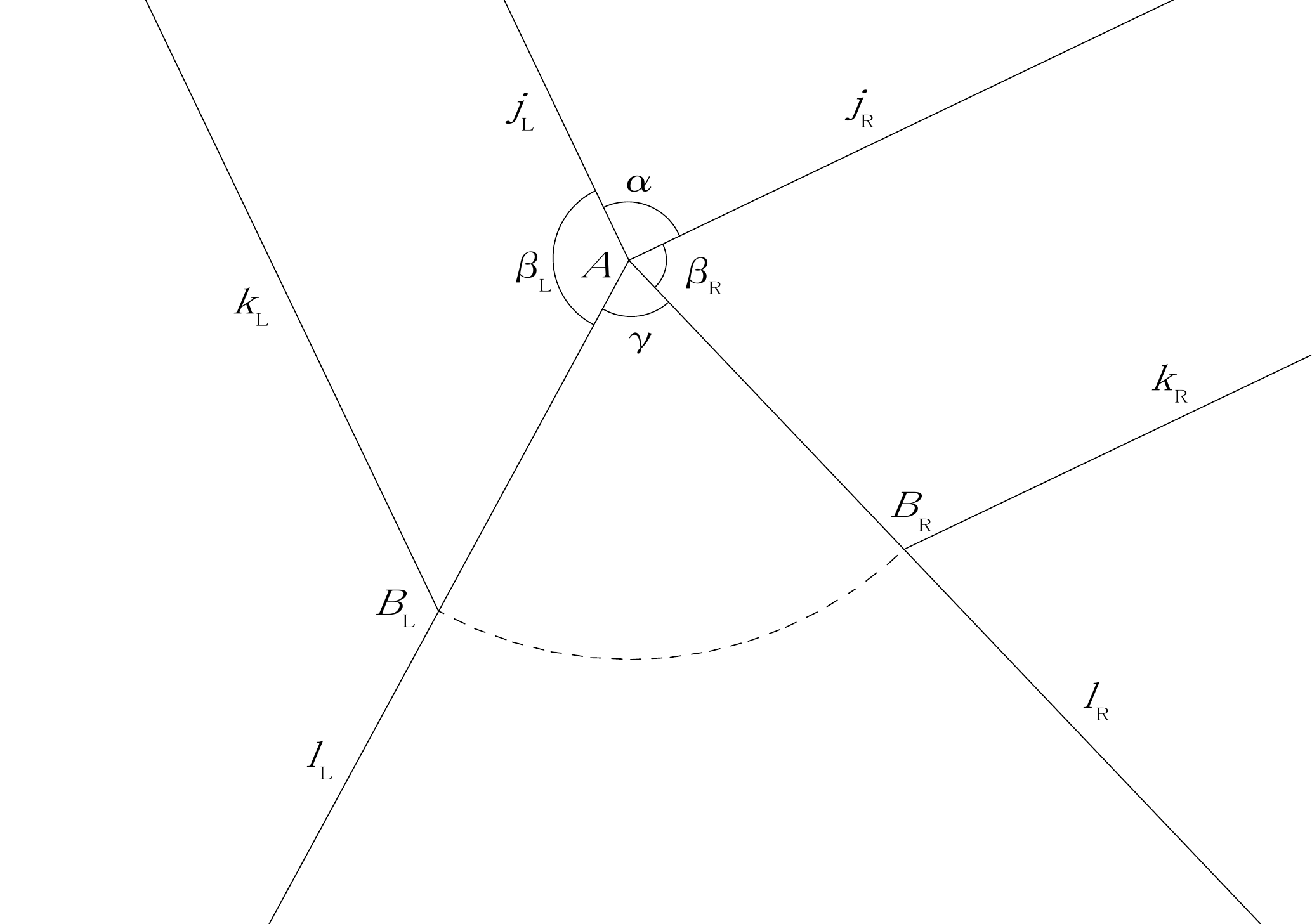}
    \caption{Delolopment to which we apply a conventional $3$D gadget}
    \label{fig:development_conv}
\end{center}
\end{figure}
\begin{lemma}\label{lem:lambda}
The height $h$ of the resulting extrusion is given by
\begin{equation*}
h=\lambda (\alpha ,\beta_\Lt ,\beta_\Rt)\norm{A B},
\end{equation*}
where $\norm{AB}$ denotes the length of ridge $AB$, and $\lambda (\alpha ,\beta_\Lt ,\beta_\Rt)$ is given by
\begin{equation}\label{lambda}
\lambda (\alpha ,\beta_\Lt ,\beta_\Rt )=\left( 1-\frac{\cos^2\beta_\Lt +\cos^2\beta_\Rt-2\cos\alpha\cos\beta_\Lt\cos\beta_\Rt}{\sin^2\alpha}\right)^{1/2}.
\end{equation}
\end{lemma}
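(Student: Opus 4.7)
The plan is to reduce the claim to a standard exercise in spherical trigonometry: the height $h$ is the vertical displacement between $A\in H_h$ and $B\in H_0$, so if $\vec u$ denotes the unit vector along $\overrightarrow{AB}$, then $h=|AB|\cdot|\vec u\cdot\vec e_z|$, and it suffices to show that the $z$-component of $\vec u$ equals $\pm\lambda(\alpha,\beta_\Lt,\beta_\Rt)$. The rays $j_\Lt,j_\Rt$ lie in the horizontal plane $H_h$, meet at $A$ at angle $\alpha$, and the segment $AB$ meets $j_\sigma$ at the $3$D angle $\beta_\sigma$ at $A$ by the defining equation $\beta_\sigma=\angle BAj_\sigma$. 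Thus we are essentially computing the altitude of a trihedral angle from knowledge of its three face angles.

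Concretely, I would place $A$ at the origin and $H_h$ as the plane $\{z=0\}$ (temporarily), and choose orthonormal coordinates so that the unit vector along $j_\Lt$ is $(1,0,0)$ and the unit vector along $j_\Rt$ is $(\cos\alpha,\sin\alpha,0)$. Writing $\vec u=(x,y,z)$ with $x^2+y^2+z^2=1$, the conditions $\vec u\cdot j_\Lt=\cos\beta_\Lt$ and $\vec u\cdot j_\Rt=\cos\beta_\Rt$ give the linear system
\begin{equation*}
x=\cos\beta_\Lt,\qquad x\cos\alpha+y\sin\alpha=\cos\beta_\Rt,
\end{equation*}
which is uniquely solvable because $0<\alpha<\pi$. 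Solving for $y$ yields $y=(\cos\beta_\Rt-\cos\alpha\cos\beta_\Lt)/\sin\alpha$, and then $z^2=1-x^2-y^2$ collapses, after expanding and using $\sin^2\alpha+\cos^2\alpha=1$, to exactly the expression under the square root in $\eqref{lambda}$.

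Finally, I would note that the absolute value of $z$ is the sine of the angle that segment $AB$ makes with the horizontal plane $H_h$, so $h=|z|\cdot|AB|=\lambda(\alpha,\beta_\Lt,\beta_\Rt)|AB|$. The only non-routine observation is that $z^2\geqslant 0$ is guaranteed, which is precisely condition (i) imposed on the angles $(\alpha,\beta_\Lt,\beta_\Rt)$ (equivalently, the spherical triangle inequality ensuring that a non-degenerate trihedral angle with the prescribed face angles actually exists); I would remark on this briefly to justify that $\lambda$ is real and positive. The main obstacle, if there is one, is purely bookkeeping in the algebraic simplification, which is straightforward once one groups terms by $\cos^2\beta_\Lt$ and factors the identity $\sin^2\alpha=1-\cos^2\alpha$ at the right moment.
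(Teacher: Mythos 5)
Your proof is correct and essentially the paper's own argument: both reduce to the same linear system for the unit vector along ridge $AB$ against two horizontal unit directions making angle $\alpha$ (giving $x=\cos\beta_\sigma$ and $x\cos\alpha+y\sin\alpha=\cos\beta_{\sigma'}$), followed by $z^2=1-x^2-y^2$; the paper derives the two dot-product conditions from the development via the rotation of the side faces about the directions of $k_\sigma$, whereas you read them off directly from the definition $\beta_\sigma=\angle BAj_\sigma$, which is an immaterial difference. One small slip in your closing remark: positivity of $z^2$ (existence of a non-degenerate trihedral angle) requires condition (ii), $\alpha+\beta_\Lt+\beta_\Rt<2\pi$, in addition to condition (i), though this does not affect the derivation of $\eqref{lambda}$.
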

\begin{proof}
Since $\lambda =\lambda (\alpha ,\beta_\Lt ,\beta_\Rt )$ is scale-invariant, 
we may assume $\norm{AB}=1$ without loss of generality, in which case we have $h=\lambda$.
For $\sigma =\Lt ,\Rt$, let $\ora{u_\sigma}$ be the unit direction vector of ray $k_\sigma$, and $\ora{v_\sigma}=\ora{A B_\sigma}$.
Also, let $\ora{w}$ be the vector $\ora{A B}$ of the resulting extrusion.
Then $\ora{w}$ is obtained by a rotation of $\ora{u_\sigma}$ around $\ora{v_\sigma}$ for both $\sigma =\Lt ,\Rt$.
Thus we have
\begin{align}
\ora{u_\Rt}\cdot\ora{w}&=\ora{u_\Rt}\cdot\ora{v_\Rt},\label{product_uv_R}\\
\ora{u_\Lt}\cdot\ora{w}&=\ora{u_\Lt}\cdot\ora{v_\Lt},\label{product_uv_L}\\
\norm{\ora{u_\Lt}}&=\norm{\ora{u_\Rt}}=\norm{\ora{v_\Lt}}=\norm{\ora{v_\Rt}}=\norm{\ora{w}}=1\label{norm_uvw}.
\end{align}
If we take $\ora{u_\Rt}$ to be the unit vector of the $x$-axis, from $\eqref{norm_uvw}$ we have
\begin{equation*}
\begin{aligned}\label{vector_uv}
\ora{u_\Rt}&=(1,0,0),&\ora{v_\Rt}&=(\cos\beta_\Rt ,-\sin\beta_\Rt ,0),\\
\ora{u_\Lt}&=(\cos\alpha ,\sin\alpha ,0),&\ora{v_\Lt}&=(\cos (\alpha +\beta_\Lt ),\sin (\alpha +\beta_\Lt ),0).
\end{aligned}
\end{equation*}
Letting $\ora{w}=(x,y,z)$ and substituting $\eqref{vector_uv}$ into $\eqref{product_uv_R}$, we have 
\begin{equation}\label{component_w}
x=\cos\beta_\Rt .
\end{equation}
Then substituting $\eqref{component_w}$ into $\eqref{product_uv_L}$ gives that
\begin{equation*}
\cos\alpha\cos\beta_\Rt +y\sin\alpha =\cos\beta_\Lt .
\end{equation*}
Hence we have
\begin{align}
\lambda (\alpha ,\beta_\Lt ,\beta_\Rt )^2&=h^2=\norm{z}^2=1-x^2-y^2\\
&=1-\cos^2\beta_\Rt -\left(\frac{\cos\beta_\Lt -\cos\alpha\cos\beta_\Rt}{\sin\alpha}\right)^2\\
&=1-\frac{\cos^2\beta_\Lt +\cos^2\beta_\Rt-2\cos\alpha\cos\beta_\Lt\cos\beta_\Rt}{\sin^2\alpha},
\end{align}
which completes the proof.
\end{proof}
\begin{construction}\label{const:conv}\rm
Consider a development as shown in Figure $\ref{fig:development_conv}$, for which we require the following conditions.
\begin{enumerate}[(i)]
\item $\alpha <\beta_\Lt + \beta_\Rt$, $\beta_\Lt <\alpha +\beta_\Rt$ and $\beta_\Rt <\alpha+ \beta_\Lt$.
\item $\alpha +\beta_\Lt +\beta_\Rt <2\pi$.
\item $\alpha +\beta_\Lt +\beta_\Rt >\pi$.
\end{enumerate}
Then the crease pattern of Natan's $3$D gadget is constructed as follows, where all procedures are done for both $\sigma =\Lt ,\Rt$.
\begin{enumerate}
\item Draw a perpendicular to $\ell_\sigma$ through $B_\sigma$ for both $\sigma =\Lt ,\Rt$, letting $C$ be the intersection point.
\item Draw a perpendicular bisector $m_\sigma$ to segment $B_\sigma C$.
\item Determine a point $D_\sigma$ on $m_\sigma$ so that $\angle AB_\sigma D_\sigma=\pi -\beta_\sigma$, and restrict $m_\sigma$ to a ray 
starting from $D_\sigma$ and going in the same direction as $\ell_\sigma$.
\item The desired crease pattern is shown as the solid lines in Figure $\ref{fig:gadget_conv}$, 
and the assignment of mountain folds and valley folds is given in Table $\ref{tbl:assignment_conv}$.
\end{enumerate}
\end{construction}
\begin{figure}[htbp]
  \begin{center}
\addtocounter{theorem}{1}
          \includegraphics[width=0.75\hsize]{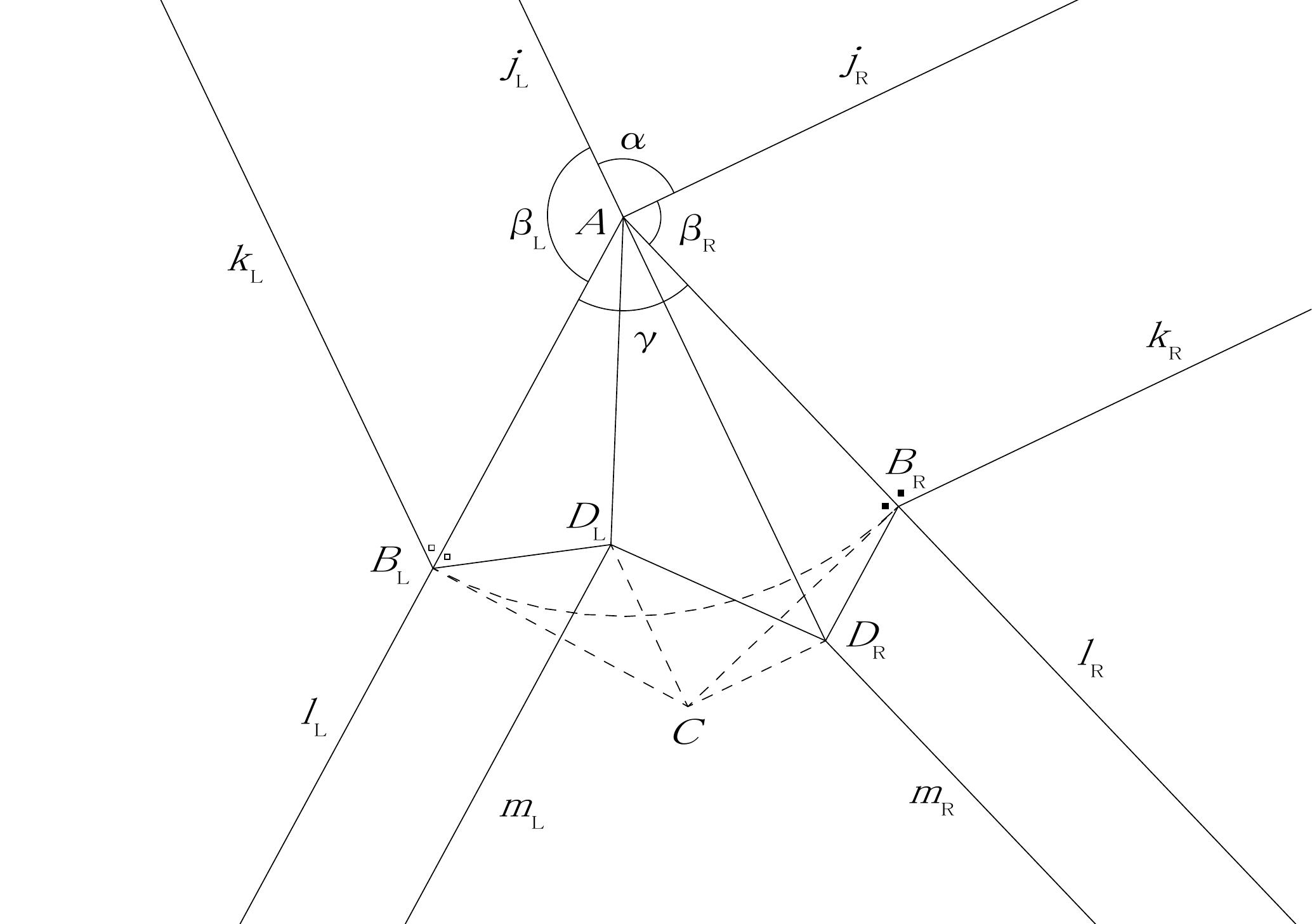}
    \caption{CP of the conventional $3$D gadget}
    \label{fig:gadget_conv}
\end{center}
\end{figure}
\renewcommand{\arraystretch}{1.5}
\addtocounter{theorem}{1}
\begin{table}[h]
\begin{tabular}{c|c}
mountain folds&$j_\sigma ,\ell_\sigma ,AB_\sigma ,B_\sigma D_\sigma$\\ \hline
valley folds&$k_\sigma ,m_\sigma ,AD_\sigma ,D_\Lt D_\Rt$
\end{tabular}\vspace{0.5cm}
\caption{Assignment of mountain folds and valley folds to the conventional $3$D gadget}
\label{tbl:assignment_conv}
\end{table}
Conditions $\mathrm{(i)}$ and $\mathrm{(ii)}$ are the same as given in the end of Introduction, and these together imply $\alpha <\pi$ as desired.
Condition $\mathrm{(iii)}$ is necessary for the existence of the point $C$ inside angle $\gamma$ in Figure $\ref{fig:development_conv}$. 
Note that for this gadget $B_\Lt$ and $B_\Rt$ move to $C$ and 
$\triangle A B_\Lt D_\Lt ,\triangle A B_\Rt D_\Rt ,\triangle A D_\Lt D_\Rt$ and $\triangle C D_\Lt D_\Rt$
form a triangular pyramid to support from inside the top and side faces. 

We rewrite conditions $\mathrm{(i)}$--$\mathrm{(iii)}$ in terms of $\beta_\Lt ,\beta_\Rt$ and $\gamma$ as the following for the later convenience, 
which is clear by a straightforward calculation.
\begin{lemma}\label{lem:condition_gamma}
Conditions $\mathrm{(i)}$--$\mathrm{(iii)}$ of Construction $\ref{const:conv}$ are equivalent to the following.
\begin{enumerate}
\item[$(\mathrm{i}')$] $\beta_\Lt +\beta_\Rt +\gamma /2>\pi$, $\beta_\Lt +\gamma /2<\pi$ and $\beta_\Rt +\gamma /2<\pi$.
\item[$(\mathrm{ii}')$] $\gamma >0$.
\item[$(\mathrm{iii}')$] $\gamma <\pi$.
\end{enumerate}
\end{lemma}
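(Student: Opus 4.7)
The plan is to treat this as a direct substitution exercise: since $\gamma$ is defined by $\gamma = 2\pi - \alpha - \beta_\Lt - \beta_\Rt$, we can solve for $\alpha = 2\pi - \beta_\Lt - \beta_\Rt - \gamma$ and replace every occurrence of $\alpha$ in conditions (i)--(iii) with this expression, then simplify and compare with (i$'$)--(iii$'$). Because $\gamma$ is a linear function of $\alpha, \beta_\Lt, \beta_\Rt$, each inequality will transform into another linear inequality, so the equivalence will be visible by inspection after a one-line simplification.

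Concretely, I would first dispose of the easy cases (ii) and (iii): substituting $\alpha + \beta_\Lt + \beta_\Rt = 2\pi - \gamma$ immediately turns (ii) into $2\pi - \gamma < 2\pi$, i.e.\ $\gamma > 0$, which is (ii$'$), and turns (iii) into $2\pi - \gamma > \pi$, i.e.\ $\gamma < \pi$, which is (iii$'$). These are genuine equivalences (not just implications) because the substitution is an invertible linear change of variables.

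Next I would handle the three parts of (i). For the first, $\alpha < \beta_\Lt + \beta_\Rt$, the substitution gives $2\pi - \beta_\Lt - \beta_\Rt - \gamma < \beta_\Lt + \beta_\Rt$, which rearranges to $\beta_\Lt + \beta_\Rt + \gamma/2 > \pi$, matching the first inequality of (i$'$). For the second, $\beta_\Lt < \alpha + \beta_\Rt$, the same substitution yields $\beta_\Lt < 2\pi - \beta_\Lt - \gamma$, i.e.\ $\beta_\Lt + \gamma/2 < \pi$; and the third inequality is symmetric in $\Lt \leftrightarrow \Rt$, giving $\beta_\Rt + \gamma/2 < \pi$. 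Thus (i) $\Leftrightarrow$ (i$'$).

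There is really no obstacle here: the argument is mechanical, and the author himself flags the lemma as clear by a straightforward calculation. The only thing worth being careful about is noting that the substitution is invertible (so each implication is an equivalence rather than just a one-way derivation), which is automatic since $\gamma$ and $\alpha$ determine each other once $\beta_\Lt, \beta_\Rt$ are fixed.
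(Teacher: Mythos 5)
Your proposal is correct and is exactly the "straightforward calculation" the paper alludes to (the paper gives no written proof beyond that remark): substituting $\alpha+\beta_\Lt+\beta_\Rt=2\pi-\gamma$ turns (ii), (iii) and the three inequalities of (i) into (ii$'$), (iii$'$) and the three inequalities of (i$'$) respectively, with each step reversible. No gaps.
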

\begin{remark}\rm
Let $P$ be the intersection point of the extensions of $m_\Lt$ and $m_\Rt$. 
Then $P$ is the circumcenter of $\triangle AB_\Lt B_\Rt$, so that $\angle AB_\sigma P=\gamma /2$. 
If $\pi -\beta_\sigma \leqslant\gamma /2$, then segment $AD_\sigma$ intersects (an extension of) $m_{\sigma'}$ before it intersects $m_\sigma$,
where $\sigma'$ denotes the other side of $\sigma$. 
However, this is impossible. 
Indeed, if $\pi -\beta_\sigma\leqslant\gamma /2$, then we have
\begin{equation*}
\pi =2\pi -\pi \geqslant2\pi -\left(\beta_\sigma +\frac{\gamma}{2}\right) =\alpha +\beta_{\sigma'}+\frac{\gamma}{2},
\end{equation*}
and thus
\begin{equation*}
\alpha +\beta_{\sigma'}\leqslant\pi -\frac{\gamma}{2}\leqslant\beta_\sigma ,
\end{equation*}
which does not satisfy condition $\mathrm{(i)}$ of Construction $\ref{const:conv}$.
\end{remark}
\section{Constructing the crease patterns of the new $3$D gadgets}\label{sec:3}
To construct the crease pattern of our new $3$D gadget, we begin with a development as in Figure $\ref{fig:development_new}$, 
which is similar to Figure $\ref{fig:development_conv}$ but different 
in that we allow changes of angles $\delta_\sigma\geqslant 0$ of the outgoing pleats for $\sigma =\Lt ,\Rt$.
In our example, we are taking $\delta_\Lt >0$ and $\delta_\Rt=0$ for instruction.
\begin{figure}[htbp]
  \begin{center}
\addtocounter{theorem}{1}
          \includegraphics[width=0.75\hsize]{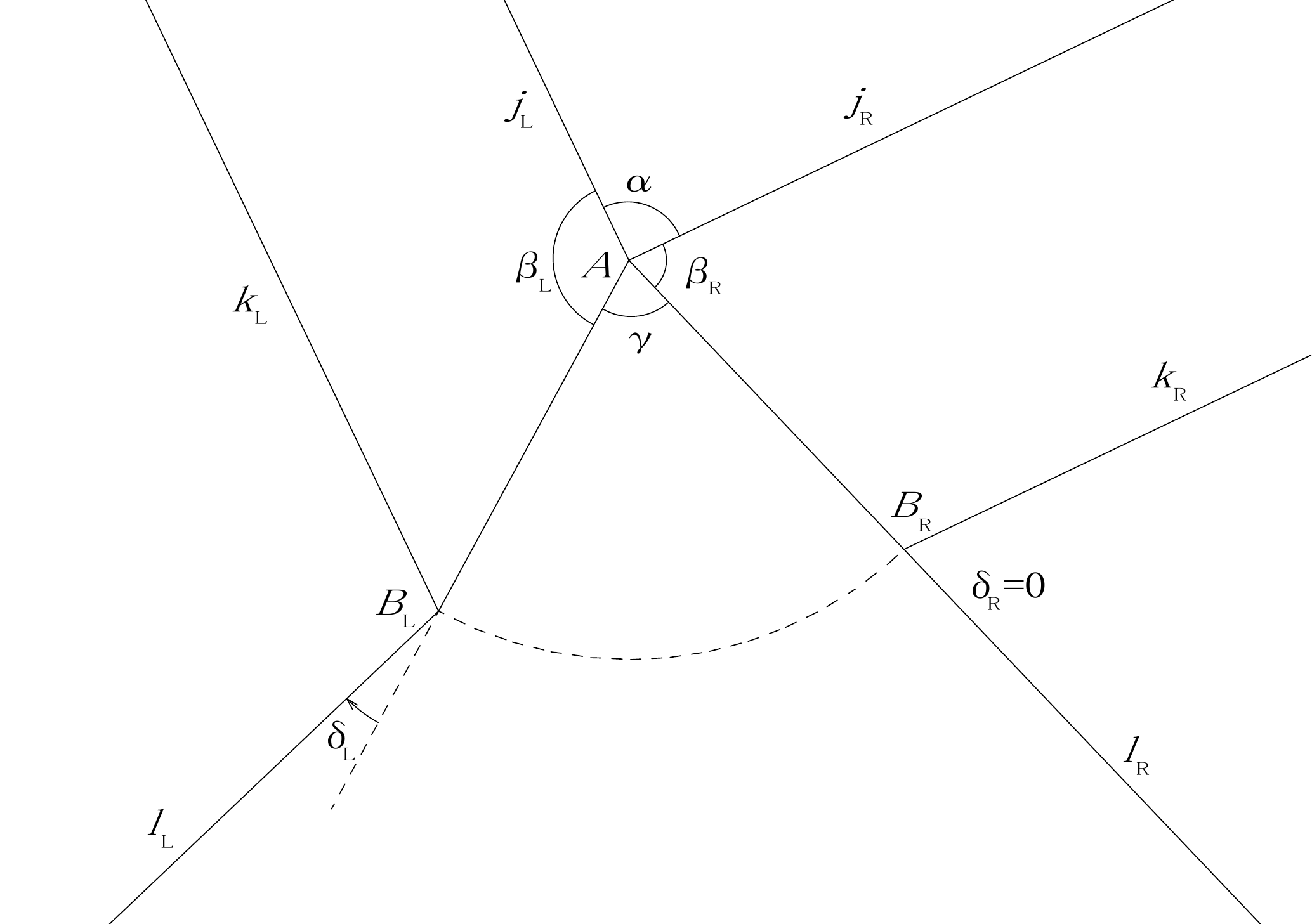}
    \caption{Development to which we apply a new gadget}
    \label{fig:development_new}
\end{center}
\end{figure}
\begin{construction}\label{const:new}\rm
Consider a development as shown in Figure $\ref{fig:development_new}$, for which we require the following conditions.
\begin{enumerate}[(i)]
\item $\alpha <\beta_\Lt+ \beta_\Rt$, $\beta_\Lt <\alpha+ \beta_\Rt$ and $\beta_\Rt <\alpha+ \beta_\Lt$.
\item $\alpha +\beta_\Lt +\beta_\Rt <2\pi$.
\item[(iii.a)] $\delta_\Lt ,\delta_\Rt\geqslant 0$, where we take clockwise (resp. counterclockwise) direction as positive for $\sigma =\Lt$ (resp. $\sigma=\Rt$).
\item[(iii.b)] $\delta_\sigma <\pi /2$ for $\sigma =\Lt ,\Rt$. 
\item[(iii.c)] $\alpha +\beta_\Lt +\beta_\Rt -\delta_\Lt -\delta_\Rt >\pi$, or equivalently, $\gamma +\delta_\Lt +\delta_\Rt <\pi$.
\setcounter{enumi}{3}
\item $\beta_\sigma+\gamma_\sigma /2\geqslant\pi /2$ for $\sigma =\Lt ,\Rt$, 
where we exclude the simultaneous equalities $\beta_\Lt +\gamma_\Lt /2 =\beta_\Rt +\gamma_\Rt /2=\pi /2$, and $\gamma_\Lt ,\gamma_\Rt$ are determined by
\begin{equation}\label{tan_gamma_LR}
\begin{aligned}
\tan\gamma_\Lt &=
\frac{1-\cos\gamma +\sin\gamma\tan\delta_\Rt}{\sin\gamma +\cos\gamma\tan\delta_\Rt +\tan\delta_\Lt},\\
\tan\gamma_\Rt &=
\frac{1-\cos\gamma +\sin\gamma\tan\delta_\Lt}{\sin\gamma +\cos\gamma\tan\delta_\Lt +\tan\delta_\Rt}.
\end{aligned}
\end{equation}
\end{enumerate}
Validity of these conditions are discussed in Section $\ref{sec:4}$. 
The first inequality of condition $\mathrm{(i)}$ is derived from condition $\mathrm{(iv)}$ by Lemma $\ref{lem:condition_gamma}$, and thus abundant.
However, we will leave condition $\mathrm{(i)}$ as it is to compare the conditions needed in Constructions $\ref{const:conv}$ and $\ref{const:new}$.
The crease pattern of our new $3$D gadget is constructed as follows, where all procedures are done for both $\sigma =\Lt ,\Rt$.
\begin{enumerate}
\item Draw a perpendicular to $\ell_\sigma$ through $B_\sigma$ for both $\sigma =\Lt ,\Rt$, letting $C$ be the intersection point. 
\item Let $D$ be the intersection point of segment $AC$ and the circular arc $B_\Lt B_\Rt$ with center $A$.
\item Let $E_\sigma$ be the circumcenter of $\triangle B_\sigma CD$. 
Also, let $m_\sigma$ be a ray parallel to and going in the same direction as $\ell_\sigma$, starting from $E_\sigma$. 
Thus $m_\sigma$ is a perpendicular bisector to segment $B_\sigma C$ and $AE_\sigma$ bisects $\angle B_\sigma AC$.
\item Let $F$ be the intersection point of segments $AC$ and $E_\Lt E_\Rt$.
\item Determine a point $G_\sigma$ on segment $A E_\sigma$ so that $\angle A B_\sigma G_\sigma=\pi -\beta_\sigma$.
\item If $\delta_\sigma>0$, then determine a point $H_\sigma$ on segment $A E_\sigma$ so that $\angle E_\sigma B_\sigma H_\sigma =\delta_\sigma$.
\item The crease pattern is shown as the solid lines in Figure $\ref{fig:gadget_new}$, 
and the assignment of mountain folds and valley folds is given in Table $\ref{tbl:assignment_new}$, 
where we have two ways of assigning mountain folds and valley folds if both $\delta_\sigma >0$ and $\beta_\sigma +\gamma_\sigma >\pi /2$ hold.
\end{enumerate}
\end{construction}
The location of point $H_\sigma$ in Construction $\ref{const:new}$, $(7)$ is derived from the foldability condition around point $B_\sigma$,
which we will check in the next section.

\begin{figure}[htbp]
  \begin{center}
\addtocounter{theorem}{1}
          \includegraphics[width=0.75\hsize]{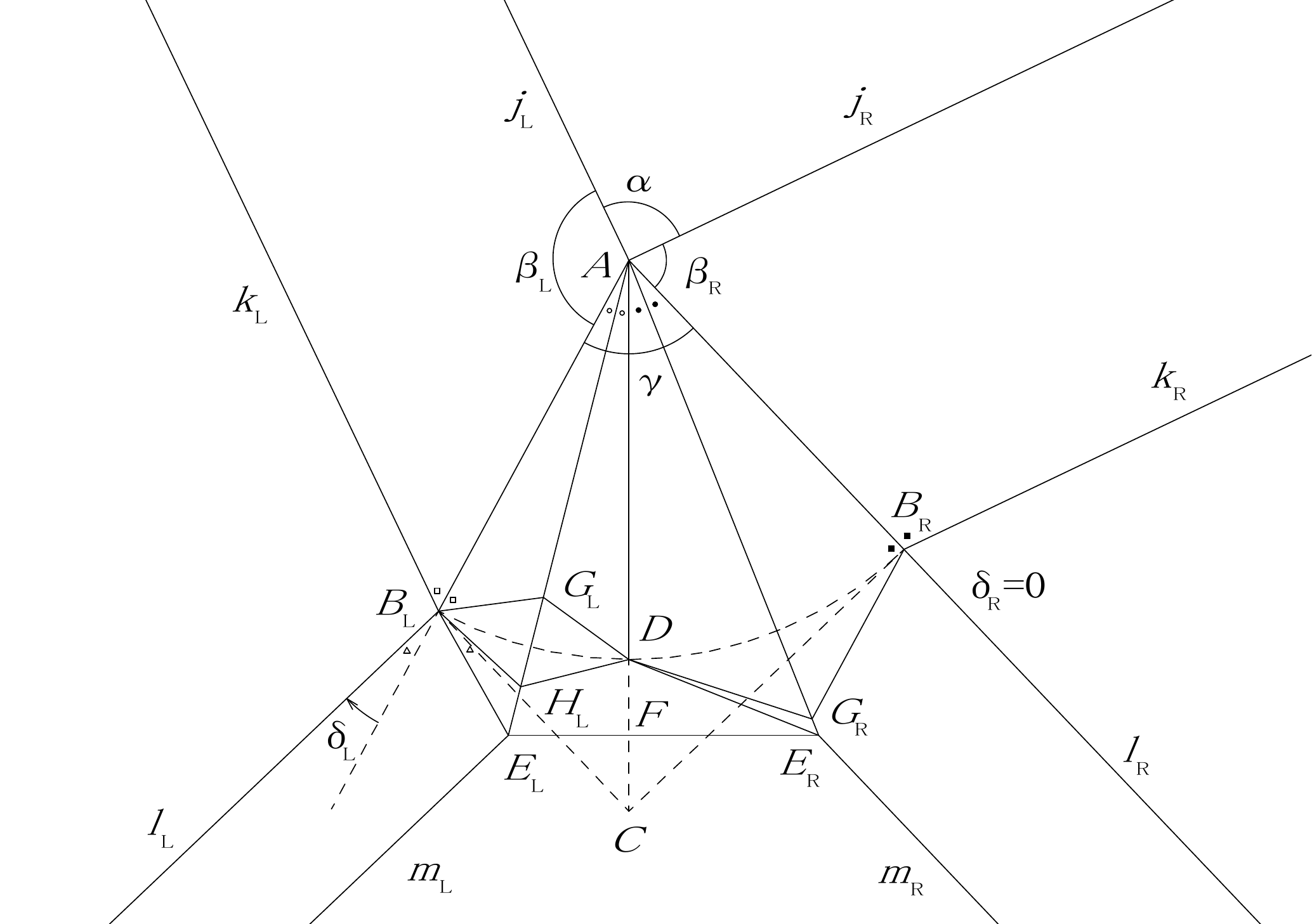}
    \caption{CP of the new gadget}
    \label{fig:gadget_new}
\end{center}
\end{figure}
\addtocounter{theorem}{1}
\begin{table}[h]
\begin{tabular}{c|c|c|c:c|c|c}
&common&\multicolumn{3}{c|}{$\beta_\sigma +\gamma_\sigma /2>\pi /2$ and}&\multicolumn{2}{c}{$\beta_\sigma +\gamma_\sigma /2=\pi /2$ and}\\
\cline{3-7}
&creases&$\delta_\sigma=0$&\multicolumn{2}{c|}{$\delta_\sigma>0$}&$\delta_\sigma=0$&$\delta_\sigma>0$\\
\hline
mountain&$j_\sigma ,\ell_\sigma ,$&$B_\sigma G_\sigma ,DE_\sigma$&\multicolumn{2}{c|}{$DH_\sigma$}
&$B_\sigma E_\sigma =B_\sigma G_\sigma$&$B_\sigma E_\sigma ,$\\ \cline{4-5}
folds&$AB_\sigma ,AD$&&$B_\sigma E_\sigma ,$&$B_\sigma H_\sigma ,$&&$DG_\sigma =DH_\sigma$\\
&&&$B_\sigma G_\sigma$&$E_\sigma H_\sigma$&&\\
\hline
valley&$k_\sigma ,m_\sigma ,$&$AE_\sigma ,DG_\sigma$&\multicolumn{2}{c|}{$DG_\sigma$}&$AE_\sigma ,$&$AE_\sigma ,$\\ \cline{4-5}
folds&$E_\Lt E_\Rt$&&$AE_\sigma ,$&$AH_\sigma ,$&$DE_\sigma =DG_\sigma$&$B_\sigma G_\sigma =B_\sigma H_\sigma$\\
&&&$B_\sigma H_\sigma$&$B_\sigma G_\sigma$&&
\end{tabular}\vspace{0.5cm}
\caption{Assignment of mountain folds and valley folds to the new gadget}
\label{tbl:assignment_new}
\end{table}
\section{Constructibility and foldability of the crease pattern}\label{sec:4}
In this section we discuss the validity of conditions $\mathrm{(i)}$--$\mathrm{(iv)}$ given in Construction $\ref{const:new}$ 
to ensure that the construction is possible.
Also, we check the foldability of the resulting crease pattern. 

Conditions $\mathrm{(i)}$ and $\mathrm{(ii)}$ are the same as in Construction $\ref{const:conv}$.
For condition $\mathrm{(iii.a)}$, we suppose the converse, i.e., $\delta_\sigma<0$.
Then $\angle k_\sigma B_\sigma \ell_\sigma=\pi -\beta_\sigma -\delta_\sigma$ is greater than 
$\angle \ell_\sigma B_\sigma G_\sigma =\pi -\beta_\sigma +\delta_\sigma$. 
Since no crease can be added inside $\angle k_\sigma B_\sigma \ell_\sigma$,
we cannot make the flat-foldablility hold around $B_\sigma$ however we add creases inside $\angle \ell_\sigma B_\sigma G_\sigma$.
Condition $\mathrm{(iii.b)}$ is necessary for $\tan\gamma_\sigma >0$ to be well-defined.
Thanks to condition $\mathrm{(iii.c)}$, we can find a point $C$ inside $\angle B_\Lt A B_\Rt =\gamma$ in Construction $\ref{const:new}$.
If $\delta_\Lt =\delta_\Rt =0$, then conditions $\mathrm{(iii.a)}$--$\mathrm{(iii.c)}$ together 
are equivalent to condition $\mathrm{(iii)}$ of Construction $\ref{const:conv}$. 
To consider condition $\mathrm{(iv)}$, we derive $\eqref{tan_gamma_LR}$.
\begin{lemma}\label{lem:tan_gamma_sigma}
The angles $\gamma_\Lt$ and $\gamma_\Rt$ dividing $\gamma$ are determined by
\begin{equation*}
\begin{aligned}
\tan\gamma_\Lt &=
\frac{1-\cos\gamma +\sin\gamma\tan\delta_\Rt}{\sin\gamma +\cos\gamma\tan\delta_\Rt +\tan\delta_\Lt},\\
\tan\gamma_\Rt &=
\frac{1-\cos\gamma +\sin\gamma\tan\delta_\Lt}{\sin\gamma +\cos\gamma\tan\delta_\Lt +\tan\delta_\Rt}.
\end{aligned}
\end{equation*}
In particular, we have 
\begin{equation*}
\gamma_\Lt =\gamma_\Rt=\frac{\gamma}{2}\quad\text{if }\delta_\Lt =\delta_\Rt,
\end{equation*}
which incudes the case of $\delta_\Lt =\delta_\Rt=0$.
\end{lemma}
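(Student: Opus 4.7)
The plan is to reduce the computation of $\gamma_\Lt$ and $\gamma_\Rt$ to trigonometry in the two triangles $\triangle AB_\Lt C$ and $\triangle AB_\Rt C$, bridged by their common side $AC$.

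First I would pin down the conventional case $\delta_\Lt=\delta_\Rt=0$. By the Remark following Construction~$\ref{const:conv}$, the circumcenter $P$ of $\triangle AB_\Lt B_\Rt$ lies on both perpendicular bisectors $m_\sigma$ of $B_\sigma C$, so $|PC|=|PA|=|PB_\sigma|$ and $C$ lies on the circumcircle of $\triangle AB_\Lt B_\Rt$. Placing $A$ at the origin with $|AB_\Lt|=|AB_\Rt|=1$ and $AB_\Rt$ on the positive real axis, a short coordinate check identifies the conventional $C$ with $\sec(\gamma/2)e^{i\gamma/2}=1+i\tan(\gamma/2)$, precisely the antipode of $A$ on this circle. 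Hence $AC$ is a diameter and Thales's theorem gives $\angle AB_\sigma C=\pi/2$; equivalently, $\ell_\sigma\parallel AB_\sigma$ in the conventional case.

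In the new gadget the line $\ell_\sigma$ is rotated by $\delta_\sigma$ from its conventional direction (clockwise for $\sigma=\Lt$ and counterclockwise for $\sigma=\Rt$ by~$\mathrm{(iii.a)}$), so the perpendicular $B_\sigma C$ rotates by the same amount, turning the conventional right angle into $\angle AB_\sigma C=\pi/2+\delta_\sigma$. The three angles of $\triangle AB_\sigma C$ are thus $\gamma_\sigma$, $\pi/2+\delta_\sigma$, and $\pi/2-\gamma_\sigma-\delta_\sigma$, and with $|AB_\sigma|=1$ the law of sines yields
\[
|AC|=\frac{\cos\delta_\sigma}{\cos(\gamma_\sigma+\delta_\sigma)}\qquad(\sigma=\Lt,\Rt).
\]
Equating the two expressions for $|AC|$ and substituting $\gamma_\Lt=\gamma-\gamma_\Rt$ produces the single equation
\[
\cos\delta_\Rt\cos(\gamma-\gamma_\Rt+\delta_\Lt)=\cos\delta_\Lt\cos(\gamma_\Rt+\delta_\Rt).
\]
Expanding by the addition formulas, collecting $\sin\gamma_\Rt$ and $\cos\gamma_\Rt$ terms, and then dividing both numerator and denominator of the resulting formula for $\tan\gamma_\Rt$ by $\cos\delta_\Lt\cos\delta_\Rt$ --- via the identities $\bigl(\cos\delta_\Lt-\cos(\gamma+\delta_\Lt)\bigr)/\cos\delta_\Lt=1-\cos\gamma+\sin\gamma\tan\delta_\Lt$ and $\sin(\gamma+\delta_\Lt)/\cos\delta_\Lt=\sin\gamma+\cos\gamma\tan\delta_\Lt$ --- yields the stated formula. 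The formula for $\tan\gamma_\Lt$ follows by the $\Lt\leftrightarrow\Rt$ swap, and the case $\delta_\Lt=\delta_\Rt$ reduces both numerator and denominator so that $\tan\gamma_\Rt=(1-\cos\gamma)/\sin\gamma=\tan(\gamma/2)$, whence $\gamma_\Lt=\gamma_\Rt=\gamma/2$.

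The only delicate step is the first paragraph: the identification of the conventional $C$ as the antipode of $A$ on the circumcircle of $\triangle AB_\Lt B_\Rt$, so that $\angle AB_\sigma C=\pi/2$. This is precisely what forces $\gamma_\sigma$ to depend only on $\gamma$ and the $\delta$'s, independently of $\alpha,\beta_\Lt,\beta_\Rt$; everything afterwards is routine trigonometry with careful bookkeeping of the sign of each $\delta_\sigma$.
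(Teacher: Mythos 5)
Your argument is correct and lands on the same key geometric fact as the paper, but it is organized along a different route. The paper's proof is coordinate/linear-algebraic: it writes $\ora{AC}=\ora{AB_\sigma}+p_\sigma\ora{n_\sigma}$ with $\ora{n_\Lt}=(\sin\delta_\Lt,\cos\delta_\Lt)$ and $\ora{n_\Rt}=(\sin(\gamma+\delta_\Rt),-\cos(\gamma+\delta_\Rt))$, solves the $2\times 2$ system for $(p_\Lt,p_\Rt)$, obtains the coordinates of $C$, and reads off $\tan\gamma_\Lt=y/x$; as a by-product it also gets $\norm{B_\sigma C}=p_\sigma$, which is reused in Proposition~\ref{prop:calc_length_new}. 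You instead argue angle-wise: the sine rule in $\triangle AB_\sigma C$ gives $\norm{AC}=\cos\delta_\sigma/\cos(\gamma_\sigma+\delta_\sigma)$ for each $\sigma$ (this is exactly the computation the paper performs later, in the proof of Proposition~\ref{prop:calc_length_new}), and equating the two expressions with $\gamma_\Lt+\gamma_\Rt=\gamma$ gives a relation linear in $\sin\gamma_\Rt,\cos\gamma_\Rt$, which you solve for $\tan\gamma_\Rt$. The simplifications you quote are right, the coefficient you divide by, $\cos\delta_\Rt\sin(\gamma+\delta_\Lt)+\cos\delta_\Lt\sin\delta_\Rt$, is positive under conditions $\mathrm{(iii.a)}$--$\mathrm{(iii.c)}$ together with $\gamma>0$, and the common factor $\sin(\gamma/2)+\cos(\gamma/2)\tan\delta$ indeed cancels in the equal-$\delta$ case. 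So the elimination-by-law-of-sines route is a legitimate and slightly more geometric variant of the paper's explicit computation of $C$.

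The one shaky spot is your first paragraph. The fact you actually need, $\angle AB_\sigma C=\pi/2+\delta_\sigma$, is just the angle convention of Figure~\ref{fig:development_new} and condition $\mathrm{(iii.a)}$: $\delta_\sigma$ measures the rotation of $\ell_\sigma$ away from the line $AB_\sigma$ produced beyond $B_\sigma$, so $B_\sigma C\perp\ell_\sigma$ makes angle $\pi/2+\delta_\sigma$ with $B_\sigma A$ --- precisely what the paper encodes in its choice of $\ora{n_\sigma}$. Your detour through the unlabeled Remark at the end of Section~\ref{sec:2} and Thales is close to circular: the Remark's claim that $m_\Lt\cap m_\Rt$ is the circumcenter of $\triangle AB_\Lt B_\Rt$ is itself a consequence of $\angle AB_\sigma C=\pi/2$, and your ``short coordinate check'' locating the conventional $C$ at $1+i\tan(\gamma/2)$ already presupposes that the conventional $\ell_\sigma$ is collinear with $AB_\sigma$, which is the very statement you are trying to extract. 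Nothing downstream is damaged, since what is needed is a definitional convention rather than a theorem, but that paragraph should be replaced by a direct appeal to the development's definition of $\delta_\sigma$.
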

\begin{proof}
We may assume $\norm{AB}=1$ without loss of generality.
For $\sigma =\Lt ,\Rt$, let $\ora{u_\sigma}=\ora{A B_\sigma}$, and
let $\ora{n_\sigma}$ be the unit direction vector of $\ora{B_\sigma C}$, which is normal to ray $\ell_\sigma$.
Also, let $\ora{v}=\ora{AC}$. 
Then $\ora{v}$ is written as
\begin{equation}\label{v_pq}
\ora{v}=\ora{u_\Lt}+p_\Lt\ora{n_\Lt}=\ora{u_\Rt}+p_\Rt\ora{n_\Rt}
\end{equation}
for some $p_\Lt ,p_\Rt >0$.
Taking $\ora{u_\Lt}$ be the unit direction vector of the $x$-axis, we have
\begin{equation}\label{component_un}
\begin{aligned}
\ora{u_\Lt}&=(1,0),&\ora{n_\Lt}&=(\sin\delta_\Lt ,\cos\delta_\Lt ),\\
\ora{u_\Rt}&=(\cos\gamma ,\sin\gamma ),&\ora{n_\Rt}&=(\sin (\gamma +\delta_\Rt ),-\cos (\gamma +\delta_\Rt )).
\end{aligned}
\end{equation}
Setting $\ora{v}=(x,y)$ and substituting $\eqref{component_un}$ into $\eqref{v_pq}$ gives that
\begin{equation*}
(p_\Lt ,p_\Rt )M=(\cos\gamma -1,\sin\gamma ),\quad\text{where }M=
\xymatrix{
{\begin{pmatrix}\sin\delta_\Lt&\cos\delta_\Lt\\
-\sin (\gamma +\delta_\Rt )&\cos (\gamma +\delta_\Rt )\end{pmatrix}}
}.
\end{equation*}
Thus we have
\begin{equation}\label{p_sigma}
\begin{aligned}
(p_\Lt ,p_\Rt )&=(\cos\gamma -1,\sin\gamma )M^{-1}\\
&=\frac{1}{\sin (\gamma +\delta_\Lt +\delta_\Rt )}
(\cos\delta_\Rt -\cos (\gamma +\delta_\Rt ),\cos\delta_\Lt -\cos (\gamma +\delta_\Lt )),\quad\text{where}\\
M^{-1}&=\frac{1}{\sin (\gamma +\delta_\Lt +\delta_\Rt )}
\xymatrix{
{\begin{pmatrix}\cos (\gamma +\delta_\Rt )&-\cos\delta_\Lt\\
\sin (\gamma +\delta_\Rt )&\sin\delta_\Lt\end{pmatrix}}
}.
\end{aligned}
\end{equation}
Substituting $\eqref{component_un}$ and $\eqref{p_sigma}$ into $\eqref{v_pq}$ gives that
\begin{align*}
x&=1+\frac{\sin\delta_\Lt}{\sin (\gamma +\delta_\Lt +\delta_\Rt )}(\cos\delta_\Rt -\cos (\gamma +\delta_\Rt )),\\
y&=\frac{\cos\delta_\Lt}{\sin (\gamma +\delta_\Lt +\delta_\Rt )}(\cos\delta_\Rt -\cos (\gamma +\delta_\Rt )),
\end{align*}
from which we deduce that
\begin{equation}\label{expr:tan_gamma_L}
\begin{aligned}
\tan\gamma_\Lt &=\frac{y}{x}=
\frac{\cos\delta_\Lt (\cos\delta_\Rt -\cos (\gamma +\delta_\Rt ))}
{\sin (\gamma +\delta_\Lt +\delta_\Rt )+\sin\delta_\Lt (\cos\delta_\Rt -\cos (\gamma +\delta_\Rt ))}\\
&=\frac{\cos\delta_\Lt (\cos\delta_\Rt -\cos (\gamma +\delta_\Rt ))}
{\cos\delta_\Lt\sin (\gamma +\delta_\Rt ) +\sin\delta_\Lt\cos\delta_\Rt}\\
&=\frac{\cos\delta_\Lt (\cos\delta_\Rt -\cos\gamma\cos\delta_\Rt +\sin\gamma\sin\delta_\Rt )}
{\sin\gamma\cos\delta_\Lt\cos\delta_\Rt +\cos\gamma\cos\delta_\Lt\sin\delta_\Rt +\sin\delta_\Lt\cos\delta_\Rt}\\
&=\frac{1-\cos\gamma +\sin\gamma\tan\delta_\Rt}{\sin\gamma +\cos\gamma\tan\delta_\Rt +\tan\delta_\Lt}.
\end{aligned}
\end{equation}
Also, interchanging $\delta_\Lt$ and $\delta_\Rt$ in $\eqref{expr:tan_gamma_L}$ gives an expression for $\tan\gamma_\Rt$.
The latter part of the lemma is trivial. 
\end{proof}
\begin{lemma}
In Construction $\ref{const:new}$, we have 
\begin{equation*}
\gamma_\sigma <\frac{\pi}{2},\quad\beta_\sigma >\delta_\sigma\quad\text{for }\sigma =\Lt ,\Rt .
\end{equation*}
Thus $\ell_\sigma$ never overlaps with $k_\sigma$.
\end{lemma}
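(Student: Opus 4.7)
The plan is to extract the single inequality $\gamma_\sigma+\delta_\sigma<\pi/2$ from an angle-sum argument in the auxiliary triangle $AB_\sigma C$ of Construction~\ref{const:new}, and then to chain this with condition~(iv) of that construction to obtain $\beta_\sigma>\delta_\sigma$.

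I would first analyse the triangle $AB_\sigma C$ (step~(1) of Construction~\ref{const:new}). Its angle at $A$ is $\gamma_\sigma$ by definition, and I claim that its angle at $B_\sigma$ equals $\pi/2+\delta_\sigma$. To verify this for $\sigma=\Lt$ (the case $\sigma=\Rt$ is symmetric), I would reuse the coordinates from the proof of Lemma~\ref{lem:tan_gamma_sigma}: there $\ora{B_\Lt A}=-\ora{u_\Lt}=(-1,0)$ and $\ora{B_\Lt C}=p_\Lt\ora{n_\Lt}$ with $\ora{n_\Lt}=(\sin\delta_\Lt,\cos\delta_\Lt)$, so a single dot-product computation gives $\cos\angle AB_\Lt C=-\sin\delta_\Lt=\cos(\pi/2+\delta_\Lt)$. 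The proof of Lemma~\ref{lem:tan_gamma_sigma} also shows $p_\sigma>0$ under conditions~(iii.a)--(iii.c), so $C\neq B_\sigma$ and the triangle is non-degenerate; its third angle must therefore be strictly positive, which yields $\pi/2-\gamma_\sigma-\delta_\sigma>0$, i.e.\ $\gamma_\sigma+\delta_\sigma<\pi/2$. Combined with $\delta_\sigma\ge 0$ from~(iii.a), this immediately gives the first assertion $\gamma_\sigma<\pi/2$.

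For $\beta_\sigma>\delta_\sigma$, I would invoke condition~(iv) of Construction~\ref{const:new} in the form $\beta_\sigma\ge\pi/2-\gamma_\sigma/2$ and chain it with $\gamma_\sigma<\pi/2-\delta_\sigma$ just proved:
\[
\beta_\sigma\ge\frac{\pi}{2}-\frac{\gamma_\sigma}{2}>\frac{\pi}{2}-\frac{1}{2}\!\left(\frac{\pi}{2}-\delta_\sigma\right)=\frac{\pi}{4}+\frac{\delta_\sigma}{2}>\delta_\sigma,
\]
where the final inequality uses $\delta_\sigma<\pi/2$ from condition~(iii.b). The concluding sentence of the lemma is then read off from the development at $B_\sigma$: $k_\sigma$ makes angle $\pi-\beta_\sigma$ with $\ora{B_\sigma A}$ while the direction of $\ell_\sigma$ is pinned down by $\delta_\sigma$, so $\ell_\sigma$ could coincide with $k_\sigma$ only if $\beta_\sigma=\delta_\sigma$, which is excluded by the inequality just proved.

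The main delicate point is the angular identification $\angle AB_\sigma C=\pi/2+\delta_\sigma$: it depends on the sign conventions laid out in~(iii.a) (clockwise positive for $\sigma=\Lt$, counterclockwise positive for $\sigma=\Rt$), which must be implemented symmetrically so that the same formula applies on both sides. Once this is secured, the rest is a short inequality chase.
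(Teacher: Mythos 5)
Your proposal is correct, and it rests on the same two geometric facts the paper uses: that $\angle AB_\sigma C=\pi/2+\delta_\sigma$ (equivalently, that $A$ lies a right angle plus $\delta_\sigma$ away from $\ora{B_\sigma C}$, on the $\ell_\sigma$ side) and condition (iv) in the form $\beta_\sigma\geqslant\pi/2-\gamma_\sigma/2$. Where you diverge is in execution. The paper works in coordinates adapted to $\ell_\Rt$ ($B_\Rt=(0,0)$, $C=(0,-1)$, $A=(-a,b)$), gets $\gamma_\sigma<\pi/2$ from the location of $A$, and then proves $\beta_\sigma>\delta_\sigma$ by introducing the bisector point $P$ of $\angle B_\sigma AC$ on $B_\sigma C$ and describing the direction of $k_\sigma$ as a rotation of $\ora{AP}$ by $\pi/2+\phi_\sigma$, which yields the identity $\beta_\sigma-\delta_\sigma=\tan^{-1}((p+b)/a)+\phi_\sigma>0$. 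You instead read the angle sum of the (nondegenerate) triangle $AB_\sigma C$ to get $\gamma_\sigma+\delta_\sigma<\pi/2$ and then chain $\beta_\sigma\geqslant\pi/2-\gamma_\sigma/2>\pi/4+\delta_\sigma/2>\delta_\sigma$ using (iii.b); in substance this is the same decomposition as the paper's (note $\tan^{-1}((p+b)/a)=\pi/2-\gamma_\sigma/2-\delta_\sigma$), but you establish positivity synthetically rather than by the coordinate/rotation computation, and as a bonus you make explicit the sharper bound $\gamma_\sigma+\delta_\sigma<\pi/2$, which the paper needs implicitly later (e.g.\ $1-\tan\gamma_\sigma\tan\delta_\sigma>0$ in the formula for $\norm{AC}$). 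Two small points to tidy: nondegeneracy of $\triangle AB_\sigma C$ follows not only from $p_\sigma>0$ but also from the fact that $\angle AB_\sigma C=\pi/2+\delta_\sigma$ is neither $0$ nor $\pi$ (so $A$, $B_\sigma$, $C$ are not collinear and the angle at $C$ is strictly positive); and for the final sentence it is worth stating the same-side fact explicitly, namely $\angle AB_\sigma\ell_\sigma=\pi-\delta_\sigma$ and $\angle AB_\sigma k_\sigma=\pi-\beta_\sigma$ measured on the same side of $AB_\sigma$ (equivalently $\angle k_\sigma B_\sigma\ell_\sigma=\beta_\sigma-\delta_\sigma$, as recorded in the foldability check of Section 4), so overlap would force $\beta_\sigma=\delta_\sigma$.
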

\begin{proof}
We may assume $\sigma =\Rt$, $B_\Rt =(0,0), C=(0,-1)$ and $\ell_\Rt =\set{y=0,x\leqslant 0}$ without loss of generality.
Then $A$ must lie in the range $\set{x<0,y\geqslant 0}$, which gives that $0<\gamma_\Rt =\angle B_\Rt AC<\pi /2$.
Let $A=(-a,b)$ with $a\geqslant 0,b>0$ and let $P=(0,-p)$ with $0<p<1$ be a point on segment $B_\Rt C$ such that $AP$ bisects $\angle B_\Rt AC$. 
If we set $\phi_\Rt =\beta_\Rt +\gamma_\Rt /2 -\pi /2$, then $\phi_\Rt\geqslant 0$ by condition $\mathrm{(iv)}$ 
and the direction vector of $k_\Rt$ is given by rotating $\ora{AP}=(a,-p-b)$ by angle $\pi /2 +\phi_\Rt$, and so by rotating $(p+b,a)$ by angle $\phi_\Rt$. 
Hence $\ell_\Rt$ never overlaps with $k_\Rt$ and we have $\beta_\Rt -\delta_\Rt =\tan^{-1}((p+b)/a)+\phi_\Rt >0$.
\end{proof}
Note that we have $\beta_\sigma -\delta_\sigma\to +0$ as $b,\phi_\sigma\to +0$ in the above proof.
We may also include $\delta_\sigma <\beta_\sigma$ in condition $\mathrm{(iii.b)}$ from the beginning.
\begin{proposition}
For $\sigma =\Lt ,\Rt$ we have
\begin{equation*}
\angle C B_\sigma E_\sigma =\frac{\gamma_\sigma}{2},
\end{equation*}
and thus
\begin{equation*}
\angle A B_\sigma E_\sigma =\frac{\pi}{2}+\frac{\gamma_\sigma}{2}+\delta_\sigma .
\end{equation*}
\end{proposition}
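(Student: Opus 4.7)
The plan is to reduce both angle identities to properties of the isosceles triangles $\triangle AB_\sigma D$ (from the arc construction) and $\triangle E_\sigma B_\sigma C$ (from the circumcenter definition), linked through the inscribed angle theorem.

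First I would record the key consequence of Construction $\ref{const:new}$, (2): since $D$ lies on the circular arc $B_\Lt B_\Rt$ centered at $A$, we have $\norm{AD}=\norm{AB_\sigma}$, so $\triangle AB_\sigma D$ is isosceles. Combined with Lemma $\ref{lem:tan_gamma_sigma}$, which gives $\angle B_\sigma AD=\gamma_\sigma$ (the half-angle on the $\sigma$-side of segment $AC$), the base angles satisfy
\begin{equation*}
\angle AB_\sigma D=\angle ADB_\sigma=\frac{\pi-\gamma_\sigma}{2}.
\end{equation*}
Because $D$ lies on segment $AC$, its supplement gives $\angle CDB_\sigma=(\pi+\gamma_\sigma)/2$.

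Next I would bring in the circumcircle of $\triangle B_\sigma CD$ centered at $E_\sigma$. Since $\angle CDB_\sigma>\pi/2$, the point $D$ lies on the minor arc $B_\sigma C$, so the inscribed angle theorem gives the (non-reflex) central angle
\begin{equation*}
\angle B_\sigma E_\sigma C=2\pi-2\angle CDB_\sigma=\pi-\gamma_\sigma.
\end{equation*}
The triangle $\triangle E_\sigma B_\sigma C$ is isosceles because $\norm{E_\sigma B_\sigma}=\norm{E_\sigma C}$, so
\begin{equation*}
\angle CB_\sigma E_\sigma=\frac{\pi-\angle B_\sigma E_\sigma C}{2}=\frac{\gamma_\sigma}{2},
\end{equation*}
which is the first equation.

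For the second equation, I would read off from the development in Figure $\ref{fig:development_new}$ that $\angle AB_\sigma\ell_\sigma=\delta_\sigma$ (the rotation angle of the outgoing pleat measured from $B_\sigma A$), while by Construction $\ref{const:new}$, (1), $B_\sigma C\perp\ell_\sigma$. Since $E_\sigma$ lies on segment $AE_\sigma$ on the far side of $C$ from the pleat $\ell_\sigma$, the three angles add without overlap:
\begin{equation*}
\angle AB_\sigma E_\sigma=\angle AB_\sigma\ell_\sigma+\angle\ell_\sigma B_\sigma C+\angle CB_\sigma E_\sigma=\delta_\sigma+\frac{\pi}{2}+\frac{\gamma_\sigma}{2}.
\end{equation*}
The main obstacle is the orientation step in the inscribed angle argument: one must verify that $D$ sits on the minor rather than the major arc (equivalently, that $\angle CDB_\sigma$ is obtuse), and similarly that $E_\sigma$ lies on the opposite side of $B_\sigma C$ from $A$ so that the angles sum rather than cancel in the final computation. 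Both follow from $0<\gamma_\sigma<\pi/2$ (the previous lemma) and the fact that $AE_\sigma$ bisects $\angle B_\sigma AC$, placing $E_\sigma$ beyond the chord $B_\sigma C$ relative to $A$.
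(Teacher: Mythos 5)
Your proof of the first identity is correct and takes a genuinely different route from the paper. The paper works with the isosceles triangles $\triangle B_\sigma DE_\sigma$, $\triangle CDE_\sigma$, $\triangle B_\sigma CE_\sigma$ together with the facts that $AE_\sigma$ bisects $\angle B_\sigma AC$ (so it is the perpendicular bisector of $B_\sigma D$) and that $E_\Lt E_\Rt$ is the perpendicular bisector of $CD$, meeting $AC$ at a right angle at $F$; chasing the base angles then yields $\angle CB_\sigma E_\sigma=\gamma_\sigma/2$. You instead compute the base angles of the isosceles triangle $\triangle AB_\sigma D$, pass to $\angle CDB_\sigma=(\pi+\gamma_\sigma)/2$, and apply the inscribed angle theorem to get the central angle $\angle B_\sigma E_\sigma C=\pi-\gamma_\sigma$. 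This is shorter, avoids introducing $F$ and the perpendicularity of $E_\Lt E_\Rt$ to $AC$, and your orientation check (obtuseness of $\angle CDB_\sigma$, which follows from $0<\gamma_\sigma<\pi/2$) is exactly the point that needs care; it moreover hands you for free that $E_\sigma$ lies on the opposite side of the chord $B_\sigma C$ from $D$, and hence from $A$.

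The derivation of the second identity, however, is flawed as written. The angle $\angle AB_\sigma\ell_\sigma$ between ray $B_\sigma A$ and the outgoing ray $\ell_\sigma$ is $\pi-\delta_\sigma$, not $\delta_\sigma$: when $\delta_\sigma=0$ the ray $\ell_\sigma$ is the extension of $AB_\sigma$ beyond $B_\sigma$ (as in the cube gadget), consistent with $\angle AB_\sigma k_\sigma=\pi-\beta_\sigma$ and $\angle k_\sigma B_\sigma\ell_\sigma=\beta_\sigma-\delta_\sigma$ used later in Section 4. Likewise the claimed configuration is wrong: the rays $B_\sigma C$ and $B_\sigma E_\sigma$ lie strictly between ray $B_\sigma A$ and $\ell_\sigma$, and $E_\sigma$ lies on the same side of line $B_\sigma C$ as $\ell_\sigma$ (the side opposite to $A$ and $D$), so the three undirected angles $\angle AB_\sigma\ell_\sigma$, $\angle\ell_\sigma B_\sigma C$, $\angle CB_\sigma E_\sigma$ do not tile $\angle AB_\sigma E_\sigma$; you obtain the correct number only because the misstated value $\delta_\sigma$ compensates for the misstated configuration. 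The repair is short and uses what you already proved: since $B_\sigma C\perp\ell_\sigma$ and the line $\ell_\sigma$ deviates from the line $AB_\sigma$ by $\delta_\sigma$ with $C$ on the gadget side, one has $\angle AB_\sigma C=\pi/2+\delta_\sigma$; and since your obtuseness argument places the circumcenter $E_\sigma$ on the opposite side of line $B_\sigma C$ from $A$, the ray $B_\sigma C$ lies between rays $B_\sigma A$ and $B_\sigma E_\sigma$, so $\angle AB_\sigma E_\sigma=\angle AB_\sigma C+\angle CB_\sigma E_\sigma=\pi/2+\delta_\sigma+\gamma_\sigma/2$. Note also that your closing remark that $E_\sigma$ lies on the opposite side of $B_\sigma C$ from $A$ contradicts the earlier phrase ``on the far side of $C$ from the pleat $\ell_\sigma$''; the former is the correct statement and is the one your argument actually needs.
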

\begin{proof}
Let $\theta_\sigma =\angle C B_\sigma E_\sigma$ and $\eta_\sigma =\angle A E_\sigma F$. 
Since $\triangle B_\sigma D E_\sigma$ and $\triangle C D E_\sigma$ are equilateral, so is $\triangle B_\sigma C E_\sigma$, 
and it follows that $\angle B_\sigma E_\sigma C=2\eta_\sigma$ and $2\theta_\sigma +2\eta_\sigma =\pi$. 
Also, we see from $\angle A F E_\sigma=\pi /2$ that $\gamma_\sigma /2 +\eta_\sigma =\pi /2$. 
Consequently we have $\theta_\sigma =\gamma_\sigma /2$, which completes the proof.
\end{proof}
In our construction, triangles $\triangle A B_\Lt G_\Lt$ and $\triangle A B_\Rt G_\Rt$ overlap on the left and the right face respectively,
and thus in the crease pattern no crease is allowed to pass across these triangles. 
If $\angle A B_\sigma G_\sigma > \angle A B_\sigma E_\sigma$, which is equivalent to 
\begin{equation}\label{cross_1}
\pi -\beta_\sigma >\frac{\pi}{2}+\frac{\gamma_\sigma}{2}+\delta_\sigma ,
\end{equation}
then $m_\sigma$ passes across $\triangle A B_\sigma G_\sigma$.
Also, if $\delta_\sigma>0$ and $\angle A B_\sigma G_\sigma>\angle A B_\sigma H_\sigma$, which is equivalent to
\begin{equation}\label{cross_2}
\pi -\beta_\sigma >\frac{\pi}{2}+\frac{\gamma_\sigma}{2},
\end{equation}
then segment $B_\sigma H_\sigma$ passes across $\triangle A B_\sigma G_\sigma$.
Hence we need the negation of $\eqref{cross_1}$ and $\eqref{cross_2}$, that is,
\begin{equation}\label{cond_iv}
\beta_\sigma +\frac{\gamma_\sigma}{2}\geqslant\frac{\pi}{2}.
\end{equation}
Conversely, $\eqref{cond_iv}$ ensures that point $G_\sigma$ lies on segment $A E_\sigma$ in Construction $\ref{const:new}$, $(5)$ 
and that $H_\sigma$ lies on segment $G_\sigma E_\sigma$ for $\delta_\sigma>0$ in Construction $\ref{const:new}$, $(6)$ for $\sigma =\Lt ,\Rt$.
This is why we need condition $\mathrm{(iv)}$ of Construction $\ref{const:new}$.
Note in particular that $\eqref{cond_iv}$ becomes
\begin{equation*}
\beta_\sigma +\frac{\gamma}{4}\geqslant\frac{\pi}{2}\quad\text{if }\delta_\Lt =\delta_\Rt ,
\end{equation*}
which is independent of $\delta_\sigma$.
\begin{proposition}\label{prop:flat_extrusion}
If $\beta_\sigma +\gamma_\sigma /2=\pi /2$ 
for both $\sigma=\Lt ,\Rt$, then $\alpha =\beta_\Lt +\beta_\Rt$, i.e., the resulting extrusion is flat (of height zero),
which does not satisfy condition $\mathrm{(i)}$ of Construction $\ref{const:new}$.
\end{proposition}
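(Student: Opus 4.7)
The plan is to combine the two hypotheses additively and reduce to a direct consequence of Lemma~\ref{lem:lambda}. Since $C$ lies inside the angular region $\angle B_\Lt A B_\Rt = \gamma$ by condition~$\mathrm{(iii.c)}$, the rays $AB_\Lt$, $AC$, $AB_\Rt$ partition $\gamma$ into $\gamma_\Lt = \angle B_\Lt A C$ and $\gamma_\Rt = \angle B_\Rt A C$, so that $\gamma_\Lt + \gamma_\Rt = \gamma$. This identity is the geometric fact I will use; it is visible from Figure~\ref{fig:gadget_new} and is already implicit in the preceding lemma's statement about ``angles dividing $\gamma$.''

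First I would add the two equalities $\beta_\Lt + \gamma_\Lt/2 = \pi/2$ and $\beta_\Rt + \gamma_\Rt/2 = \pi/2$ to obtain
\begin{equation*}
\beta_\Lt + \beta_\Rt + \frac{\gamma}{2} = \pi.
\end{equation*}
Then I would substitute $\gamma = 2\pi - \alpha - \beta_\Lt - \beta_\Rt$ into this relation. A short manipulation gives $2(\beta_\Lt + \beta_\Rt) = \alpha + \beta_\Lt + \beta_\Rt$, hence $\alpha = \beta_\Lt + \beta_\Rt$, which is the first claim and which immediately violates the first inequality of condition~$\mathrm{(i)}$ of Construction~\ref{const:new}.

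To confirm that the resulting extrusion is flat, I would plug $\alpha = \beta_\Lt + \beta_\Rt$ into the formula~\eqref{lambda} of Lemma~\ref{lem:lambda}. Using $\cos\alpha = \cos\beta_\Lt\cos\beta_\Rt - \sin\beta_\Lt\sin\beta_\Rt$, the numerator in \eqref{lambda} rearranges into the perfect square $(\cos\beta_\Lt\sin\beta_\Rt + \sin\beta_\Lt\cos\beta_\Rt)^2 = \sin^2(\beta_\Lt+\beta_\Rt) = \sin^2\alpha$, so $\lambda(\alpha,\beta_\Lt,\beta_\Rt) = 0$ and the height vanishes. There is no serious obstacle here; the only step demanding care is the bookkeeping identity $\gamma_\Lt + \gamma_\Rt = \gamma$, which must be justified from the construction before the additive step makes sense.
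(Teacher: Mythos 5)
Your proof is correct and is essentially the same elementary angle bookkeeping as the paper's: the paper defines $\alpha_\sigma=\pi-\beta_\sigma-\gamma_\sigma$ and shows $\alpha_\sigma=\beta_\sigma$ on each side, which amounts to exactly your summation of the two hypotheses using $\gamma_\Lt+\gamma_\Rt=\gamma$ and $\gamma=2\pi-\alpha-\beta_\Lt-\beta_\Rt$. Your extra verification via the formula $\eqref{lambda}$ that $\alpha=\beta_\Lt+\beta_\Rt$ forces $\lambda=0$ is a harmless addition the paper leaves implicit.
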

\begin{proof}
Define $\alpha_\sigma$ for $\sigma =\Lt ,\Rt$ by $\alpha_\sigma =\pi -\beta_\sigma -\gamma_\sigma$, so that we have $\alpha =\alpha_\Lt +\alpha_\Rt$.
It is sufficient to prove that if $\beta_\sigma +\gamma_\sigma /2=\pi /2$, 
then $\alpha_\sigma =\beta_\sigma$. 
But from the definition of $\alpha_\sigma$, we easily have
\begin{equation*}
\alpha_\sigma=\pi -\beta_\sigma -\gamma_\sigma=\pi -\beta_\sigma -(\pi -2\beta_\sigma)=\beta_\sigma ,
\end{equation*}
which completes the proof.
\end{proof}
Thus for a non-flat extrusion a special case $E_\sigma=G_\sigma$ for $\delta_\sigma=0$ or $G_\sigma=H_\sigma$ for $\delta_\sigma>0$ 
can occur only on either the left or the right side. 
This is why we exclude the case where $\beta_\sigma +\gamma_\sigma /2=\pi /2$ for both $\sigma =\Lt ,\Rt$ 
in condition $\mathrm{(iv)}$ of Construction $\ref{const:new}$.

Now we shall check the foldability of the crease pattern given in Construction $\ref{const:new}$.
For this purpose, we divide the crease pattern into two parts by polygonal chain $k_\Lt B_\Lt G_\Lt D G_\Rt B_\Rt k_\Rt$.
Then we can see easily that the upper part forms the top and the side faces with flat back sides, where
$\triangle A B_\sigma G_\sigma$ and $\triangle ADG_\sigma$ overlap with the side face $j_\sigma AB_\sigma k_\sigma$ for $\sigma =\Lt ,\Rt$.
Thus it remains to check that the lower part is flat-foldable to form the base of the extrusion. 
We show in Table $\ref{tbl:adjacent_vertex_ray}$ the adjacent vertices of or the rays starting from each vertex in the lower part,
where $\sigma'$ denotes the other side of $\sigma$, and for vertex $D$, we take into account both contributions from $\sigma =\Lt ,\Rt$.
Since vertices $E_\sigma ,H_\sigma$ are interior points of the lower part, we can check the flat-foldability around them
by Kawasaki's theorem \cite{Kawasaki} (Murata-Kawasaki's theorem may be more precise) that the crease pattern is (locally) flat-foldable 
if and only if the alternative sum of the angles around each vertex vanishes.
Note that $k_\sigma$ and $B_\sigma G_\sigma$ around boundary point $B_\sigma$, 
and $G_\sigma B_\sigma$ and $G_\sigma D$ around boundary point $G_\sigma$ overlap with each other, 
so that around $B_\sigma$ and $G_\sigma$, the alternative sums of the angles contained in the lower part must also vanish .
For boundary point $D$, $DG_\Lt$ and $DG_\Rt$, which overlap with $k_\Lt$ and $k_\Rt$ respectively, form an angle $\alpha$.
Thus around $D$, the alternative sum of the angles contained in the lower part must be $\alpha$ if we take the angle which contains $E_\Lt E_\Rt$ as positive.
In view of the above, the flat-foldability around each vertex in the lower part is checked as follows.
\addtocounter{theorem}{1}
\begin{table}[h]
\begin{tabular}{c|c|c|c|c}
&\multicolumn{2}{c|}{$\delta_\sigma=0$ and $\beta_\sigma +\gamma_\sigma /2$}
&\multicolumn{2}{c}{$\delta_\sigma>0$ and $\beta_\sigma +\gamma_\sigma /2$}\\
\cline{2-5}
&$>\pi/2$&$=\pi/2$
&$>\pi/2$&$=\pi/2$\\
\hline
$B_\sigma$&\multicolumn{2}{c|}{$k_\sigma,\ell_\sigma,G_\sigma$}&$k_\sigma,\ell_\sigma,E_\sigma,H_\sigma,G_\sigma$
&$k_\sigma,\ell_\sigma,E_\sigma,G_\sigma=H_\sigma$\\ \hline
$D$&$E_\sigma,G_\sigma$&$E_\sigma=G_\sigma$&$G_\sigma,H_\sigma$&$G_\sigma=H_\sigma$\\ \hline
$E_\sigma$&$m_\sigma,D,E_{\sigma'},G_\sigma$&\multirow{2}{*}{$m_\sigma,B_\sigma,D,E_{\sigma'}$}&$m_\sigma,B_\sigma,E_{\sigma'},H_\sigma$
&$m_\sigma,B_\sigma,E_{\sigma'},G_\sigma=H_\sigma$\\ 
\cline{1-2}\cline{4-5}
$G_\sigma$&$B_\sigma,D,E_\sigma$&&$B_\sigma,D,H_\sigma$&\multirow{2}{*}{$B_\sigma,D,E_\sigma$}\\
\cline{1-4}
$H_\sigma$&\multicolumn{2}{c|}{---}&$B_\sigma,D,E_\sigma,G_\sigma$&
\end{tabular}
\caption{Adjacent vertices of, or rays starting from each vertex in the lower part}
\label{tbl:adjacent_vertex_ray}
\end{table}\\
$\bullet$ {\it Flat-foldability around $B_\sigma$.}
If $\delta_\sigma =0$, then the alternative sum of the angles around $B_\sigma$ in the lower part is calculated as
\begin{equation*}
\angle k_\sigma B_\sigma \ell_\sigma -\angle \ell_\sigma B_\sigma G_\sigma=\beta_\sigma -\beta_\sigma=0.
\end{equation*}
If $\delta_\sigma>0$, then we have
\begin{align*}
\angle k_\sigma B_\sigma \ell_\sigma&=\beta_\sigma -\delta_\sigma,\\
\angle \ell_\sigma B_\sigma E_\sigma&=\angle \ell_\sigma B_\sigma C -\angle E_\sigma B_\sigma C =\frac{\pi}{2}-\frac{\gamma_\sigma}{2},\\
\angle E_\sigma B_\sigma H_\sigma&=\delta_\sigma,\\
\angle H_\sigma B_\sigma G_\sigma&=\angle A B_\sigma H_\sigma -\angle A B_\sigma G_\sigma -\angle E_\sigma B_\sigma H\\
&=\left(\frac{\pi}{2}+\frac{\gamma_\sigma}{2}+\delta_\sigma\right)-(\pi -\beta_\sigma)-\delta_\sigma =\beta_\sigma+\frac{\gamma_\sigma}{2}-\frac{\pi}{2}.
\end{align*}
Thus the alternative sum is given by
\begin{equation*}
\begin{cases}
\angle k_\sigma B_\sigma \ell_\sigma -\angle \ell_\sigma B_\sigma E_\sigma +\angle E_\sigma B_\sigma H_\sigma -\angle H_\sigma B_\sigma G_\sigma =0
&\text{if }\beta_\sigma+\gamma_\sigma /2>\pi /2,\\
\angle k_\sigma B_\sigma \ell_\sigma -\angle \ell_\sigma B_\sigma E_\sigma +\angle E_\sigma B_\sigma H_\sigma =0
&\text{if }\beta_\sigma+\gamma_\sigma /2=\pi /2.
\end{cases}
\end{equation*}
$\bullet$ {\it Flat-foldability around $D$.}
As in the proof of Proposition $\ref{prop:flat_extrusion}$, we define $\alpha_\sigma =\pi -\beta_\sigma -\gamma_\sigma$, so that $\alpha_\Lt +\alpha_\Rt =\alpha$.
We divide $\angle E_\Rt D E_\Lt$ as
\begin{equation*}
\angle E_\Rt D E_\Lt =\angle F D E_\Lt +\angle F D E_\Rt ,
\end{equation*}
and consider the contributions to the alternative sum of angles around $D$ from both sides separately.
If $\delta_\sigma=0$, then we have
\begin{align*}
\angle F D E_\sigma&=\angle D A E_\sigma +\angle A E_\sigma D =\angle B_\sigma A E_\sigma +\angle A E_\sigma B_\sigma\\
&=\pi -\angle A B_\sigma E_\sigma =\pi -\left(\frac{\pi}{2}+\frac{\gamma_\sigma}{2}\right)
=\frac{\pi}{2}-\frac{\gamma_\sigma}{2},\\
\angle E_\sigma D G_\sigma&=\angle A D E_\sigma -\angle A D G_\sigma =\angle A B_\sigma E_\sigma -\angle A B_\sigma G_\sigma\\
&=\left(\frac{\pi}{2}+\frac{\gamma_\sigma}{2}\right) -(\pi -\beta_\sigma )
=\beta_\sigma+\frac{\gamma_\sigma}{2}-\frac{\pi}{2},
\end{align*}
and thus
\begin{equation*}
\begin{cases}
\angle F D E_\sigma -\angle E_\sigma D G_\sigma =\pi -\beta_\sigma -\gamma_\sigma =\alpha_\sigma&\text{if }\beta_\sigma+\gamma_\sigma /2>\pi /2,\\
\angle F D E_\sigma =\alpha_\sigma&\text{if }\beta_\sigma+\gamma_\sigma /2=\pi /2.
\end{cases}
\end{equation*}
If $\delta_\sigma>0$, then we have
\begin{align*}
\angle F D H_\sigma&=\angle F D E_\sigma +\angle E_\sigma D H_\sigma =
(\pi -\angle A B_\sigma E_\sigma) +\angle E_\sigma B_\sigma H_\sigma \\
&=\left\{\pi -\left(\frac{\pi}{2}+\frac{\gamma_\sigma}{2}+\delta_\sigma\right)\right\} +\delta_\sigma
=\frac{\pi}{2}-\frac{\gamma_\sigma}{2},\\
\angle H_\sigma D G_\sigma&=\angle H_\sigma B_\sigma G_\sigma =\angle A B_\sigma E_\sigma-\angle A B_\sigma G_\sigma -\angle E_\sigma B_\sigma H_\sigma\\
&=\left(\frac{\pi}{2}+\frac{\gamma_\sigma}{2}+\delta_\sigma\right) -(\pi -\beta_\sigma) -\delta_\sigma
=\beta_\sigma +\frac{\gamma_\sigma}{2}-\frac{\pi}{2},
\end{align*}
and thus
\begin{equation*}
\begin{cases}
\angle F D H_\sigma -\angle H_\sigma D G_\sigma =\pi -\beta_\sigma -\gamma_\sigma =\alpha_\sigma&\text{if }\beta_\sigma+\gamma_\sigma /2>\pi /2,\\
\angle F D H_\sigma =\alpha_\sigma&\text{if }\beta_\sigma+\gamma_\sigma /2=\pi /2.
\end{cases}
\end{equation*}
Consequently, in all cases 
the alternative sum is given by
\begin{equation*}
\alpha_\Lt +\alpha_\Rt =\alpha ,
\end{equation*}
which coincide with the angle formed by $k_\Lt$ and $k_\Rt$ as desired. \\
$\bullet$ {\it Flat-foldability around $E_\sigma$.} 
Suppose $\delta_\sigma=0$. 
Then we have
\begin{align*}
\angle m_\sigma E_\sigma G_\sigma&=\angle m_\sigma E_\sigma B_\sigma +\angle B_\sigma E_\sigma G_\sigma\\
\angle m_\sigma E_\sigma F&=\angle m_\sigma E_\sigma C +\angle C E_\sigma F =\angle m_\sigma E_\sigma B_\sigma+\angle C E_\sigma F\\
\angle F E_\sigma D&=\angle C E_\sigma F\\
\angle D E_\sigma G_\sigma&=\angle B_\sigma E_\sigma G_\sigma,
\end{align*}
which gives that 
\begin{equation*}
\begin{cases}
\angle m_\sigma E_\sigma G_\sigma-\angle m_\sigma E_\sigma F+\angle F E_\sigma D-\angle D E_\sigma G_\sigma=0&\text{if }\beta_\sigma +\gamma_\sigma /2>\pi /2, \\
\angle m_\sigma E_\sigma B_\sigma-\angle m_\sigma E_\sigma F+\angle F E_\sigma D=0&\text{if }\beta_\sigma +\gamma_\sigma /2=\pi /2. 
\end{cases}
\end{equation*}
Next suppose $\delta_\sigma>0$. 
Then we have
\begin{align*}
\angle m_\sigma E_\sigma F&=\angle m_\sigma E_\sigma C +\angle C E_\sigma F=\angle B_\sigma E_\sigma m_\sigma +\angle D E_\sigma F,\\
\angle F E_\sigma G_\sigma&= \angle D E_\sigma F +\angle G_\sigma E_\sigma D =\angle D E_\sigma F +\angle G_\sigma E_\sigma B_\sigma,
\end{align*}
which gives that
\begin{equation*}
\begin{cases}
\angle B_\sigma E_\sigma m_\sigma -\angle m_\sigma E_\sigma F +\angle F E_\sigma G_\sigma -\angle G_\sigma E_\sigma B_\sigma =0
&\text{if }\beta_\sigma +\gamma_\sigma /2>\pi /2, \\
\angle B_\sigma E_\sigma m_\sigma -\angle m_\sigma E_\sigma F +\angle F E_\sigma G_\sigma =0
&\text{if }\beta_\sigma +\gamma_\sigma /2=\pi /2.
\end{cases}
\end{equation*}
$\bullet$ {\it Flat-foldability around $G_\sigma$ and $H_\sigma$.} 
This is clear from the symmetry of $B_\sigma$ and $D$ with respect to $A E_\sigma$.
\section{Interference coefficients and maximization of heights}\label{sec:5}
In this section we consider a polyhedron as described in the end of Introduction,
for which the bottom face $B_1\dots B_N$ and parameters $\alpha_i,\beta_{i,\sigma}, \delta_{i,\sigma}$ (and so $\gamma_i,\alpha_{i,\sigma},\gamma_{i,\sigma}$)
are all fixed for $i=1,\dots ,N$ and $\sigma =\Lt ,\Rt$.
Also, we use the convention that $A_{i\pm N}=A_i$ and $B_{i\pm N}=B_i$.
For $i=1,\dots N$, let $\triangle T_i B_i B_{i+1}$ be a triangle with $\angle T_i B_i B_{i+1}=\pi -\beta_{i,\Rt},\angle T_i B_{i+1} B_i=\pi -\beta_{i+1,\Lt}$
if $\beta_{i,\Rt}+\beta_{i+1,\Lt}<\pi$.
Then the maximal height $h_{\max}$ of a polyhedron with these data is given by
\begin{equation*}
h_{\max}=\begin{cases}
\displaystyle\min_{\substack{i=1,\dots ,N\text{ with}\\ \beta_{i,\Rt}+\beta_{i+1,\Lt}<\pi}}
\lambda_i\norm{T_i B_i}&\text{if }\beta_{i,\Rt}+\beta_{i+1,\Lt}<\pi\text{ for some }i,\\
\infty&\text{if }\beta_{i,\Rt}+\beta_{i+1,\Lt}\geqslant\pi\text{ for all }i,
\end{cases}
\end{equation*}
where $\norm{T_i B_i}$ is given by $\norm{T_i B_i}=\norm{B_i B_{i+1}}\sin\beta_{i+1,\Lt}/\sin (\beta_{i,\Rt} +\beta_{i+1,\Lt})$ by the sine theorem, and
$\lambda_i =\lambda (\alpha_i ,\beta_{i,\Lt},\beta_{i,\Rt})$ is given by $\eqref{lambda}$.
For $0<h\leqslant h_{\max}$, we denote by $\Delta_h$ the polyhedron of height $h$ with the above data, and let $A_i=A_i(h), i=1,\dots ,N$ be the corresponding
vertices of the top face of $\Delta_h$. 
The length of ridge $A_i B_i$ is given by $\norm{A_i B_i}=h/\lambda_i$.

If conditions $\mathrm{(i)}$--$\mathrm{(iii)}$ or conditions $\mathrm{(i)}$--$\mathrm{(iv)}$ hold for each $i=1,\dots N$ and $\sigma =\Lt ,\Rt$ 
in Constructions $\ref{const:conv}$ or $\ref{const:new}$ respectively,
then a crease pattern of $\Delta_h$ is constructible and locally foldable as shown in Section $\ref{sec:4}$.
However, it may not be globally foldable because interference of adjacent gadgets may occur.
Let us see how $3$D gadgets interfere with adjacent gadgets.

For conventional gadgets, interference with an adjacent conventional gadget occurs in the way that the two triangular pyramids supporting from inside collide.
Thus in Figure $\ref{fig:gadget_conv}$ the lengths of $BD_\sigma$ of two adjacent gadgets which share the same edge are involved in the interference.

In the case of new $3$D gadgets, interference occurs in a more complicated way.
Consider the pleats formed by shadowed kites $AB_\sigma E_\sigma D$, which we call the \emph{ears} of the gadget, 
and $CI_\Lt DI_\Rt$, which we call the thrusting part of the \emph{tongue} (kite $CE_\Lt DE_\Rt$) of the gadget,
in Figure $\ref{fig:gadget_new_interference}$ may interfere with adjacent pleats, 
where $I_\sigma$ is the intersection point of $E_\Lt E_\Rt$ and a line through $C$ parallel to $j_\sigma$ for $\sigma =\Lt ,\Rt$.
Observe that the tip $G_\sigma$ of each ear swings to ${G'}_\sigma$, and $C$ swings to $B_\sigma$ with $CI_\sigma$ overlapping with $k_\sigma$.
Then interference occurs in a bottom edge in the way that the tongue of the \emph{inner} (first folded) pleat of one gadget 
collides with one of the ears of the \emph{outer} (second folded) pleat of the other gadget.
Thus in Figure $\ref{fig:gadget_new_interference}$ the length of $BG_\sigma$ of the inner gadget and that of $CI_\sigma$ of the outer gadget 
which share the same edge are involved in the interference.

To see if interference occurs or not, we introduce three kinds of \emph{interference coefficients} 
$\kappa_\conv ,$ $\kappa_\inn,$ $\kappa_\out$ for each vertex $B_{i,\sigma}$ 
and two kinds of interference coefficients $\kappa_\conv ,\kappa_\new$ for each bottom edge $B_i B_{i+1}$ with $i=1,\dots N$ and $\sigma =\Lt ,\Rt$.
Before defining the coefficients, we calculate the following lengths. 
\begin{figure}[htbp]
  \begin{center}
\addtocounter{theorem}{1}
          \includegraphics[width=0.75\hsize]{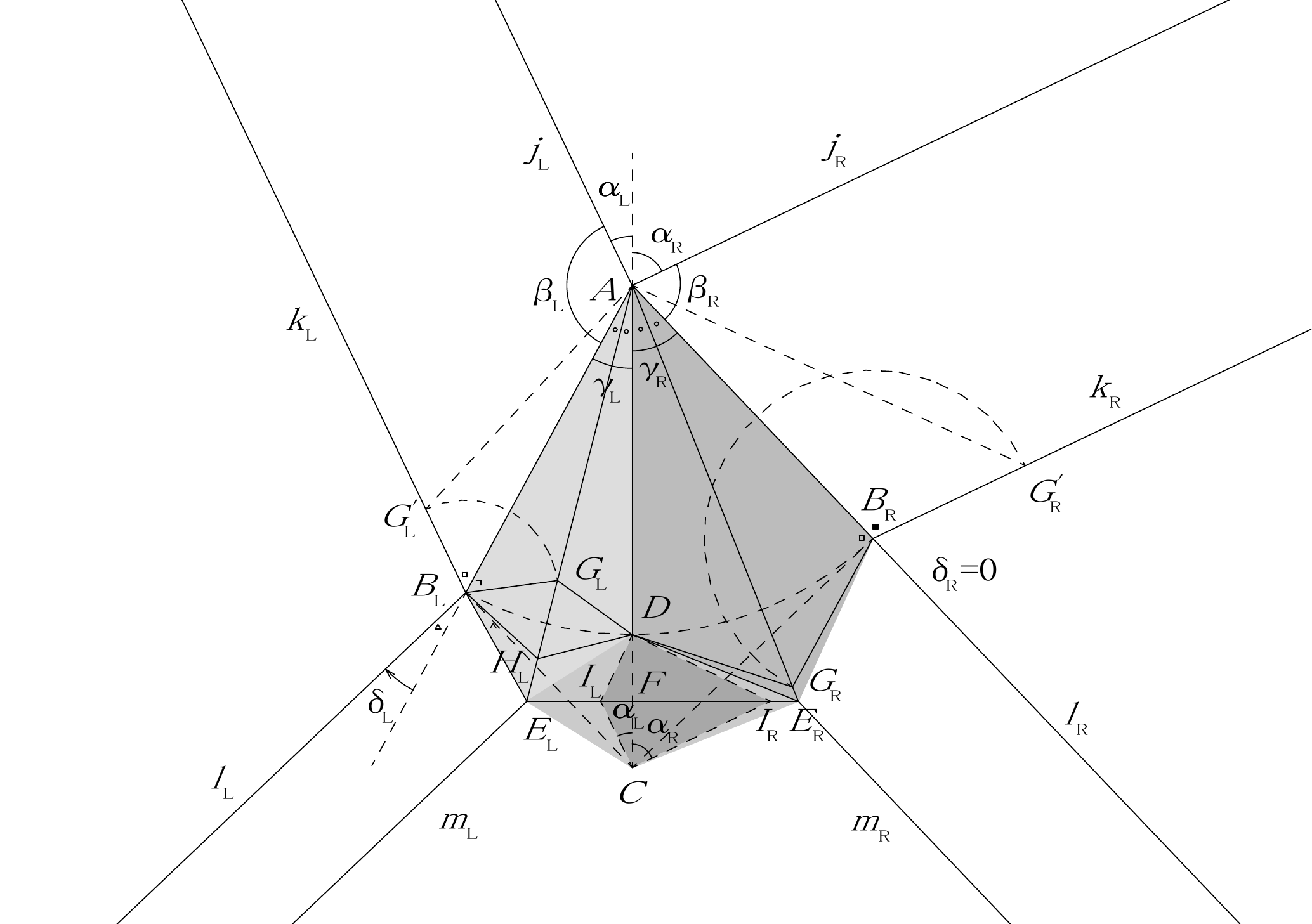}
    \caption{Kites which may cause interference with adjacent gadgets}
    \label{fig:gadget_new_interference}
\end{center}
\end{figure}
\begin{proposition}\label{prop:calc_length_new}
In the resulting crease pattern Figure $\ref{fig:gadget_new}$ in Construction $\ref{const:new}$, assume $\norm{AB}=1$. 
Then we have
\begin{equation}\label{length}
\begin{aligned}
\norm{AC}&=\frac{1}{\cos\gamma_\sigma (1-\tan\gamma_\sigma\tan\delta_\sigma )},\\
\norm{B_\sigma C}&=\frac{\cos\delta_{\sigma'}-\cos (\gamma+\delta_{\sigma'})}{\sin (\gamma +\delta_\Lt +\delta_\Rt )},\\
\norm{B_\sigma G_\sigma}&=\frac{1}{\sin\beta_\sigma /\tan (\gamma_\sigma /2)-\cos\beta_\sigma},\\
\norm{CI_\sigma}&=\frac{1}{2\cos\gamma_\sigma (\sin\beta_\sigma\tan\gamma_\sigma -\cos\beta_\sigma)}
\left(\frac{1}{\cos\gamma_\sigma (1-\tan\gamma_\sigma\tan\delta_\sigma )}-1\right) ,
\end{aligned}
\end{equation}
where $\tan\gamma_\sigma$ is given by $\eqref{tan_gamma_LR}$, and $\cos\gamma_\sigma$ and $\tan (\gamma_\sigma /2)$ are given by
\begin{equation}\label{cos_gamma_sigma}
\cos\gamma_\sigma =\frac{1}{\sqrt{1+\tan^2\gamma_\sigma}},\quad\tan\frac{\gamma_\sigma}{2}=\frac{\sqrt{\tan^2\gamma_\sigma +1}-1}{\tan\gamma_\sigma}
\end{equation}
in terms of $\tan\gamma_\sigma$. 
Thus using $\eqref{tan_gamma_LR}$ and $\eqref{cos_gamma_sigma}$, we can express the lengths in $\eqref{length}$
in terms of $\beta_\sigma ,\gamma$, $\delta_\sigma$ and not $\gamma_\sigma$.
In particular, if $\delta_\Lt =\delta_\Rt =0$, then $\gamma_\sigma =\gamma /2$,
and $\eqref{length}$ is simplified as
\begin{equation}\label{length_new_delta=0}
\begin{aligned}
\norm{AC}&=\frac{1}{\cos (\gamma /2)},\\
\norm{B_\sigma C}&=\tan\frac{\gamma}{2},\\
\norm{B_\sigma G_\sigma}&=\frac{1}{\sin\beta_\sigma /\tan (\gamma /4)-\cos\beta_\sigma},\\
\norm{C I_\sigma}&=-\frac{1-\cos (\gamma /2)}{2\cos (\beta_\sigma +\gamma /2)\cos (\gamma /2)}.
\end{aligned}
\end{equation}
Furthermore, if $\beta_\sigma =\pi /2$, then we have
\begin{equation}\label{length_new_prism}
\begin{aligned}
\norm{B_\sigma G_\sigma}&=\tan\frac{\gamma}{4},\\
\norm{C I_\sigma}&=\frac{1-\cos (\gamma /2)}{\sin\gamma}=\frac{1}{2}\left(\tan\frac{\gamma}{2}-\tan\frac{\gamma}{4}\right) .
\end{aligned}
\end{equation}
\end{proposition}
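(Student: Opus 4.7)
The plan is to set coordinates with $A$ at the origin and $\ora{AB_\Lt}$ along the positive $x$-axis, so that $\ora{AC}$ sits at angle $\gamma_\Lt$ and $D$ lies on segment $AC$ at unit distance from $A$ (being on the arc $B_\Lt B_\Rt$ centred at $A$). Almost all the formulas will then come out of elementary trigonometry in explicit triangles.

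For $\norm{AC}$: the triangle $\triangle AB_\sigma C$ has $\norm{AB_\sigma}=1$, apex angle $\gamma_\sigma$ at $A$, and angle $\pi/2+\delta_\sigma$ at $B_\sigma$ (since $B_\sigma C\perp\ell_\sigma$ and $\ell_\sigma$ is rotated away from $k_\sigma$ by $\delta_\sigma$), hence angle $\pi/2-\gamma_\sigma-\delta_\sigma$ at $C$. The law of sines yields $\norm{AC}=\cos\delta_\sigma/\cos(\gamma_\sigma+\delta_\sigma)$, which rewrites as $1/[\cos\gamma_\sigma(1-\tan\gamma_\sigma\tan\delta_\sigma)]$ after expanding $\cos(\gamma_\sigma+\delta_\sigma)$ and dividing through by $\cos\delta_\sigma$. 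The same triangle gives $\norm{B_\sigma C}=\sin\gamma_\sigma/\cos(\gamma_\sigma+\delta_\sigma)$; to land on the stated numerator $\cos\delta_{\sigma'}-\cos(\gamma+\delta_{\sigma'})$ I would simply quote the computation $p_\sigma=\norm{B_\sigma C}$ already carried out inside the proof of Lemma \ref{lem:tan_gamma_sigma}. For $\norm{B_\sigma G_\sigma}$, I work in $\triangle AB_\sigma G_\sigma$: since $AE_\sigma$ bisects $\angle B_\sigma AC=\gamma_\sigma$ and $G_\sigma\in AE_\sigma$, the angles are $\gamma_\sigma/2$ at $A$, $\pi-\beta_\sigma$ at $B_\sigma$, and $\beta_\sigma-\gamma_\sigma/2$ at $G_\sigma$; the law of sines gives $\norm{B_\sigma G_\sigma}=\sin(\gamma_\sigma/2)/\sin(\beta_\sigma-\gamma_\sigma/2)$, which reduces to the claimed form involving $\tan(\gamma_\sigma/2)$ upon dividing numerator and denominator by $\cos(\gamma_\sigma/2)$.

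The key new observation for $\norm{CI_\sigma}$ is that $F$ is the midpoint of segment $CD$. Indeed, $E_\sigma$ is the circumcentre of $\triangle B_\sigma CD$, so both $E_\Lt$ and $E_\Rt$ lie on the perpendicular bisector of $CD$; since $C,D$ both lie on line $AC$, this perpendicular bisector meets $AC$ orthogonally at the midpoint of $CD$, and that intersection is exactly $F$. Hence $\norm{AF}=(1+\norm{AC})/2$ and $\norm{CF}=(\norm{AC}-1)/2$. In the right triangle $\triangle CFI_\sigma$ (right angle at $F$ because $E_\Lt E_\Rt\perp AC$), the angle at $C$ between $CF$ (directed along $CA$) and $CI_\sigma$ (parallel to $j_\sigma$) equals $\pi-(\beta_\sigma+\gamma_\sigma)$, so $\norm{CI_\sigma}=\norm{CF}/(-\cos(\beta_\sigma+\gamma_\sigma))$. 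Using the identity $-\cos(\beta_\sigma+\gamma_\sigma)=\cos\gamma_\sigma(\sin\beta_\sigma\tan\gamma_\sigma-\cos\beta_\sigma)$ and substituting the expression for $\norm{AC}$ produces the stated formula.

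The trig identities \eqref{cos_gamma_sigma} are standard manipulations from $\sec^2=1+\tan^2$ together with a half-angle rearrangement, and the specialisations \eqref{length_new_delta=0} and \eqref{length_new_prism} follow by plugging in $\gamma_\sigma=\gamma/2$ (guaranteed by Lemma \ref{lem:tan_gamma_sigma} when $\delta_\Lt=\delta_\Rt=0$) and then $\beta_\sigma=\pi/2$. The main obstacle is not geometric but algebraic bookkeeping: verifying that the clean triangle-based expression $\sin\gamma_\sigma/\cos(\gamma_\sigma+\delta_\sigma)$ for $\norm{B_\sigma C}$ is equivalent to the form $(\cos\delta_{\sigma'}-\cos(\gamma+\delta_{\sigma'}))/\sin(\gamma+\delta_\Lt+\delta_\Rt)$ requires the angle-addition manipulations encoded in \eqref{tan_gamma_LR}, which is why the slickest route is to cite that intermediate computation from the proof of Lemma \ref{lem:tan_gamma_sigma} rather than rederive it by hand.
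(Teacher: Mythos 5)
Your proposal is correct and follows essentially the same route as the paper: the law of sines in $\triangle AB_\sigma C$ and $\triangle AB_\sigma G_\sigma$, quoting $p_\sigma$ from the proof of Lemma \ref{lem:tan_gamma_sigma} for $\norm{B_\sigma C}$, and the right triangle $CFI_\sigma$ with $\norm{CF}=(\norm{AC}-1)/2$ and angle $\alpha_\sigma=\pi-\beta_\sigma-\gamma_\sigma$ at $C$ for $\norm{CI_\sigma}$. Your explicit justification that $F$ is the midpoint of $CD$ (both circumcenters $E_\Lt,E_\Rt$ lying on the perpendicular bisector of $CD$) is a small but welcome addition that the paper leaves implicit.
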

\begin{proof}
To obtain an expression of $\norm{AC}$, we use the sine theorem for $\triangle AB_\sigma C$. 
Then we have
\begin{align*}
\frac{\norm{AB}}{\sin\angle ACB_\sigma}&=\frac{\norm{AC}}{\sin\angle AB_\sigma C},\quad\text{where}\\
\sin\angle AB_\sigma C&=\sin\left(\frac{\pi}{2}+\delta_\sigma\right) =\cos\delta_\sigma ,\\
\sin\angle ACB_\sigma &=\sin\left\{\pi -\gamma_\sigma -\left(\frac{\pi}{2}+\delta_\sigma\right)\right\}
=\sin\left(\frac{\pi}{2}-\gamma_\sigma -\delta_\sigma\right)\\
&=\cos (\gamma_\sigma +\delta_\sigma )=\cos\gamma_\sigma\cos\delta_\sigma -\sin\gamma_\sigma\sin\delta_\sigma ,
\end{align*}
so that
\begin{equation}\label{AC}
\norm{AC}=\frac{\cos\delta_\sigma}{\cos\gamma_\sigma\cos\delta_\sigma -\sin\gamma_\sigma\sin\delta_\sigma}
=\frac{1}{\cos\gamma_\sigma (1-\tan\gamma_\sigma\tan\delta_\sigma )}.
\end{equation}
Recalling the proof of Lemma $\ref{lem:tan_gamma_sigma}$, we have
\begin{equation*}
\norm{BC_\sigma}=p_\sigma =\frac{\cos\delta_{\sigma'}-\cos (\gamma+\delta_{\sigma'})}{\sin (\gamma +\delta_\Lt +\delta_\Rt )}.
\end{equation*}
By the sine theorem for $\triangle AB_\sigma G_\sigma$, we have
\begin{align*}
\frac{\norm{AB}}{\sin\angle AG_\sigma B_\sigma}&=\frac{\norm{B_\sigma G_\sigma}}{\sin\angle AB_\sigma G_\sigma},\quad\text{where}\\
\sin\angle AG_\sigma B_\sigma &=\sin\left(\pi-\left(\pi -\beta_\sigma\right) -\frac{\gamma_\sigma}{2}\right)\\
&=\sin\left(\beta_\sigma -\frac{\gamma_\sigma}{2}\right)
=\sin\beta_\sigma\cos\frac{\gamma_\sigma}{2}-\cos\beta_\sigma\sin\frac{\gamma_\sigma}{2},\\
\sin\angle AB_\sigma G_\sigma &=\sin\frac{\gamma_\sigma}{2}.
\end{align*}
Thus we have
\begin{equation*}
\norm{B_\sigma G_\sigma}=\frac{\sin (\gamma_\sigma /2)}{\sin\beta_\sigma\cos (\gamma_\sigma /2)-\cos\beta_\sigma\sin (\gamma_\sigma /2)}
=\frac{1}{\sin\beta_\sigma /\tan (\gamma_\sigma /2)-\cos\beta_\sigma}.
\end{equation*}
For the length of segment $CI_\sigma$, we have
\begin{align}
\norm{CI_\sigma}&=\frac{\norm{CF}}{\cos\angle FCI_\sigma}=\frac{\norm{AC}-1}{2\cos\alpha_\sigma},\quad\text{where}\label{CI}\\
\cos\alpha_\sigma &=-\cos (\beta_\sigma +\gamma_\sigma )=\sin\beta_\sigma\sin\gamma_\sigma -\cos\beta_\sigma\cos\gamma_\sigma\label{cos_alpha_sigma}\\
&=\cos\gamma_\sigma (\sin\beta_\sigma\tan\gamma_\sigma -\cos\beta_\sigma ) .\nonumber
\end{align}
Then substituting $\eqref{cos_alpha_sigma}$ and $\eqref{AC}$ into $\eqref{CI}$ gives that
\begin{equation*}
\norm{CI_\sigma}
=\frac{1}{2\cos\gamma_\sigma (\sin\beta_\sigma\tan\gamma_\sigma -\cos\beta_\sigma)}
\left(\frac{1}{\cos\gamma_\sigma (1-\tan\gamma_\sigma\tan\delta_\sigma )}-1\right) .
\end{equation*}
The expression in $\eqref{length_new_delta=0}$ for $\delta_\Lt =\delta_\Rt =0$ follows easily. 
For the last equality in $\eqref{length_new_prism}$, we set $t=\tan (\gamma /4)$, so that $\cos (\gamma /2)=(1-t^2)/(1+t^2),\sin (\gamma /2)=2t/(1+t^2)$.
Then we have
\begin{align*}
\frac{1-\cos (\gamma /2)}{\sin\gamma}&=\frac{1-\cos (\gamma /2)}{2\sin (\gamma /2)\cos (\gamma /2)}\\
&=\frac{1}{2}\cdot\frac{1+t^2}{2t}\cdot\frac{1+t^2}{1-t^2}\left( 1-\frac{1-t^2}{1+t^2}\right) \\
&=\frac{1}{2}\left(\frac{2t}{1-t^2}-t\right) =\frac{1}{2}\left(\tan\frac{\gamma}{2}-\tan\frac{\gamma}{4}\right) .
\end{align*}
This completes the proof of Proposition $\ref{prop:calc_length_new}$.
\end{proof}
\begin{proposition}\label{prop:calc_length_conv}
In the resulting crease pattern Figure $\ref{fig:gadget_conv}$ in Construction $\ref{const:conv}$, assume $\norm{AB}=1$. 
Then we have
\begin{equation*}
\norm{B_\sigma D_\sigma}=\frac{\tan (\gamma /2)}{2\sin\beta_\sigma}.
\end{equation*}
In particular, if $\beta_\sigma =\pi /2$, the we have
\begin{equation*}
\norm{B_\sigma D_\sigma}=\frac{1}{2}\tan\frac{\gamma}{2}.
\end{equation*}
\end{proposition}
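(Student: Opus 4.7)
The strategy is to reduce everything to a single right-triangle calculation at $B_\sigma$, leveraging the fact that the conventional construction is the $\delta_\Lt=\delta_\Rt=0$ specialization of Construction~$\ref{const:new}$. Throughout I assume $\norm{AB}=1$ by scale-invariance.

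First I would record the two pieces of data needed from earlier in the section. Since Construction~$\ref{const:conv}$ agrees with Construction~$\ref{const:new}$ when $\delta_\sigma=0$ (steps (1) and (2) are identical, and the proposition only concerns lengths determined by those first two steps), Proposition~$\ref{prop:calc_length_new}$ applies. From the second equation of $\eqref{length_new_delta=0}$ I read off $\norm{B_\sigma C}=\tan(\gamma/2)$. Moreover, inspecting the proof of Proposition~$\ref{prop:calc_length_new}$ (specifically the computation $\sin\angle AB_\sigma C=\sin(\pi/2+\delta_\sigma)=\cos\delta_\sigma$ used in the sine-theorem application to $\triangle AB_\sigma C$), one sees $\angle AB_\sigma C=\pi/2+\delta_\sigma$, which here reduces to $\pi/2$; geometrically this says $AB_\sigma$ is parallel to $\ell_\sigma$ in the conventional setup.

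Next, let $M_\sigma$ denote the midpoint of segment $B_\sigma C$, so that $M_\sigma\in m_\sigma$ and $B_\sigma M_\sigma\perp m_\sigma$. Because $D_\sigma\in m_\sigma$, the triangle $B_\sigma M_\sigma D_\sigma$ is right-angled at $M_\sigma$, hence
\begin{equation*}
\norm{B_\sigma D_\sigma}=\frac{\norm{B_\sigma M_\sigma}}{\cos\angle M_\sigma B_\sigma D_\sigma}=\frac{\tfrac{1}{2}\tan(\gamma/2)}{\cos\angle CB_\sigma D_\sigma},
\end{equation*}
where the second equality uses that $M_\sigma$ lies on segment $B_\sigma C$. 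It therefore suffices to evaluate $\angle CB_\sigma D_\sigma$. By Construction~$\ref{const:conv}$(3) we have $\angle AB_\sigma D_\sigma=\pi-\beta_\sigma$; combining with $\angle AB_\sigma C=\pi/2$ and the fact that both $C$ and $D_\sigma$ lie on the same side of line $AB_\sigma$ (inside the gap of angle $\gamma$ at $A$), I obtain $\angle CB_\sigma D_\sigma=|\pi/2-\beta_\sigma|$, so $\cos\angle CB_\sigma D_\sigma=\sin\beta_\sigma$ and $\norm{B_\sigma D_\sigma}=\tan(\gamma/2)/(2\sin\beta_\sigma)$. The special case $\beta_\sigma=\pi/2$ is then immediate.

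The only delicate point is the angle bookkeeping in the last step: one must verify that $C$ and $D_\sigma$ lie on the same side of $AB_\sigma$ so that subtracting (rather than adding) the two angles yields $\angle CB_\sigma D_\sigma$. This is ensured by condition~(iii) of Construction~$\ref{const:conv}$, which forces $C$ to lie strictly inside the gap of angle $\gamma$, where $D_\sigma$ also resides. Once that geometric placement is clear, the remainder of the argument is routine plane trigonometry.
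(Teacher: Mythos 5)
Your proof is correct and follows essentially the same route as the paper: the paper's one-line argument combines $\norm{B_\sigma C}=\tan(\gamma/2)$ (from $\eqref{length_new_delta=0}$ with $\delta_\Lt=\delta_\Rt=0$) with the relation $\norm{B_\sigma C}=2\norm{B_\sigma D_\sigma}\sin\beta_\sigma$, and your midpoint right-triangle computation at $M_\sigma$ together with $\angle CB_\sigma D_\sigma=\lvert\pi/2-\beta_\sigma\rvert$ is exactly a derivation of that relation. The only cosmetic remark is that for $\delta_\sigma=0$ the ray $\ell_\sigma$ is in fact the extension of $AB_\sigma$ beyond $B_\sigma$ (collinear, not merely parallel), which only strengthens your claim $\angle AB_\sigma C=\pi/2$.
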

\begin{proof}
This is clear from $\norm{B_\sigma C}=2\norm{B_\sigma D_\sigma}\sin\beta_\sigma$ and 
the expression for $\norm{B_\sigma C}$ in $\eqref{length_new_delta=0}$ for $\delta_\Lt =\delta_\Rt =0$.
\end{proof}
\begin{definition}\label{def:interference_conv}\rm
Consider a development of a polyhedron $\Delta_h$ as described in the beginning of this section. 
Suppose conditions $\mathrm{(i)}$--$\mathrm{(iii)}$ of Construction $\ref{const:conv}$ hold around all ridges $A_i B_i$.
For a vertex $B_{i,\sigma}$ in the development of $\Delta_h$ with $i=1,\dots N$ and $\sigma =\Lt ,\Rt$, we define 
an \emph{interference coefficient} $\kappa_\conv (B_{i,\sigma})$ \emph{of the conventional kind} to be the ratio 
\begin{equation*}
\frac{\norm{B_{i,\sigma}D_{i,\sigma}}}{h} =\frac{\norm{B_{i,\sigma}D_{i,\sigma}}}{\lambda_i\norm{A_i B_i}}
\end{equation*}
of the length of segment $B_{i,\sigma}D_{i,\sigma}$ to the height $h$ of the extrusion, where $\lambda_i =\lambda (\alpha_i ,\beta_{i,\Lt},\beta_{i,\Rt})$.

Also, for an edge $B_i B_{i+1}$ of the bottom face (note that $B_{i+1}$ is located to the right of $B_i$), 
we define an \emph{interference coefficient} $\kappa_\conv (B_i B_{i+1})$ \emph{of the conventional kind} as
\begin{equation*}
\kappa_\conv (B_i B_{i+1})=\kappa_\conv (B_{i,\Rt})+\kappa_\conv (B_{i+1,\Lt}).
\end{equation*}
In particular, when we extrude a prism of $B_1,\dots B_N$, in which case we have $\beta_{i,\sigma}=\pi /2$ and $\lambda_ i=1$ 
for all $i=1,\dots N$ and $\sigma =\Lt ,\Rt$, we will denote $\kappa_\conv (B_{i,\sigma})$ and $\kappa_\conv (B_i B_{i+1})$
by $\kappa_\conv^\perp (B_i)$ and $\kappa_\conv^\perp (B_i B_{i+1})$ respectively.
\end{definition}

\begin{definition}\label{def:interference_new}\rm
Consider a development of a polyhedron $\Delta_h$ as described in the beginning of this section. 
Suppose conditions $\mathrm{(i)}$--$\mathrm{(iv)}$ of Construction $\ref{const:new}$ hold around all ridges $A_i B_i$.
In the resulting crease pattern in Construction $\ref{const:new}$, let $C_i,D_i,\dots G_{i,\sigma},H_{i,\sigma}$ 
be the points corresponding to ridge $A_i B_i$ for $i=1,\dots N$.
For a vertex $B_{i,\sigma}$ in the development of $\Delta_h$,
we define an \emph{interference coefficient} $\kappa_\inn (B_{i,\sigma})$ \emph{of the inner pleat} to be the ratio 
\begin{equation*}
\frac{\norm{C_i I_{i,\sigma}}}{h}=\frac{\norm{C_i I_{i,\sigma}}}{\lambda_i\norm{A_i B_i}}.
\end{equation*}
Similarly, we define an \emph{interference coefficient} $\kappa_\out (B_{i,\sigma})$ \emph{of the outer pleat} to be the ratio 
\begin{equation*}
\frac{\norm{B_{i,\sigma}G_{i,\sigma}}}{h}=\frac{\norm{B_{i,\sigma}G_{i,\sigma}}}{\lambda_i\norm{A_i B_i}}.
\end{equation*}

Now we can define an \emph{interference coefficient} $\kappa_{\new}(B_i B_{i+1})$ \emph{of the new kind} as
\begin{equation*}
\kappa_{\new}(B_i B_{i+1})=\min\{\kappa_\inn (B_{i,\Rt})+\kappa_\out (B_{i+1,\Lt}),\kappa_\out (B_{i,\Rt})+\kappa_\inn (B_{i+1,\Lt})\} .
\end{equation*}
In particular, when we extrude a prism of $B_1,\dots B_N$, we will denote $\kappa_\inn (B_{i,\sigma}),\kappa_\out (B_{i,\sigma})$ and $\kappa_\new (B_i B_{i+1})$
by $\kappa_\inn^\perp (B_i),\kappa_\out^\perp (B_i)$ and $\kappa_\new^\perp (B_i B_{i+1})$ respectively.
\end{definition}

An interference coefficient $\kappa_\conv (B_i B_{i+1})$ in Definition $\ref{def:interference_conv}$
means the total length which the supporting pyramids of the conventional gadgets 
would occupy in edge $B_i B_{i+1}$ of polyhedron $\Delta_1$, i.e., $\Delta_h$ with $h=1$.
For a polyhedron $\Delta_h$ with a general $h$ with $h\leqslant h_{\max}$, the total length in edge $B_i B_i+1$
occupied by the conventional gadgets would be $h\cdot\kappa_\conv (B_i B_{i+1})$, and if $h\cdot\kappa_\conv (B_i B_{i+1})\leqslant \norm{B_i B_{i+1}}$, 
no interference of the conventional gadgets occurs in edge $B_i B_{i+1}$ of $\Delta_h$, so that the gadgets are actually foldable.
Conversely, $h\cdot\kappa_\conv (B_i B_{i+1})>\norm{B_i B_{i+1}}$, then the supporting pyramids of the conventional gadgets 
would collide in edge $B_i B_{i+1}$ of $\Delta_h$, in which case Construction $\ref{const:conv}$ fails.

Similarly, an interference coefficient $\kappa_\new (B_i B_{i+1})$ in Definition $\ref{def:interference_new}$
means the total length which the tongue of one new gadget 
and one of the ears of the other gadget would occupy in edge $B_i B_{i+1}$ of polyhedron $\Delta_1$ if we choose the inner and the outer pleats appropiately.
Also, we see that if $h\leqslant h_{\max}$, then interference occurs in edge $B_i B_{i+1}$ of $\Delta_h$
if and only if $h\cdot\kappa_\new (B_i B_{i+1})>\norm{B_i B_{i+1}}$.

Summarizing the above, we have the following theorems for the maximal heights of the extrusion with the conventional and the new $3$D gadgets.
\begin{theorem}\label{thm:height_conv}
The maximum value $h_\conv$ of the height $h$ of polyhedron $\Delta_h$ that can be extruded with the conventional $3$D gadgets at one time is given by
\begin{equation*}
h_\conv =\min\left\{\min_{i=1,\dots ,N}\frac{\norm{B_i B_{i+1}}}{\kappa_\conv (B_i B_{i+1})},h_{\max}\right\} .
\end{equation*}
\end{theorem}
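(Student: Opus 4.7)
The plan is to prove the formula by decomposing the constraint on $h$ into a local part (the polyhedron $\Delta_h$ must exist) and a global part (adjacent conventional gadgets must not interfere on shared bottom edges), and then showing that these two constraints together are not only necessary but also sufficient for the extrusion to be foldable.

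First I would recall that by the definition of $h_{\max}$ given at the start of Section \ref{sec:5}, the condition $h\leqslant h_{\max}$ is necessary and sufficient for the data $B_1,\dots ,B_N$, $\alpha_i,\beta_{i,\sigma}$ to define a genuine polyhedron $\Delta_h$ whose side faces are non-degenerate trapezoids. Next I would observe that, by hypothesis, conditions $\mathrm{(i)}$--$\mathrm{(iii)}$ of Construction \ref{const:conv} hold for every ridge $A_i B_i$, so Construction \ref{const:conv} produces a locally foldable crease pattern for each individual conventional gadget, independently of $h$. Hence the only remaining obstruction to globally folding the extrusion is that the supporting triangular pyramids of two adjacent conventional gadgets sharing a bottom edge might collide.

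Second I would justify the reduction of global interference to the interference coefficients. By the construction, the supporting pyramid associated with ridge $A_i B_i$ meets the bottom edge $B_i B_{i+1}$ in the segment $B_{i,\Rt}D_{i,\Rt}$ (after the identification $B_{i,\Rt}\mapsto C_i$), and similarly the pyramid of $A_{i+1} B_{i+1}$ meets it in $B_{i+1,\Lt}D_{i+1,\Lt}$. Since the whole gadget scales linearly with $\norm{A_i B_i}=h/\lambda_i$, Proposition \ref{prop:calc_length_conv} and Definition \ref{def:interference_conv} give that the total length of $B_i B_{i+1}$ occupied by these two pyramid bases equals
\begin{equation*}
\norm{B_{i,\Rt}D_{i,\Rt}}+\norm{B_{i+1,\Lt}D_{i+1,\Lt}}=h\bigl(\kappa_\conv (B_{i,\Rt})+\kappa_\conv (B_{i+1,\Lt})\bigr) =h\cdot\kappa_\conv (B_i B_{i+1}).
\end{equation*}
Non-interference on edge $B_i B_{i+1}$ is therefore equivalent to $h\cdot\kappa_\conv (B_i B_{i+1})\leqslant\norm{B_i B_{i+1}}$, i.e.\ to $h\leqslant\norm{B_i B_{i+1}}/\kappa_\conv (B_i B_{i+1})$.

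Finally, taking the intersection of all these linear upper bounds on $h$ with the existence bound $h\leqslant h_{\max}$ yields
\begin{equation*}
h_\conv =\min\left\{\min_{i=1,\dots ,N}\frac{\norm{B_i B_{i+1}}}{\kappa_\conv (B_i B_{i+1})},\,h_{\max}\right\},
\end{equation*}
and conversely any larger $h$ violates at least one of these constraints, so Construction \ref{const:conv} must fail there. The main obstacle I expect is the justification that collision of the supporting pyramids on a shared bottom edge is the \emph{only} way two adjacent conventional gadgets can obstruct each other: one must verify that the visible (above-paper) portions of adjacent gadgets remain disjoint once their pyramid bases do, and that a pyramid on edge $B_i B_{i+1}$ cannot reach across to an edge it does not share. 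Both facts follow because each conventional gadget is contained between its two outgoing pleats $\ell_{i,\Lt}$ and $\ell_{i,\Rt}$ and the supporting pyramid of ridge $A_i B_i$ projects onto $B_i B_{i+1}$ only through the segment $B_{i,\Rt}D_{i,\Rt}$, but this geometric confinement is what requires careful justification before the above length equality can be invoked.
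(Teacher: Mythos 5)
Your proposal is correct and follows essentially the same route as the paper: the paper obtains Theorem \ref{thm:height_conv} by ``summarizing'' the preceding discussion, namely that the occupied length on each bottom edge scales linearly in $h$ with coefficient $\kappa_\conv (B_i B_{i+1})$, that non-interference on edge $B_i B_{i+1}$ is equivalent to $h\cdot\kappa_\conv (B_i B_{i+1})\leqslant\norm{B_i B_{i+1}}$, and that this must be combined with the existence bound $h\leqslant h_{\max}$. The geometric confinement point you flag at the end is likewise left implicit in the paper, so your treatment is, if anything, slightly more careful on that score.
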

\begin{theorem}\label{thm:height_new}
The maximum value $h_\new$ of the height $h$ of polyhedron $\Delta_h$ that can be extruded with the new $3$D gadgets at one time is given by
\begin{equation*}
h_\new =\min\left\{\min_{i=1,\dots ,N}\frac{\norm{B_i B_{i+1}}}{\kappa_\new (B_i B_{i+1})},h_{\max}\right\} 
\end{equation*}
if we choose the order of the outgoing pleats appropiately.
\end{theorem}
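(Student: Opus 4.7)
The plan is to establish the formula by combining two necessary constraints and then verifying their joint sufficiency. The first constraint, $h\leqslant h_{\max}$, is the geometric condition for the polyhedron $\Delta_h$ itself to exist, as described in the opening paragraphs of this section. The second comes from requiring that along each bottom edge $B_i B_{i+1}$ the two adjacent new gadgets do not interfere. Since each gadget's crease pattern is locally foldable around every interior vertex (by the verifications at the end of Section $\ref{sec:4}$), the global obstructions reduce to these edge-wise interferences plus $h_{\max}$.

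For the edge constraint I would analyze each $B_i B_{i+1}$ separately. Following the description around Figure $\ref{fig:gadget_new_interference}$, when the gadgets at ridges $A_i B_i$ and $A_{i+1} B_{i+1}$ meet along $B_i B_{i+1}$, one is folded as the inner pleat (so that the tongue segment $C I_\sigma$ lands on the edge with $C$ swinging onto $B_\sigma$) while the other is folded as the outer pleat (so that the ear tip $G_\sigma$ swings outward along the edge). By Proposition $\ref{prop:calc_length_new}$ and Definition $\ref{def:interference_new}$, the projected lengths scale linearly in $h$ with $\norm{A_i B_i} = h/\lambda_i$, yielding contributions of $h\cdot\kappa_\inn$ and $h\cdot\kappa_\out$. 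The non-interference condition at edge $B_i B_{i+1}$ therefore reads
\begin{equation*}
h\cdot\bigl(\kappa_\inn(B_{i,\Rt})+\kappa_\out(B_{i+1,\Lt})\bigr)\leqslant\norm{B_i B_{i+1}}\ \text{or}\ h\cdot\bigl(\kappa_\out(B_{i,\Rt})+\kappa_\inn(B_{i+1,\Lt})\bigr)\leqslant\norm{B_i B_{i+1}},
\end{equation*}
depending on which pleat is folded first. Taking the more favorable choice gives exactly $h\cdot\kappa_\new(B_i B_{i+1})\leqslant\norm{B_i B_{i+1}}$.

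To complete sufficiency I would argue that the inner/outer choices can be made independently at the $N$ edges. The key observation is that each gadget contributes its \emph{right} outgoing pleat to the edge on one side and its \emph{left} outgoing pleat to the edge on the other side, and these two pleats occupy physically disjoint regions of the paper; hence flipping the layering at edge $B_i B_{i+1}$ does not force any change at $B_{i-1} B_i$ or $B_{i+1} B_{i+2}$. Choosing at each edge the ordering that realizes the minimum in the definition of $\kappa_\new$ therefore produces a globally foldable crease pattern whenever $h$ satisfies all edge constraints together with $h\leqslant h_{\max}$. Intersecting these constraints yields the asserted formula for $h_\new$.

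The main obstacle I anticipate is the rigorous justification of independence of the per-edge layering choices. It is geometrically intuitive but requires a careful bookkeeping argument that (i) the tongue of the inner pleat and the ear of the outer pleat at a given edge do not conflict with pleats at any other edge, and (ii) the local flat-foldability verifications of Section $\ref{sec:4}$ are insensitive to the global layer ordering. Once these are recorded, the remainder is a direct combination of the linearity in $h$ of the interference lengths with the definition of $\kappa_\new$.
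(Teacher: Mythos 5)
Your proposal is correct and follows essentially the same route as the paper, whose ``proof'' of this theorem is just the summarizing discussion preceding it: the lengths occupied along each bottom edge scale linearly in $h$, non-interference at edge $B_i B_{i+1}$ is exactly $h\cdot\kappa_\new (B_i B_{i+1})\leqslant\norm{B_i B_{i+1}}$ with the inner/outer assignment chosen to realize the minimum in Definition $\ref{def:interference_new}$, and this is intersected with $h\leqslant h_{\max}$. The independence of the per-edge ordering choices, which you rightly single out as the delicate point, is likewise left implicit in the paper (it is what the clause ``if we choose the order of the outgoing pleats appropriately'' encodes), so your reconstruction is, if anything, slightly more explicit than the original.
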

\begin{remark}\rm
We defined interference coefficients as ratios to the height $h$ of the extrusion, which is a global variable in the extrusion.
However, dealing with $h$ is a little troublesome because it does not appear explicitly in the development of the polyhedron we want to extrude
and we have to compute each $\lambda_i$ in terms of the angles around ridge $A_i B_i$.
Instead of using $h$, we can also use the length of ridge $A_i B_i$ for any fixed $i$ in the definitions.
This is particularly useful when all $\norm{A_i B_i}$ are the same, 
which occurs for example when $\alpha_i =\alpha$ and $\beta_{i,\sigma}=\beta$ for all $i$ and $\sigma$.
Then we obtain the maximal ridge length of the extrusion instead of the maximal height.
\end{remark}
\begin{remark}\rm
Even if the ears and the tongue of a new gadget does not interfere with adjacent gadgets,
there may be another kind of interference: the outermost point $E_\sigma$ of an outgoing pleat 
may pass across the valley fold $m_{\sigma'}$ of the adjacent gadget.
However, this interference is \emph{avoidable} because we can fold back or sink with a crease parallel to the pleat 
through $G_\sigma$ if $\gamma_\sigma /2 >\delta_\sigma$ and $H_\sigma$ if $\gamma_\sigma /2\leqslant\delta_\sigma$,
where $H_\sigma$ is located closer to $\ell_\sigma$ than $G_\sigma$ if $\gamma_\sigma /2\leqslant\delta_\sigma$. 

If $\delta_\sigma >0$, then
we can fold back or sink the tongue of a new gadget with a crease through $H_\sigma E_{\sigma'}$ if $\delta_{\sigma'}=0$ and
$H_\sigma H_{\sigma'}$ if $\delta_{\sigma'}>0$, to reduce the inner interference coefficient $\kappa_\inn (B_\sigma )$.
But this is not considered here from an aesthetic viewpoint.
\end{remark}
\section{Downward compatibility of the new gadgets}\label{sec:6}
\begin{theorem}\label{thm:comparison_kappa}
Suppose $\delta_\Lt =\delta_\Rt =0$. 
Let $\norm{B_\sigma G_\sigma},\norm{C I_\sigma}$ be as given in Proposition $\ref{prop:calc_length_new}$, 
and $\norm{B_\sigma D_\sigma}$ be as given in Proposition $\ref{prop:calc_length_conv}$.
Then we have inequalities
\begin{equation}\label{ineq_1}
\norm{B_\sigma G_\sigma}\leqslant\norm{B_\sigma D_\sigma},\quad\norm{C I_\sigma}\leqslant\norm{B_\sigma D_\sigma},
\end{equation}
and hence
\begin{equation}\label{ineq_2}
\kappa_\out (B_\sigma )\leqslant\kappa_\conv (B_\sigma ),\quad\kappa_\inn (B_\sigma )\leqslant\kappa_\conv (B_\sigma ),
\end{equation}
where equality holds if $\beta_\sigma +\gamma /4=\pi /2$ for all inequalities.
\end{theorem}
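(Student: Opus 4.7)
The plan is to reduce both inequalities in \eqref{ineq_1} to the defining condition $\mathrm{(iv)}$ of Construction \ref{const:new}, which in the case $\delta_\Lt=\delta_\Rt=0$ takes the simplified form $\beta_\sigma+\gamma/4\geqslant\pi/2$. Once \eqref{ineq_1} is established, \eqref{ineq_2} will follow immediately, since the interference coefficients $\kappa_\out(B_\sigma),\kappa_\inn(B_\sigma),\kappa_\conv(B_\sigma)$ are the three lengths in \eqref{ineq_1} each divided by the common positive quantity $h=\lambda\,\norm{AB}$ (with $\lambda$ depending only on $\alpha,\beta_\Lt,\beta_\Rt$, not on which type of gadget is used).

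For the first inequality $\norm{B_\sigma G_\sigma}\leqslant\norm{B_\sigma D_\sigma}$, I would substitute the explicit expressions from \eqref{length_new_delta=0} and Proposition \ref{prop:calc_length_conv}, set $t=\tan(\gamma/4)$, and use $\tan(\gamma/2)=2t/(1-t^{2})$. Cross-multiplying by the positive quantities $\sin\beta_\sigma-t\cos\beta_\sigma$ and $\sin\beta_\sigma(1-t^{2})$ reduces the inequality to $t\sin\beta_\sigma\geqslant\cos\beta_\sigma$, i.e.\ $\tan(\gamma/4)\geqslant\cot\beta_\sigma$, which is precisely $\beta_\sigma+\gamma/4\geqslant\pi/2$, with equality iff $\beta_\sigma+\gamma/4=\pi/2$. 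This step is purely mechanical.

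The main obstacle is the second inequality $\norm{CI_\sigma}\leqslant\norm{B_\sigma D_\sigma}$, where the formulas involve $\cos(\beta_\sigma+\gamma/2)$ in the denominator and the algebra is less transparent. My plan is to cross-multiply by the two positive quantities $2\sin\beta_\sigma\cos(\gamma/2)$ and $-2\cos(\beta_\sigma+\gamma/2)\cos(\gamma/2)$ (recall $\cos(\beta_\sigma+\gamma/2)<0$ since $\beta_\sigma+\gamma/2>\pi/2$ under condition $\mathrm{(iv)}$ with $\gamma>0$), expand using $1-\cos(\gamma/2)=2\sin^{2}(\gamma/4)$ and $\sin(\gamma/2)=2\sin(\gamma/4)\cos(\gamma/4)$, and collect terms so that the inequality becomes $\sin(\beta_\sigma+\gamma/2)\leqslant\sin\beta_\sigma$. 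Then applying the identity $\sin(\beta_\sigma+\gamma/2)-\sin\beta_\sigma=2\cos(\beta_\sigma+\gamma/4)\sin(\gamma/4)$ converts this to $\cos(\beta_\sigma+\gamma/4)\leqslant0$, which (in the admissible range) is once again equivalent to $\beta_\sigma+\gamma/4\geqslant\pi/2$, with equality iff $\beta_\sigma+\gamma/4=\pi/2$.

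Finally, to conclude \eqref{ineq_2}, I would divide each inequality of \eqref{ineq_1} by $h=\lambda_i\norm{A_iB_i}$ using Definitions \ref{def:interference_conv} and \ref{def:interference_new}; the equality clause transports verbatim. The whole argument is thus a reduction of two distinct-looking geometric inequalities to the single trigonometric condition $\beta_\sigma+\gamma/4\geqslant\pi/2$ built into Construction \ref{const:new}, which explains conceptually why the new gadgets are never worse than Natan's.
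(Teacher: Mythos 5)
Your proposal is correct and takes essentially the same route as the paper: plug in the explicit length formulas of Propositions \ref{prop:calc_length_new} and \ref{prop:calc_length_conv} and reduce both inequalities of \eqref{ineq_1} by elementary trigonometry to condition $\mathrm{(iv)}$ of Construction \ref{const:new}, i.e.\ to the sign of $\cos (\beta_\sigma +\gamma /4)$, with equality exactly when $\beta_\sigma +\gamma /4=\pi /2$, and then divide by $h$ to get \eqref{ineq_2}. The only difference is cosmetic: you cross-multiply and reduce to $\tan (\gamma /4)\geqslant\cot\beta_\sigma$ and $\sin (\beta_\sigma +\gamma /2)\leqslant\sin\beta_\sigma$, whereas the paper computes the differences $\norm{B_\sigma D_\sigma}-\norm{B_\sigma G_\sigma}$ and $\norm{B_\sigma D_\sigma}-\norm{C I_\sigma}$ and factors them to reach the same criterion.
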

\begin{proof}
We may assume $\norm{AB}=1$.
For the first inequality in $\eqref{ineq_1}$, we calculate as
\begin{align*}
\norm{B_\sigma D_\sigma}-\norm{B_\sigma G_\sigma}&=\frac{\tan\gamma /2}{2\sin\beta_\sigma}-\frac{\sin (\gamma /4)}{\sin (\beta_\sigma -\gamma /4)}\\
&=\sin\frac{\gamma}{4}\cdot\left(\frac{\cos (\gamma /4)}{\cos (\gamma /2)\sin\beta_\sigma}-\frac{1}{\sin (\beta_\sigma -\gamma /4)}\right)\\
&=\frac{\sin (\gamma /4)}{\cos (\gamma /2)\sin\beta_\sigma\sin (\beta_\sigma -\gamma /4)}
\left\{\cos\frac{\gamma}{4}\sin\left(\beta_\sigma -\frac{\gamma}{4}\right)-\cos\frac{\gamma}{2}\sin\beta_\sigma\right\} ,
\end{align*}
where the coefficient outside the parentheses in the last line is positive. 
Using the product-to-sum and sum-to-product formulas, we deform the terms in the parentheses in the last line as
\begin{align*}
\cos\frac{\gamma}{4}\sin\left(\beta_\sigma -\frac{\gamma}{4}\right)&-\cos\frac{\gamma}{2}\sin\beta_\sigma\\
&=\frac{1}{2}\left\{\sin\beta_\sigma +\sin\left(\beta_\sigma -\frac{\gamma}{2}\right)\right\}
-\frac{1}{2}\left\{\sin\left(\beta_\sigma +\frac{\gamma}{2}\right)+\sin\left(\beta_\sigma -\frac{\gamma}{2}\right)\right\}\\
&=\frac{1}{2}\left\{\sin\beta_\sigma -\sin\left(\beta_\sigma -\frac{\gamma}{2}\right)\right\}
=-\cos\left(\beta_\sigma +\frac{\gamma}{4}\right)\sin\frac{\gamma}{4}\geqslant 0
\end{align*}
because of $\pi /2\leqslant\beta_\sigma +\gamma /4<\pi$ and $0<\gamma <\pi$ by Lemma $\ref{lem:condition_gamma}$,
where equality holds for $\beta_\sigma +\gamma /4=\pi /2$.
This proves the first inequality in $\eqref{ineq_1}$.

For the second inequality in $\eqref{ineq_1}$, we have
\begin{equation}\label{BD-CI}
\begin{aligned}
\norm{B_\sigma D_\sigma}&-\norm{C I_\sigma}=\frac{\tan\gamma /2}{2\sin\beta_\sigma}+\frac{1-\cos (\gamma /2)}{2\cos (\beta_\sigma +\gamma /2)\cos (\gamma /2)}\\
&=\frac{\sin (\gamma /4)}{\cos (\gamma /2)}\left(\frac{\cos (\gamma /4)}{\sin\beta_\sigma}-\frac{\sin (\gamma /4)}{-\cos (\beta_\sigma +\gamma /2)}\right)\\
&=\frac{\sin (\gamma /4)}{\cos (\gamma /2)}\cdot\frac{1}{-\cos (\beta_\sigma +\gamma /2)\sin\beta_\sigma}
\left\{\sin\frac{\gamma}{4}\sin\beta_\sigma +\cos\left(\beta_\sigma +\frac{\gamma}{4}\right)\cos\frac{\gamma}{4}\right\} ,
\end{aligned}
\end{equation}
where the coefficient outside the parentheses in the last line is positive. 
Now we define $\theta$ and $\phi_\sigma$ by
\begin{equation*}
\theta =\frac{\gamma}{4} ,\quad\phi_\sigma =\frac{\pi}{2}-\beta_\sigma -\frac{\gamma}{4}.
\end{equation*}
Then we have
\begin{equation*}
\beta_\sigma +\frac{\gamma}{2}=\pi -\alpha_\sigma =\frac{\pi}{2}+\theta +\phi_\sigma <\pi ,
\end{equation*}
so that $\theta$ and $\phi_\sigma$ move in the following range:
\begin{equation}\label{range_theta_phi}
0<\theta <\frac{\pi}{4},\quad 0\leqslant\phi_\sigma ,\quad 0<\theta +\phi_\sigma <\frac{\pi}{2}.
\end{equation}
The terms in the parentheses in the last line in $\eqref{BD-CI}$ become
\begin{align*}
\sin\frac{\gamma}{4}\sin\beta_\sigma +\cos\left(\beta_\sigma +\frac{\gamma}{4}\right)\cos\frac{\gamma}{4}
&=\sin\theta\cos (\theta -\phi_\sigma )-\sin (\theta +\phi_\sigma )\cos\theta\\
&=\sin\theta\sin\phi_\sigma (\sin\theta +\cos\theta )\geqslant 0,
\end{align*}
where equality holds for $\phi_\sigma =0$, i.e., $\beta_\sigma +\gamma /4=\pi /2$.
This proves the second inequality in $\eqref{ineq_1}$.
\end{proof}
Now we can state the \emph{downward compatibility theorem} of the new $3$D gadgets with the conventional ones as follows.
\begin{theorem}\label{thm:downward_compatibility}
Any conventional $3$D gadget can be replaced with our new $3$D gadget with the same outgoing pleats without affecting any other conventional $3$D gadget unless
\begin{equation}\label{exception}
\text{either}\quad\beta_\Lt +\frac{\gamma}{4}<\frac{\pi}{2}\quad\text{or}\quad\beta_\Rt +\frac{\gamma}{4}<\frac{\pi}{2}.
\end{equation}
holds. 
(It is impossible for both of the above inequalities to hold simultaneously.)
\end{theorem}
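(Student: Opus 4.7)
The plan is to establish the theorem by showing two things: first, that the exception \eqref{exception} is precisely the obstruction to \emph{constructing} a new gadget with the same outgoing pleats; and second, that outside the exception, the replacement never enlarges the footprint of the gadget along any bottom edge, so it cannot create interference with any adjacent (still conventional) gadget.

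First I would note that ``same outgoing pleats'' forces $\delta_\Lt = \delta_\Rt = 0$, whence Lemma \ref{lem:tan_gamma_sigma} gives $\gamma_\Lt = \gamma_\Rt = \gamma/2$ and condition $\mathrm{(iv)}$ of Construction \ref{const:new} reduces to $\beta_\sigma + \gamma/4 \geqslant \pi/2$ for both $\sigma = \Lt, \Rt$. Hence \eqref{exception} is exactly the failure of $\mathrm{(iv)}$, and the remaining exclusion in $\mathrm{(iv)}$ (simultaneous equality) corresponds via Proposition \ref{prop:flat_extrusion} to a degenerate flat extrusion, which is already forbidden by $\mathrm{(i)}$. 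I would then verify the parenthetical assertion by summing the two inequalities of \eqref{exception} to obtain $\beta_\Lt + \beta_\Rt + \gamma/2 < \pi$; substituting $\gamma = 2\pi - \alpha - \beta_\Lt - \beta_\Rt$ collapses this to $\beta_\Lt + \beta_\Rt < \alpha$, contradicting condition $\mathrm{(i)}$ of Construction \ref{const:conv}.

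For the main claim in the non-exceptional case, I would invoke Theorem \ref{thm:comparison_kappa}: with $\delta_\Lt = \delta_\Rt = 0$ we have $\kappa_\inn(B_\sigma) \leqslant \kappa_\conv(B_\sigma)$ and $\kappa_\out(B_\sigma) \leqslant \kappa_\conv(B_\sigma)$ for each $\sigma$. Therefore, along each shared bottom edge, the crease-pattern footprint contributed by the replaced gadget---whether as an ear of length $\norm{B_\sigma G_\sigma}$ or as a tongue of length $\norm{C I_\sigma}$---is bounded above by the footprint $\norm{B_\sigma D_\sigma}$ contributed by the original conventional gadget. Since the neighboring gadget is untouched, the total interference length on that edge after replacement is at most the original total, so if the original configuration folded without collision, so does the new one; no other gadget is affected, and the outgoing pleats $\ell_\sigma, k_\sigma$ agree with those of the original gadget by construction.

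The main obstacle, in my view, is the step comparing footprints along a shared edge when one gadget is new and the other is conventional: the inner/outer role of the new gadget must be chosen, and both possibilities must be controlled. This is precisely why Theorem \ref{thm:comparison_kappa} supplies \emph{both} $\kappa_\inn \leqslant \kappa_\conv$ and $\kappa_\out \leqslant \kappa_\conv$, making the estimate robust regardless of folding order and regardless of which side of the new gadget faces the conventional neighbor. Everything else follows transparently from the angle conditions already established in Construction \ref{const:conv}.
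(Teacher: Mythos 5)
Your proposal is correct and follows essentially the same route as the paper: the exception is identified as the failure of condition $\mathrm{(iv)}$ of Construction $\ref{const:new}$ with $\delta_\Lt=\delta_\Rt=0$, the parenthetical claim is reduced to a violation of condition $\mathrm{(i)}$, and non-interference with neighboring conventional gadgets is deduced from the inequalities $\kappa_\inn,\kappa_\out\leqslant\kappa_\conv$ of Theorem $\ref{thm:comparison_kappa}$. The only detail the paper makes explicit that you pass over quickly is why the outgoing pleats literally coincide (the point $C$ is the same in both constructions and the pleat widths are $\norm{BC_\Lt}/2$, $\norm{BC_\Rt}/2$ in each), but this is a minor verification rather than a gap.
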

\begin{proof}
Comparing the necessary conditions in Construction $\ref{const:conv}$ with those in Construction $\ref{const:new}$ for $\delta_\Lt =\delta_\Rt =0$, 
we see that an exceptional case where we cannot replace a conventional $3$D gadget 
with our new one occurs if condition $\mathrm{(iv)}$ of Construction $\ref{const:new}$ fails, i.e., if
\begin{equation*}
\beta_\Lt +\frac{\gamma}{4}<\frac{\pi}{2}\quad\text{or}\quad\beta_\Rt +\frac{\gamma}{4}<\frac{\pi}{2}.
\end{equation*}
Suppose $\beta_\Lt +\gamma /4<\pi /2$. 
Then using $\alpha_\Lt +\beta_\Lt +\gamma /2=\pi$ gives $\beta_\Lt <\alpha_\Lt$.
If $\beta_\Rt +\gamma /4<\pi /2$ also holds, then we have $\beta_\Rt <\alpha_\Rt$, so that $\beta_\Lt +\beta_\Rt <\alpha$,
which does not satisfy condition $\mathrm{(i)}$ of both constructions.

Thus unless $\eqref{exception}$ holds, we can replace a conventional gadget with our new one if we ignore other conventional gadgets.
However, the interference coefficients of the new gadgets are always smaller than or equal to those of the conventional ones 
as shown in Theorem $\ref{thm:comparison_kappa}$, the replacement does not affect any other existing conventional gadget.

Also, in both of Constructions $\ref{const:conv}$ and $\ref{const:new}$ with $\delta_\Lt =\delta_\Rt =0$, the location of point $C$ is the same.
Since the widths of the resulting outgoing pleats are given by $\norm{BC_\Lt}/2$ and $\norm{BC_\Rt}/2$ in both constructions,
the outgoing pleats are the same if we replace the conventional gadget with our new one.
This completes the proof of Theorem $\ref{thm:downward_compatibility}$.
\end{proof}
\begin{corollary}\label{cor:comparison_height}
Let $B_1,\dots ,B_N$ and $\alpha_i,\beta_{i,\sigma},\gamma_i$ be a convex polygon and angles for $i=1,\dots ,N$ and $\sigma =\Lt ,\Rt$.
Suppose conditions $\mathrm{(i)}$--$\mathrm{(iii)}$ hold for all $i$ and $\sigma$. 
Then the maximum value $h_\new$ of the height $h$ of $\Delta_h$ that can be extruded by Construction $\ref{const:new}$ is always greater than or equal to
the maximum value $h_\conv$ of that by Construction $\ref{const:conv}$:
\begin{equation}\label{comparison_height}
h_\new\geqslant h_\conv .
\end{equation}
In particular, if $\beta_{i,\sigma}+\gamma_i /4>\pi /2$ for all $i=1,\dots ,N$ and $\sigma =\Lt ,\Rt$, and $h_\conv <h_{\max}$, 
then we can exclude the equality in $\eqref{comparison_height}$, i.e., $h_\new >h_\conv$. 
\end{corollary}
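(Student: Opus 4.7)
The plan is to reduce the corollary to a per-edge inequality on interference coefficients, and then feed this into the height formulas of Theorems \ref{thm:height_conv} and \ref{thm:height_new}. The essential input is Theorem \ref{thm:comparison_kappa}, which compares the vertex coefficients $\kappa_\inn(B_\sigma),\kappa_\out(B_\sigma)$ with $\kappa_\conv(B_\sigma)$ when $\delta_\Lt=\delta_\Rt=0$.

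First I would fix a bottom edge $B_iB_{i+1}$ and apply Theorem \ref{thm:comparison_kappa} to each of its two endpoints, obtaining
\begin{equation*}
\kappa_\inn(B_{i,\Rt})\leqslant\kappa_\conv(B_{i,\Rt}),\quad\kappa_\out(B_{i+1,\Lt})\leqslant\kappa_\conv(B_{i+1,\Lt}),
\end{equation*}
together with the same pair of inequalities after interchanging $\inn$ and $\out$. Adding each pair and invoking the definitions of $\kappa_\conv(B_iB_{i+1})$ and $\kappa_\new(B_iB_{i+1})$ from Definitions \ref{def:interference_conv} and \ref{def:interference_new}, I get
\begin{equation*}
\kappa_\new(B_iB_{i+1})\leqslant\kappa_\conv(B_iB_{i+1})\quad\text{for every }i=1,\dots ,N,
\end{equation*}
with equality only when $\beta_{i,\Rt}+\gamma_i/4=\pi/2$ and $\beta_{i+1,\Lt}+\gamma_{i+1}/4=\pi/2$ simultaneously (by the equality clause of Theorem \ref{thm:comparison_kappa}).

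Next I would divide by $\norm{B_iB_{i+1}}>0$, take reciprocals, and minimize over $i$. Combined with the common cap $h_{\max}$ and the explicit formulas of Theorems \ref{thm:height_conv} and \ref{thm:height_new}, this yields $h_\new\geqslant h_\conv$, proving \eqref{comparison_height}. For the strict inequality under the hypothesis $\beta_{i,\sigma}+\gamma_i/4>\pi/2$ for all $i,\sigma$, the equality clause of Theorem \ref{thm:comparison_kappa} is ruled out at every vertex, so the per-edge inequality becomes strict for every $i$; the hypothesis $h_\conv<h_\max$ ensures that the minimum realizing $h_\conv$ is attained on the edge terms rather than on $h_\max$, and therefore the minima propagate strictly, giving $h_\new>h_\conv$.

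The one point that will need careful handling is ensuring that Construction \ref{const:new} is actually applicable at every ridge, i.e.\ that condition $(\mathrm{iv})$ holds there, since the hypothesis only lists $(\mathrm{i})$--$(\mathrm{iii})$. I expect this to be the main obstacle in the argument, but it is handled by Theorem \ref{thm:downward_compatibility}: the downward compatibility theorem shows that $\beta_{i,\Lt}+\gamma_i/4<\pi/2$ and $\beta_{i,\Rt}+\gamma_i/4<\pi/2$ cannot hold simultaneously, so at each ridge we may invoke the replacement (keeping a conventional gadget only at the at most one exceptional side, where the coefficient equality $\kappa_\new=\kappa_\conv$ is forced anyway). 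With this caveat absorbed into the bookkeeping, the proof reduces to the three clean steps above.
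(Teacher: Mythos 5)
Your main chain is exactly the paper's intended derivation: the paper states Corollary \ref{cor:comparison_height} without a separate proof, as an immediate consequence of Theorem \ref{thm:comparison_kappa} (vertexwise $\kappa_\inn ,\kappa_\out\leqslant\kappa_\conv$ for $\delta_\Lt =\delta_\Rt =0$), the definitions of the edge coefficients $\kappa_\conv (B_iB_{i+1})$ and $\kappa_\new (B_iB_{i+1})$, and the min-formulas of Theorems \ref{thm:height_conv} and \ref{thm:height_new}; your treatment of the strict case (strict vertex inequalities when $\beta_{i,\sigma}+\gamma_i/4>\pi /2$, finiteness of the set of edges, and $h_\conv <h_{\max}$ so that the cap is not the active term) is also correct. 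The one blemish is your parenthetical fix for ridges where condition $(\mathrm{iv})$ fails: a gadget is attached to a whole ridge $A_iB_i$, so you cannot keep a conventional gadget ``on the one exceptional side'' while using a new gadget on the other side of the same ridge, and $\kappa_\new =\kappa_\conv$ is certainly not forced there --- equality in Theorem \ref{thm:comparison_kappa} occurs precisely at $\beta_\sigma +\gamma /4=\pi /2$, whereas in the exceptional case $\beta_\sigma +\gamma /4<\pi /2$ the new gadget (with $\delta_\Lt =\delta_\Rt =0$) and hence its coefficients are simply undefined. The honest reading, and evidently the paper's, is that the corollary tacitly assumes Construction \ref{const:new} is applicable at every ridge, i.e.\ condition $(\mathrm{iv})$ holds there; in the strict part this is automatic from the hypothesis $\beta_{i,\sigma}+\gamma_i/4>\pi /2$, and Theorem \ref{thm:downward_compatibility} only tells you that at most one side of a ridge can be exceptional, not how to salvage such a ridge. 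With that caveat restated, your proof is the paper's argument.
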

The latter part of Corollary $\ref{cor:comparison_height}$ applies to the case of prisms, 
which we will deal with in more detail in the next section and refine inequality $\eqref{comparison_height}$ in Corollary $\ref{cor:comparison_height_prism}$.
\section{Numerical comparison of the efficiency for prisms}\label{sec:7}
In this section we restrict ourselves to extruding a \emph{prism} $\Delta_h$ of a convex polygon $B_1\dots B_N$. 
Thus we assume $\beta_{i,\sigma}=\pi /2$ for all $i=1,\dots ,N$ and $\sigma =\Lt ,\Rt$, from which it follows that $\lambda_i=1, h_{\max}=\infty$.
Note that conditions $\mathrm{(i)}$--$\mathrm{(iii)}$ of Construction $\ref{const:conv}$ and 
conditions $\mathrm{(i)}$--$\mathrm{(iv)}$ of Construction $\ref{const:new}$ hold automatically. 

We already know from Corollary $\ref{cor:comparison_height}$ that the new gadgets are more efficient than the conventional ones.
To compare the efficiency numerically, we calculate the ratio $h_\new^\perp /h_\conv^\perp $ of the maximal heights.
For $N=3$, we have the following formula for $h_\conv^\perp$.
\begin{theorem}
For any triangle,
the maximal height of its prism that can be extruded with the conventional $3$D gadgets at one time is the diameter of the incircle of the triangle.
By Heron's formula, for a triangle with side lengths a,b,c, the maximal height $h_\conv^\perp$ is given by
\begin{equation*}
h_\conv^\perp =\sqrt{\frac{(a+b-c)(b+c-a)(c+a-b)}{a+b+c}}.
\end{equation*}
\end{theorem}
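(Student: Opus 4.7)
The plan is to unpack the definition of $h_\conv^\perp$ via Theorem \ref{thm:height_conv} and Proposition \ref{prop:calc_length_conv}, observe that the resulting per-edge quantity depends only on the interior angles of the triangle, and then recognize this quantity as the diameter of the incircle.

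First I would note that for a prism we have $\lambda_i = 1$ and $h_{\max}=\infty$, so
\begin{equation*}
h_\conv^\perp = \min_{i=1,2,3}\frac{\norm{B_i B_{i+1}}}{\kappa_\conv^\perp(B_i B_{i+1})}.
\end{equation*}
The identity $\alpha_i+\beta_{i,\Lt}+\beta_{i,\Rt}+\gamma_i=2\pi$ together with $\beta_{i,\sigma}=\pi/2$ gives $\gamma_i=\pi-\alpha_i$. Proposition \ref{prop:calc_length_conv} in the prism case gives $\norm{B_\sigma D_\sigma}=\tfrac{1}{2}\tan(\gamma/2)$ when $\norm{AB}=1$; restoring scale via $\norm{A_i B_i}=h$ (because $\lambda_i=1$) and plugging into Definition \ref{def:interference_conv} yields
\begin{equation*}
\kappa_\conv^\perp(B_i) = \tfrac{1}{2}\tan\!\bigl((\pi-\alpha_i)/2\bigr) = \tfrac{1}{2}\cot(\alpha_i/2).
\end{equation*}

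Second, I would invoke the classical incircle identity: in a triangle $B_1B_2B_3$ with inradius $r$, the tangent length from vertex $B_i$ to the incircle equals $r\cot(\alpha_i/2)$. Summing the tangent contributions from the two endpoints of each edge gives
\begin{equation*}
\norm{B_i B_{i+1}} = r\bigl[\cot(\alpha_i/2)+\cot(\alpha_{i+1}/2)\bigr],
\end{equation*}
so that for every edge
\begin{equation*}
\frac{\norm{B_i B_{i+1}}}{\kappa_\conv^\perp(B_iB_{i+1})} = \frac{r\bigl[\cot(\alpha_i/2)+\cot(\alpha_{i+1}/2)\bigr]}{\tfrac{1}{2}\bigl[\cot(\alpha_i/2)+\cot(\alpha_{i+1}/2)\bigr]} = 2r.
\end{equation*}
All three candidate values coincide, so no minimum is actually being taken, and $h_\conv^\perp=2r$, the diameter of the incircle.

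Finally, I would apply Heron's formula $r=K/s$ with $s=(a+b+c)/2$ and $K=\sqrt{s(s-a)(s-b)(s-c)}$. A direct simplification of $2r=2K/s$ using $s-a=(b+c-a)/2$, $s-b=(c+a-b)/2$, $s-c=(a+b-c)/2$ produces the stated radical. The main feature that makes this proof short is the coincidence that the per-edge ratio is the same constant $2r$ for all three edges; there is no genuine obstacle, only the recognition of the incircle-tangent identity, which short-circuits what would otherwise be a minimum computation.
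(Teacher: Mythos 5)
Your proposal is correct and follows essentially the same route as the paper: it reduces each vertex coefficient to $\tfrac12\cot(\alpha_i/2)$ (the paper writes this as the tangent length over $2r$ via the incircle tangency points), observes that every edge ratio $\norm{B_iB_{i+1}}/\kappa_\conv^\perp(B_iB_{i+1})$ equals $2r$, and finishes with Heron's formula. No substantive difference in method or content.
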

\begin{proof}
Consider a triangle $\triangle ABC$ with $\norm{AB}=c,\norm{BC}=a,\norm{CA}=b$. 
Let $r$ be the radius of the incircle of $\triangle ABC$.
Suppose the tangency points of the incircle divide sides $AB,BC,CA$ into lengths of $s$ and $t$, $t$ and $u$, and $u$ and $s$ respectively,
so that $s+t=c,t+u=a,u+s=b$.
Then we have
\begin{equation*}
\tan\frac{\angle CAB}{2}=\frac{r}{s}, \quad\tan\frac{\angle ABC}{2}=\frac{r}{t}, \quad\tan\frac{\angle BCA}{2}=\frac{r}{u}. 
\end{equation*}
Thus the interference coefficients of the vertices are calculated as
\begin{align*}
\kappa_\conv^\perp (A)=\frac{1}{2}\tan\left(\frac{\pi}{2}-\frac{\angle CAB}{2}\right) =\frac{1}{2}\left(\tan\frac{\angle CAB}{2}\right)^{-1}=\frac{s}{2r}, 
\end{align*}
and similarly $\kappa_\conv^\perp (B)=t/2r, \kappa_\conv^\perp (C)=u/2r$. 
It follows that
\begin{equation*}
\kappa_\conv^\perp (AB)=\frac{s+t}{2r}=\frac{c}{2r},\quad\kappa_\conv^\perp (BC)=\frac{s+u}{2r}=\frac{a}{2r},
\quad\kappa_\conv^\perp (CA)=\frac{t+u}{2r}=\frac{b}{2r}.
\end{equation*}
Consequently, we have
\begin{equation}\label{kappa_conv_triangle}
\frac{\norm{AB}}{\kappa_\conv^\perp (AB)}=\frac{\norm{BC}}{\kappa_\conv^\perp (BC)}=\frac{\norm{CA}}{\kappa_\conv^\perp (CA)}=2r,
\end{equation}
which gives that $h_\conv^\perp =2r$.
\end{proof}
Similarly, we can calculate $h_\conv^\perp$ for $N>3$ as follows. 
\begin{theorem}
Let $P=B_1\dots B_N$ be a convex polygon. 
For $i=1,\dots ,N$, define $k_{i,i\pm 1}$ to be the ray starting from $B_i$ and passing through $B_{i\pm 1}$ respectively.
Let $r_{i,i+1}$ be the radius of the circle tangent to side $B_i B_{i+1}$ of $P$ and rays $k_{i,i-1},k_{i+1,i+2}$.
Then we have 
\begin{equation}\label{kappa_conv_r}
\kappa_\conv^\perp =\frac{\norm{B_i B_{i+1}}}{2r_{i,i+1}},
\end{equation}
where $2r_{i,i+1}$ is given by
\begin{equation}\label{expr:2r}
2r_{i,i+1}=\norm{B_i B_{i+1}}\cdot\frac{2\sin (\alpha_i /2)\sin (\alpha_{i+1}/2)}{\sin ((\alpha_i +\alpha_{i+1})/2)}.
\end{equation}
Consequently we have
\begin{equation*}
h_\conv^\perp =\min_{i=1,\dots ,N}2r_{i,i+1}.
\end{equation*}
\end{theorem}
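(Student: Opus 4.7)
The plan is to compute $\kappa_\conv^\perp(B_i B_{i+1})$ explicitly in closed form and verify that its reciprocal, multiplied by $\norm{B_i B_{i+1}}$, equals $2r_{i,i+1}$; then the formula for $h_\conv^\perp$ follows immediately from Theorem $\ref{thm:height_conv}$ since $h_{\max}=\infty$ for a prism. The geometric picture is that edge $B_i B_{i+1}$, together with the two rays $k_{i,i-1}$ and $k_{i+1,i+2}$ extending the adjacent sides, forms the same configuration as the three sides of a triangle did in the previous theorem, so the argument should reduce edge-by-edge to an incircle-style calculation.

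First I would specialize Proposition $\ref{prop:calc_length_conv}$ to the prism setting. Since $\beta_{i,\sigma}=\pi/2$ forces $\gamma_i=\pi-\alpha_i$, we obtain
\begin{equation*}
\kappa_\conv^\perp(B_i)=\frac{1}{2}\tan\frac{\gamma_i}{2}=\frac{1}{2}\cot\frac{\alpha_i}{2},
\end{equation*}
independent of the side $\sigma=\Lt,\Rt$. Adding the contributions at the two endpoints of the edge $B_i B_{i+1}$ and combining the cotangents over a common denominator yields
\begin{equation*}
\kappa_\conv^\perp(B_i B_{i+1})=\frac{1}{2}\left(\cot\frac{\alpha_i}{2}+\cot\frac{\alpha_{i+1}}{2}\right)=\frac{\sin((\alpha_i+\alpha_{i+1})/2)}{2\sin(\alpha_i/2)\sin(\alpha_{i+1}/2)}.
\end{equation*}
Taking the reciprocal and multiplying by $\norm{B_i B_{i+1}}$ gives exactly the right-hand side of $\eqref{expr:2r}$.

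Next I would identify this quantity geometrically with $2r_{i,i+1}$. The circle inscribed in the wedge bounded by segment $B_i B_{i+1}$ and the rays $k_{i,i-1}$, $k_{i+1,i+2}$ has its center on the two angle bisectors at $B_i$ and $B_{i+1}$; the tangent points on $B_i B_{i+1}$ therefore lie at distances $r_{i,i+1}\cot(\alpha_i/2)$ and $r_{i,i+1}\cot(\alpha_{i+1}/2)$ from the two endpoints. Summing them equates $\norm{B_i B_{i+1}}$ to $r_{i,i+1}(\cot(\alpha_i/2)+\cot(\alpha_{i+1}/2))$, and solving for $r_{i,i+1}$ reproduces $\eqref{expr:2r}$. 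Combining with the closed form for $\kappa_\conv^\perp(B_i B_{i+1})$ just derived yields $\norm{B_i B_{i+1}}/\kappa_\conv^\perp(B_i B_{i+1})=2r_{i,i+1}$, which is $\eqref{kappa_conv_r}$. The formula for $h_\conv^\perp$ then follows by minimizing over $i$ via Theorem $\ref{thm:height_conv}$.

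The computation is essentially routine trigonometry, so there is no deep obstacle. The point I would want to verify carefully is that the circle tangent to $B_i B_{i+1}$, $k_{i,i-1}$, $k_{i+1,i+2}$ really sits on the interior side of the edge and realizes an \emph{inscribed} (rather than escribed) configuration; this is where the convexity of $P$ and the bounds $0<\alpha_i,\alpha_{i+1}<\pi$ enter, ensuring that both bisectors meet on the same side of $B_i B_{i+1}$ and that the tangent distances $r_{i,i+1}\cot(\alpha_i/2)$ and $r_{i,i+1}\cot(\alpha_{i+1}/2)$ are positive. Once that geometric point is dispatched, the theorem reduces to the bookkeeping above.
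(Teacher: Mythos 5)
Your proposal is correct, and its key computational step takes a genuinely different (and shorter) route than the paper's. Both proofs start the same way: in the prism setting $\beta_{i,\sigma}=\pi/2$ gives $\gamma_i=\pi-\alpha_i$, so by Proposition $\ref{prop:calc_length_conv}$ each vertex contributes $\kappa_\conv^\perp (B_i)=\tfrac{1}{2}\tan\tfrac{\gamma_i}{2}=\tfrac{1}{2}\cot\tfrac{\alpha_i}{2}$, and the edge coefficient is the sum over the two endpoints; the paper phrases this via the equal tangent lengths of the circle (as in the triangle case it refers back to), while you simply add the cotangents. The difference is in how $\eqref{expr:2r}$ is obtained. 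The paper introduces the intersection point $T_{i,i+1}$ of the rays $k_{i,i-1},k_{i+1,i+2}$, splits into the three cases $\alpha_i+\alpha_{i+1}<\pi$ (incircle of $\triangle B_iB_{i+1}T_{i,i+1}$), $>\pi$ (excircle), and $=\pi$ (parallel rays), and then grinds through the sine rule and Heron-type inradius/exradius formulas with half-angle substitutions to reach $\eqref{expr:2r}$. You instead observe that the center of the tangent circle lies on the internal bisectors at $B_i$ and $B_{i+1}$, so the single tangency point on the side is at distances $r_{i,i+1}\cot(\alpha_i/2)$ and $r_{i,i+1}\cot(\alpha_{i+1}/2)$ from the two endpoints, whence $\norm{B_iB_{i+1}}=r_{i,i+1}\bigl(\cot(\alpha_i/2)+\cot(\alpha_{i+1}/2)\bigr)$; this yields $\eqref{expr:2r}$ and $\eqref{kappa_conv_r}$ simultaneously, uniformly in $\alpha_i+\alpha_{i+1}$, with no case analysis and no Heron-type algebra. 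Your flagged verification (that convexity, i.e.\ $0<\alpha_i,\alpha_{i+1}<\pi$, forces the two bisectors to meet on the interior side with the tangency point strictly between $B_i$ and $B_{i+1}$) is exactly the point that needs checking, and it does hold, so the argument is complete; the final passage to $h_\conv^\perp=\min_i 2r_{i,i+1}$ via Theorem $\ref{thm:height_conv}$ and $h_{\max}=\infty$ is the same in both proofs. In short, your route buys a case-free, purely synthetic derivation of $\eqref{expr:2r}$, whereas the paper's computation, though longer, makes explicit the incircle/excircle interpretation of the tangent circle relative to $\triangle B_iB_{i+1}T_{i,i+1}$.
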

\begin{proof}
Derivation of $\eqref{kappa_conv_r}$ is almost the same as that of $\eqref{kappa_conv_triangle}$.
Thus it remains to compute $r_{i,i+1}$.

Suppose $\alpha_i +\alpha_{i+1}\neq\pi$ and let $T_{i,i+1}$ be the intersection point of $k_{i,i-1}$ and $k_{i+1,i+2}$. 
Then the circle tangent to $B_i B_{i+1}, k_{i,i-1},k_{i+1,i+2}$ is the incircle of $\triangle B_i B_{i+1} T_{i,i+1}$ if $\alpha_i+\alpha_{i+1}<\pi$,
and the excircle $\triangle B_i B_{i+1} T_{i,i+1}$ tangent to $B_i B_{i+1}$ if $\alpha_i+\alpha_{i+1}>\pi$.
By the sine theorem, we have
\begin{align}
\frac{\norm{B_i T_{i,i+1}}}{\sin\alpha'_i}&=\frac{\norm{B_{i+1}T_{i,i+1}}}{\sin\alpha'_{i+1}}=\frac{\norm{B_i B_{i+1}}}{\sin (\alpha'_i+\alpha'_{i+1})},
\quad\text{where}\\
(\alpha'_i,\alpha'_{i+1})&=\begin{cases}(\alpha_i,\alpha_{i+1})&\text{if }\alpha_i+\alpha_{i+1}<\pi ,\\
(\pi -\alpha_i ,\pi -\alpha_{i+1})&\text{if }\alpha_i+\alpha_{i+1}>\pi .\end{cases}
\end{align}
Now we set $P,Q,R,S$ by
\begin{align*}
P&=\phantom{-}\sin\alpha'_i +\sin\alpha'_{i+1}+\sin (\alpha'_i+\alpha'_{i+1}),\\
Q&=-\sin\alpha'_i +\sin\alpha'_{i+1}+\sin (\alpha'_i+\alpha'_{i+1}),\\
R&=\phantom{-}\sin\alpha'_i -\sin\alpha'_{i+1}+\sin (\alpha'_i+\alpha'_{i+1}),\\
S&=\phantom{-}\sin\alpha'_i +\sin\alpha'_{i+1}-\sin (\alpha'_i+\alpha'_{i+1}).
\end{align*}
Setting $s_1=\sin (\alpha'_i /2),s_2=\sin (\alpha'_{i+1}/2),c_1=\cos (\alpha'_i/2),c_2=\cos (\alpha'_{i+1}/2)$ gives that
\begin{align*}
P&=\phantom{-}\sin\alpha'_i (1+\cos\alpha'_{i+1})+\sin\alpha'_{i+1}(1+\cos\alpha'_i)=4c_1c_2(s_1c_2+s_2c_1),\\
Q&=-\sin\alpha'_i (1-\cos\alpha'_{i+1})+\sin\alpha'_{i+1}(1+\cos\alpha'_i)=4c_1s_2(c_1c_2-s_1s_2),\\
R&=\phantom{-}\sin\alpha'_i (1+\cos\alpha'_{i+1})-\sin\alpha'_{i+1}(1-\cos\alpha'_i)=4c_2s_1(c_1c_2-s_1s_2),\\
S&=\phantom{-}\sin\alpha'_i (1-\cos\alpha'_{i+1})+\sin\alpha'_{i+1}(1-\cos\alpha'_i)=4s_1s_2(s_1c_2+c_2c_1)
\end{align*}
by a straightforward calculation.
Then by the formulas for the inradius and exradii, we have
\begin{equation*}
2r_{i,i+1}=\begin{cases}
\displaystyle\frac{\norm{B_i B_{i+1}}}{\sin (\alpha'_i+\alpha'_{i+1})}\sqrt{\frac{QRS}{P}}&\text{if }\alpha_i+\alpha_{i+1}<\pi ,\\
\displaystyle\frac{\norm{B_i B_{i+1}}}{\sin (\alpha'_i+\alpha'_{i+1})}\sqrt{\frac{PQR}{S}}&\text{if }\alpha_i+\alpha_{i+1}>\pi ,
\end{cases}
\end{equation*}
where 
\begin{align*}
\sin (\alpha'_i+\alpha'_{i+1})&=\begin{cases}
\displaystyle\phantom{-}2\sin\frac{\alpha_i+\alpha_{i+1}}{2}\cos\frac{\alpha_i+\alpha_{i+1}}{2}&\text{if }\alpha_i+\alpha_{i+1}<\pi ,\\
\displaystyle -2\sin\frac{\alpha_i+\alpha_{i+1}}{2}\cos\frac{\alpha_i+\alpha_{i+1}}{2}&\text{if }\alpha_i+\alpha_{i+1}>\pi ,\end{cases}\\
\sqrt{\frac{QRS}{P}}&=4\norm{s_1s_2(c_1c_2-s_1s_2)}\\
&=4\sin\frac{\alpha_i}{2}\sin\frac{\alpha_{i+1}}{2}\cos\frac{\alpha_i+\alpha_{i+1}}{2}\quad\text{if }\alpha_i+\alpha_{i+1}<\pi ,\\
\sqrt{\frac{PQR}{S}}&=4\norm{c_1c_2(c_1c_2-s_1s_2)}\\
&=-4\sin\frac{\alpha_i}{2}\sin\frac{\alpha_{i+1}}{2}\cos\frac{\alpha_i+\alpha_{i+1}}{2}\quad\text{if }\alpha_i+\alpha_{i+1}>\pi .
\end{align*}
Thus in both cases $2r_{i,i+1}$ is written as in $\eqref{expr:2r}$.
If $\alpha_i+\alpha_{i+1}=\pi$, then $2r_{i,i+1}$ is the distance between two rays $k_{i,i-1}$ and $k_{i+1,i+2}$, and so
\begin{equation*}
2r_{i,i+1}=\norm{B_i B_{i+1}}\sin\alpha_i =\norm{B_i B_{i+1}}\sin\alpha_{i+1},
\end{equation*}
which is equal to the expression in $\eqref{expr:2r}$ for $\alpha_i+\alpha_{i+1}=\pi$.
\end{proof}
\begin{theorem}\label{thm:comparison_kappa_perp}
Let $P=B_1\dots B_N$ be a convex polygon. 
For any side $B_i B_{i+1}$ of $P$, we have an inequality
\begin{align*}
\kappa_\new^\perp (B_i B_{i+1})&<\frac{3}{4}\kappa_\conv^\perp (B_i B_{i+1}),\quad\text{where }
\frac{\kappa_\new^\perp (B_i B_{i+1})}{\kappa_\conv^\perp (B_i B_{i+1})}\to\frac{3}{4}\text{ as }\alpha_i ,\alpha_{i+1}\to\pi .
\end{align*}
In particular, if $N=3$, i.e., if $P$ is a triangle, then for any $i$ we have
\begin{align*}
\kappa_\new^\perp (B_i B_{i+1})<\frac{1}{\sqrt{2}}\kappa_\conv^\perp (B_i B_{i+1}),\quad\text{where }
\frac{\kappa_\new^\perp (B_i B_{i+1})}{\kappa_\conv^\perp (B_i B_{i+1})}\to\frac{1}{\sqrt{2}}\text{ as }\alpha_i ,\alpha_{i+1}\to\frac{\pi}{2}.
\end{align*}
\end{theorem}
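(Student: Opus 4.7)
The plan is to use the specialized prism formulas $\eqref{length_new_prism}$ of Proposition~\ref{prop:calc_length_new} together with Proposition~\ref{prop:calc_length_conv} to rewrite the interference coefficients as explicit tangent expressions, and then to reduce the theorem to a single per-edge inequality by bounding the minimum defining $\kappa_\new^\perp(B_iB_{i+1})$ by the arithmetic mean of its two candidates. Since $\beta_{i,\sigma}=\pi/2$ forces $\gamma_i=\pi-\alpha_i$ and $\lambda_i=1$, setting $u_i=\gamma_i/4\in(0,\pi/4)$ yields $\kappa_\conv^\perp(B_i)=\tfrac12\tan 2u_i$, $\kappa_\out^\perp(B_i)=\tan u_i$, and $\kappa_\inn^\perp(B_i)=\tfrac12\tan 2u_i-\tfrac12\tan u_i$. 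The two sums entering $\kappa_\new^\perp(B_iB_{i+1})$ total $\kappa_\conv^\perp(B_iB_{i+1})+\tfrac12(\tan u_i+\tan u_{i+1})$, so
\[
\kappa_\new^\perp(B_iB_{i+1})\le\tfrac12\kappa_\conv^\perp(B_iB_{i+1})+\tfrac14(\tan u_i+\tan u_{i+1}),
\]
and it therefore suffices to prove $\tan u_i+\tan u_{i+1}<c\cdot\kappa_\conv^\perp(B_iB_{i+1})$ with $c=1$ for the $3/4$ bound and $c=2(\sqrt2-1)$ for the triangle $1/\sqrt2$ bound.

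The $3/4$ bound is immediate term by term: a double-angle calculation gives $\tfrac12\tan 2u-\tan u=\tan^3 u/(1-\tan^2 u)>0$ for $u\in(0,\pi/4)$, which summed over the two vertices yields the claim. Moreover, the symmetric leading-order expansion $u_i=u_{i+1}\to 0$ shows that the avg bound is attained in the limit, giving the claimed asymptotic $\to 3/4$ as $\alpha_i,\alpha_{i+1}\to\pi$. For the triangle case, the angle sum $\alpha_i+\alpha_{i+1}+\alpha_{i+2}=\pi$ with $\alpha_{i+2}>0$ gives $\gamma_i+\gamma_{i+1}>\pi$, i.e., $u_i+u_{i+1}>\pi/4$. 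Setting $p=(u_i+u_{i+1})/2$, $q=(u_i-u_{i+1})/2$ and applying sum-to-product, I reduce the target, after cancelling the positive factor $\sin 2p$, to
\[
(2-\sqrt2)P^2-(\sqrt2-1)PQ+Q^2<1,\qquad P=\cos 2p,\ Q=\cos 2q.
\]

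As a quadratic in $Q$ the LHS is minimized at $Q_*=(\sqrt2-1)P/2$. The admissibility constraints $u_i,u_{i+1}\in(0,\pi/4)$ combined with $u_i+u_{i+1}>\pi/4$ force $|2q|<\pi/2-2p$ and hence $Q>\sin 2p=\sqrt{1-P^2}>\sqrt2/2>Q_*$, so the LHS is strictly increasing in $Q$ on the admissible range and is maximized in the symmetric slice $Q\to 1$, where the inequality collapses to $P<\sqrt2/2$, i.e., $p>\pi/8$---exactly the triangle hypothesis. Equality holds only in the degenerate limit $u_i=u_{i+1}=\pi/8$, i.e., $\alpha_i=\alpha_{i+1}=\pi/2$, which simultaneously produces the claimed asymptotic ratio $1/\sqrt2$. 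The main obstacle is precisely this last monotonicity step: reducing a two-variable inequality to its symmetric slice relies on the rather sharp range bound $Q>\sqrt{1-P^2}$, and this is where the triangle-specific constraint $u_i+u_{i+1}>\pi/4$ really bites---without it, the minimizer $Q_*$ would fall inside the admissible range and the symmetric reduction would fail.
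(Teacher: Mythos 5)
Your proof is correct, and its opening move---replacing the minimum defining $\kappa_\new^\perp(B_iB_{i+1})$ by the arithmetic mean of its two candidates---is exactly the paper's averaged coefficient $\kappa_\ave^\perp$, computed from the same prism formulas of Propositions $\ref{prop:calc_length_new}$ and $\ref{prop:calc_length_conv}$. Where you genuinely diverge is in how the bound on $\kappa_\ave^\perp/\kappa_\conv^\perp$ is established. The paper works with $s=\tan(\gamma_i/4)$, $t=\tan(\gamma_{i+1}/4)$, passes to $u=st$, $v=s+t$, determines the admissible domain $U$ (with the triangle constraint $1-u<v$) and its lower boundary, reads off the suprema $3/4$ and $1/\sqrt{2}$ from that boundary analysis, and then needs an extra diagonal argument to transfer the supremum from $\kappa_\ave^\perp$ back to $\kappa_\new^\perp$. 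You instead prove the pointwise strict inequalities $\kappa_\ave^\perp<\tfrac{3}{4}\kappa_\conv^\perp$ and, for triangles, $\kappa_\ave^\perp<\tfrac{1}{\sqrt{2}}\kappa_\conv^\perp$ directly: the first by the term-by-term estimate $\tan u<\tfrac{1}{2}\tan 2u$, the second by the substitution $p=(u_i+u_{i+1})/2$, $q=(u_i-u_{i+1})/2$, cancellation of $\sin 2p>0$, and the reduction to $(2-\sqrt{2})P^2-(\sqrt{2}-1)PQ+Q^2<1$ with $P=\cos 2p$, $Q=\cos 2q$; I verified the reduction (all cross-multiplied factors are positive since $2u_i,2u_{i+1}\in(0,\pi/2)$) and the monotonicity step (admissible $Q$ satisfies $Q>\sin 2p>\sqrt{2}/2>Q_*=(\sqrt{2}-1)P/2$), so the worst case is $Q=1$, where the inequality is exactly $P<1/\sqrt{2}$, i.e.\ the triangle condition $u_i+u_{i+1}>\pi/4$, with equality only at $u_i=u_{i+1}=\pi/8$, giving sharpness at $\alpha_i=\alpha_{i+1}=\pi/2$ by continuity, just as your symmetric expansion gives sharpness of $3/4$. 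Your route buys a shorter, purely pointwise argument that avoids the paper's lemma on the domain $U$ and the sup-transfer step; the paper's route, in exchange, exhibits the full boundary structure of the ratio and locates exactly where the extremes occur. One caveat you share with the paper: the two-variable limit of $\kappa_\new^\perp/\kappa_\conv^\perp$ as $\alpha_i,\alpha_{i+1}\to\pi$ is path-dependent (for instance it is $2/3$ along $\gamma_{i+1}=2\gamma_i$), so the stated convergence to $3/4$ should be read, as your argument in effect does, as sharpness of the constant along the symmetric approach.
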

\begin{proof}
Set $s=\tan (\gamma_i /4)$ and $t=\tan (\gamma_{i+1}/4)$. 
Then it follows from $0<\alpha_i=\pi -\gamma_i<\pi$ for all $i$ that $s,t\in (0,1)$. 
The interference coefficients 
$\kappa_\conv^\perp (B_i B_{i+1})$ and $\kappa_\new^\perp (B_i B_{i+1})$ are written as
\begin{align*}
\kappa_\conv^\perp (B_i B_{i+1})&=\kappa_\conv^\perp (B_i)+\kappa_\conv^\perp (B_{i+1})\\
&=\frac{1}{2}\left(\tan\frac{\gamma_i}{2}+\frac{\gamma_{i+1}}{2}\right)
=\frac{s}{1-s^2}+\frac{t}{1-t^2}=\frac{(s+t)(1-st)}{(1-s^2)(1-t^2)},\\
\kappa_\new^\perp (B_i B_{i+1})&=\min\{\kappa_\inn^\perp (B_i)+\kappa_\out^\perp (B_{i+1}),\kappa_\out^\perp (B_i)+\kappa_\inn^\perp (B_{i+1})\}\\
&=\min\left\{\frac{1}{2}\left(\tan\frac{\gamma_i}{2}-\tan\frac{\gamma_i}{4}\right)+\tan\frac{\gamma_{i+1}}{4},
\tan\frac{\gamma_i}{4}+\frac{1}{2}\left(\tan\frac{\gamma_{i+1}}{2}-\tan\frac{\gamma_{i+1}}{4}\right)\right\}\\
&=\min\left\{\frac{s}{1-s^2}-\frac{s}{2}+t,s+\frac{t}{1-t^2}-\frac{t}{2}\right\} .
\end{align*}
Since the expression of $\kappa_\new^\perp (B_i B_{i+1})$ is rather complicated, 
we instead consider an \emph{averaged coefficient} $\kappa_\ave^\perp (B_i B_{i+1})$ defined as
\begin{align*}
\kappa_\ave^\perp (B_i B_{i+1})&=\frac{1}{2}
\left\{(\kappa_\inn^\perp (B_i)+\kappa_\out^\perp (B_{i+1}))+(\kappa_\out^\perp (B_i)+\kappa_\inn^\perp (B_{i+1}))\right\}\\
&=\frac{1}{4}\left(\tan\frac{\gamma_i}{2}+\tan\frac{\gamma_{i+1}}{2}+\tan\frac{\gamma_i}{4}+\tan\frac{\gamma_{i+1}}{4}\right)\\
&=\frac{1}{2}\left(\frac{s}{1-s^2}+\frac{t}{1-t^2}+\frac{1}{2}(s+t)\right) =\frac{1}{2}\kappa_\conv (B_i B_{i+1})+\frac{1}{4}(s+t).
\end{align*}
Then we have an inequality
\begin{equation}\label{kappa_ave_conv}
\frac{\kappa_\new^\perp}{\kappa_\conv^\perp}\leqslant\frac{\kappa_\ave^\perp}{\kappa_\conv^\perp}=\frac{1}{2}+\frac{(1-s^2)(1-t^2)}{4(1-st)}
=\frac{3}{4}-\frac{st}{2}-\frac{(s-t)^2}{2(1-st)}.
\end{equation}
Let $S=(0,1)\times (0,1)$ be a product set and $S_u=S\cap\set{st=u}$ for $u\in (0,1)$.
Then since $\kappa_\ave^\perp /\kappa_\conv^\perp$ attains maximum on $S_u$ at $s=t=\sqrt{u}$ , we have
\begin{equation}\label{ineq:sup_S}
\sup_S\frac{\kappa_\new^\perp}{\kappa_\conv^\perp}\leqslant\sup_S\frac{\kappa_\ave^\perp}{\kappa_\conv^\perp}
=\sup_{u\in (0,1)}\sup_{S_u}\frac{\kappa_\ave^\perp}{\kappa_\conv^\perp}
=\sup_\Delta\frac{\kappa_\ave^\perp}{\kappa_\conv^\perp}=\sup_\Delta\frac{\kappa_\new^\perp}{\kappa_\conv^\perp},
\end{equation}
where $\Delta =\set{(s,t)\in S|s=t}$ is a diagonal set, and the last equality holds because 
$\kappa_\new^\perp /\kappa_\conv^\perp=\kappa_\ave^\perp /\kappa_\conv^\perp$ on $\Delta$.
On the other hand, it follows clearly from $\Delta\subset S$ that
\begin{equation}\label{ineq:sup_Delta}
\sup_\Delta\frac{\kappa_\new^\perp}{\kappa_\conv^\perp}\leqslant\sup_S\frac{\kappa_\new^\perp}{\kappa_\conv^\perp}.
\end{equation}
Combining $\eqref{ineq:sup_S}$ and $\eqref{ineq:sup_Delta}$ gives that
\begin{equation*}
\sup_S\frac{\kappa_\new^\perp}{\kappa_\conv^\perp}=\sup_\Delta\frac{\kappa_\new^\perp}{\kappa_\conv^\perp}
=\sup_{s\in (0,1)}\left(\frac{3}{4}-\frac{s^2}{2}\right) =\frac{3}{4}.
\end{equation*}
Thus $\kappa_\new^\perp /\kappa_\conv^\perp$ attains its supremum value $3/4$ as $s=t\to +0$ along $\Delta$.
By the continuity of $\kappa_\new^\perp /\kappa_\conv^\perp$ on $S$,  
$\kappa_\new^\perp /\kappa_\conv^\perp$ attains its supremum value $3/4$ as $s,t\to +0$ also on $S$, 
where $s,t\to +0$ corresponds to $\alpha_i ,\alpha_{i+1}\to\pi -0$.

Next we shall find the supremum value of $\kappa_\new^\perp /\kappa_\conv^\perp$ if $P=B_1\dots B_N$ is a triangle. 
In this case we have $\alpha_i+\alpha_{i+1}<\pi$ for all $i$, which is equivalent to $\gamma_i +\gamma_{i+1}>\pi$.
Setting $s=\tan (\gamma_i/4),t=\tan (\gamma_{i+1}/4)$ as above, we have
\begin{equation*}
1=\tan\frac{\pi}{4}<\tan\left(\frac{\gamma_i}{4}+\frac{\gamma_{i+1}}{4}\right) =\frac{s+t}{1-st},
\end{equation*}
so that 
\begin{equation}\label{ineq:1-u<v}
1-u<v,\quad\text{where we set }u=st,v=s+t.
\end{equation}
We see from $\eqref{kappa_ave_conv}$ that $\kappa_\ave^\perp /\kappa_\conv^\perp$ can be written in terms of $u$ and $v$ as
\begin{equation}\label{kappa_ave_conv_uv}
\frac{\kappa_\ave^\perp}{\kappa_\conv^\perp}=\frac{1}{2}+\frac{1}{4}\cdot\frac{(1+u)^2-v^2}{1-u}.
\end{equation}
Now let $U=\set{(u,v)\in\R^2|(s,t)\in S,u=st,v=s+t,1-u<v}$ be the domain of $\kappa_\ave^\perp /\kappa_\conv^\perp$.
\begin{lemma}
The domain $U$ is given by
\begin{equation}\label{expr:U}
U=\set{(u,v)\in\R^2|1-u<v<1+u,4u\leqslant v^2}.
\end{equation}
Also, $\kappa_\ave^\perp /\kappa_\conv^\perp$ attains its supremum value $1/\sqrt{2}$ as $u\to 3-2\sqrt{2}+0$ along the graph $v=2\sqrt{u}$.
\end{lemma}
\begin{proof}
For a fixed value of $u\in (0,1)$, $s$ moves in the range $u<s<1$. 
Then we have $v=v(s)=s+u/s$ with its derivative $v'(s)=1-u/s^2$, which gives the following table.
\begin{center}
\begin{tabular}{c|ccccc}
$s$&$u$&$\cdots$&$\sqrt{u}$&$\cdots$&$1$\\ \hline
$v'(s)$&&$-$&$0$&$+$&\\
$v(s)$&$1+u$&$\searrow$&$2$$\sqrt{u}$&$\nearrow$&$1+u$
\end{tabular}
\end{center}
Thus $v$ moves in the range $2\sqrt{u}\leqslant v<1+u$.
Considering $\eqref{ineq:1-u<v}$ also, we obtain $\eqref{expr:U}$.

For the supremum value of $\kappa_\ave^\perp /\kappa_\conv^\perp$ on $U$, we consider the set $\underline{U}$ bounding $U$ from below with respect to $v$,
because for a fixed value of $u$, $\eqref{kappa_ave_conv_uv}$ attains supremum for the infimum value of $v$.
By a straightforward calculation, $\underline{U}$ is given by
$v=1-u$ for $0<u\leqslant 3-2\sqrt{2}$ and $v=2\sqrt{u}$ for $3-2\sqrt{2}\leqslant u<1$. 
Thus we have
\begin{equation*}
\sup_{\substack{(u,v)\in U\\ u:\text{fixed}}}\frac{\kappa_\ave^\perp}{\kappa_\conv^\perp}=
\begin{cases}
\displaystyle -\frac{1}{2}+\frac{1}{1-u}&\text{if }0<u\leqslant 3-2\sqrt{2}\\
\displaystyle \frac{3}{4}-\frac{u}{4}&\text{if }3-2\sqrt{2}\leqslant u<1.
\end{cases}
\end{equation*}
Hence $\kappa_\ave^\perp /\kappa_\conv^\perp$ attains its spremum value $1/\sqrt{2}$ as $(u,v)\to (3-2\sqrt{2},2-2\sqrt{2})$ along $v=2\sqrt{u}$,
which corresponds to $s=t\to 1-\sqrt{2}$.
\end{proof}
Note that the set $\set{v=2\sqrt{u}}$ bounding $U$ corresponds to the subset $\Delta =\set{(s,s)}$ of $S$, on which 
$\kappa_\ave^\perp /\kappa_\conv^\perp =\kappa_\new^\perp /\kappa_\conv^\perp$. 
Hence by a simular argument as above, $\kappa_\new^\perp /\kappa_\conv^\perp$ attains supremum on
$S\cap\set{1+st<s+t}$ as $s,t\to\sqrt{2}-1+0$, which corresponds to $\alpha_i ,\alpha_{i+1}\to\pi /2-0$.
This completes the proof of Theorem $\ref{thm:comparison_kappa_perp}$.
\end{proof}
\begin{corollary}\label{cor:comparison_height_prism}
For any convex polygon $P=B_1\dots B_N$, we have the following inequality 
for the maximal heights $h_\conv^\perp ,h_\new^\perp$ of the prisms of $P$ extruded with the conventional and the new gadgets:
\begin{equation*}
h_\new^\perp >\frac{4}{3}h_\conv^\perp .
\end{equation*}
In particular, if $N=3$, i.e., if $P$ is a triangle, then we have
\begin{equation*}
h_\new^\perp >\sqrt{2}h_\conv^\perp .
\end{equation*}
\end{corollary}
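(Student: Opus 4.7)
The plan is to reduce this corollary directly to the edge-wise estimate already established in Theorem~\ref{thm:comparison_kappa_perp}. Since we are extruding a prism of a convex polygon, we have $\beta_{i,\sigma}=\pi/2$ and $\lambda_i=1$ for every $i$ and $\sigma$, so $h_{\max}=\infty$. Consequently, the formulas of Theorems~\ref{thm:height_conv} and \ref{thm:height_new} collapse to
\begin{equation*}
h_\conv^\perp=\min_{i=1,\dots,N}\frac{\norm{B_i B_{i+1}}}{\kappa_\conv^\perp(B_i B_{i+1})},\qquad
h_\new^\perp=\min_{i=1,\dots,N}\frac{\norm{B_i B_{i+1}}}{\kappa_\new^\perp(B_i B_{i+1})},
\end{equation*}
so the maximal heights are entirely controlled by the interference coefficients on each bottom edge.

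Next, I would invoke Theorem~\ref{thm:comparison_kappa_perp} to get, on every edge $B_i B_{i+1}$, the strict inequality $\kappa_\new^\perp(B_i B_{i+1})<(3/4)\kappa_\conv^\perp(B_i B_{i+1})$ in the general case, and $\kappa_\new^\perp(B_i B_{i+1})<(1/\sqrt{2})\kappa_\conv^\perp(B_i B_{i+1})$ in the triangle case. Taking reciprocals and multiplying by $\norm{B_i B_{i+1}}$ converts these into
\begin{equation*}
\frac{\norm{B_i B_{i+1}}}{\kappa_\new^\perp(B_i B_{i+1})}>\frac{4}{3}\cdot\frac{\norm{B_i B_{i+1}}}{\kappa_\conv^\perp(B_i B_{i+1})}
\end{equation*}
for each $i$, and analogously with $\sqrt{2}$ in the triangle case.

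Finally, I would take the minimum over $i\in\{1,\dots,N\}$ on both sides. If $j^\ast$ is an index achieving the minimum in the definition of $h_\new^\perp$, then
\begin{equation*}
h_\new^\perp=\frac{\norm{B_{j^\ast} B_{j^\ast+1}}}{\kappa_\new^\perp(B_{j^\ast} B_{j^\ast+1})}
>\frac{4}{3}\cdot\frac{\norm{B_{j^\ast} B_{j^\ast+1}}}{\kappa_\conv^\perp(B_{j^\ast} B_{j^\ast+1})}
\geqslant\frac{4}{3}\min_{i}\frac{\norm{B_i B_{i+1}}}{\kappa_\conv^\perp(B_i B_{i+1})}=\frac{4}{3}h_\conv^\perp,
\end{equation*}
which preserves the strict inequality since the minimum is over finitely many edges. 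The triangle case follows by the same argument with the constant $\sqrt{2}$ in place of $4/3$.

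There is essentially no hard step: all the analytic work has already been absorbed into Theorem~\ref{thm:comparison_kappa_perp}, and the only point worth spelling out is that the edge-wise strict inequality survives the passage to the minimum, which it does because the minimum is over a finite set. The proposal is therefore a short deduction rather than an independent argument.
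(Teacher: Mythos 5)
Your deduction is correct and is exactly the intended argument: since $\beta_{i,\sigma}=\pi/2$ gives $h_{\max}=\infty$, the heights reduce to the edge-wise minima of $\norm{B_iB_{i+1}}/\kappa^\perp$, and the strict edge-wise inequalities of Theorem $\ref{thm:comparison_kappa_perp}$ pass to the finite minimum as you show. The paper treats the corollary as this same immediate consequence, so there is nothing to add.
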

We show the $3$D plots of $\kappa_\new^\perp /\kappa_\conv^\perp$ and $K_\ave^\perp /K_\conv^\perp$ 
as fuctions of $s=\tan (\gamma_i /4)$ and $t=\tan (\gamma_{i+1}/4)$.
Observe that the $3$D plot of $\kappa_\new^\perp /\kappa_\conv^\perp$ has `ridges' over the curves $\set{s=t}$ and $\set{3s^2 t^2-3s^2-3t^2-2st+1=0}$
in the $st$-plane, 
where $\kappa_\new^\perp /\kappa_\conv^\perp =\kappa_\ave^\perp /\kappa_\conv^\perp$. 
\begin{figure}[htbp]
  \begin{center}
    \begin{tabular}{c}
\addtocounter{theorem}{1}
      \begin{minipage}{0.5\hsize}
        \begin{center}
          \includegraphics[width=\hsize]{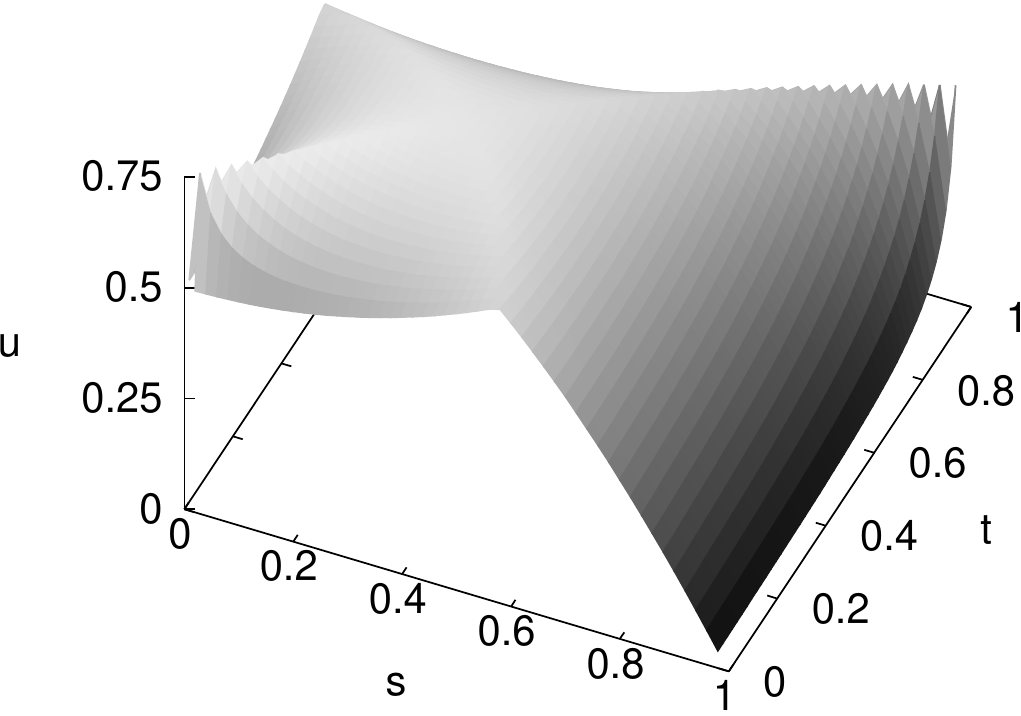}
        \end{center}
    \caption{$3$D plot of $u=\kappa_\new^\perp /\kappa_\conv^\perp$}
    \label{fig:Knew_Kconv}
      \end{minipage}
\addtocounter{theorem}{1}
      \begin{minipage}{0.5\hsize}
        \begin{center}
          \includegraphics[width=\hsize]{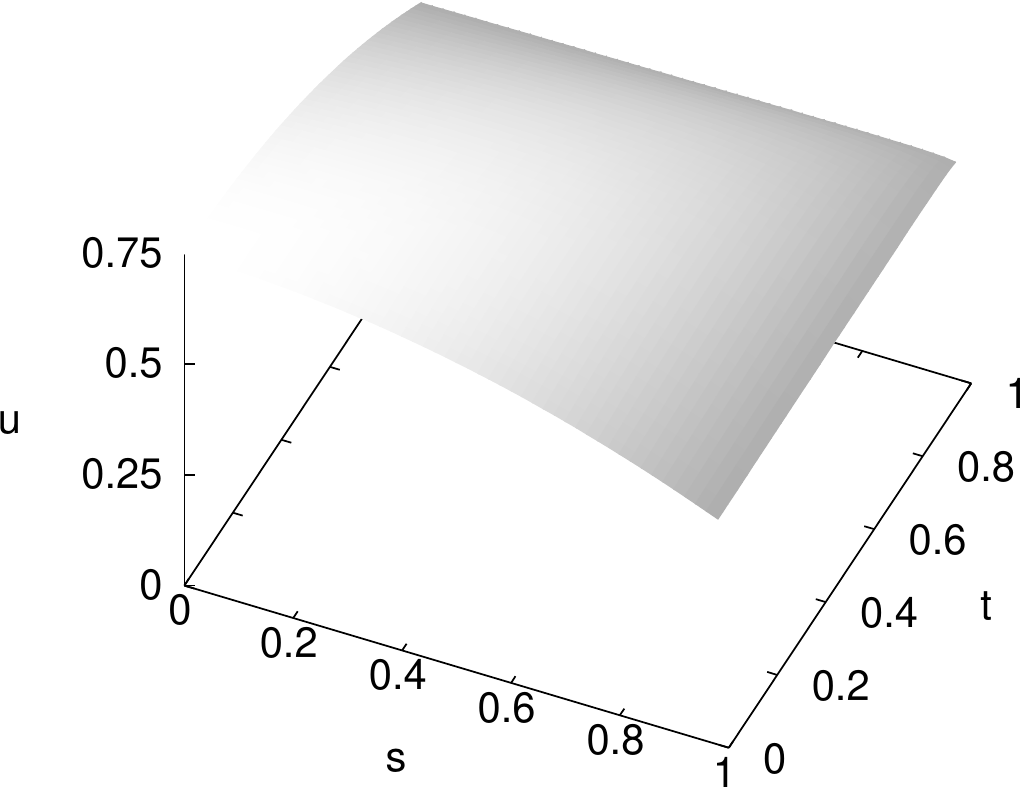}
        \end{center}
    \caption{$3$D plot of $u=\kappa_\ave^\perp /\kappa_\conv^\perp$}
    \label{fig:Kave_Kconv}
      \end{minipage}
    \end{tabular}
  \end{center}
\end{figure}
\section{Repetition and division of $3$D gadgets}\label{sec:8}
Since the paper is flat outside the extruded polyhedron in both of Construcions $\ref{const:conv}$ and $\ref{const:new}$, 
we can extrude another polyhedron so that its top or side face includes the bottom face of the previous polyhedron. 
In particular, we can repeat the same gadgets to make the extrusion higher as long as no interference occurs.
Changing the viewpoint, we can divide a gadget into smaller gadgets which extrudes the same height in total. 
However, naive repetition of our new gadgets does not keep the back sides flat. 
In this section we present without proof a modification of our new gadgets for repetition which keeps the back sides flat. 
We shall deal with proportional division of a gadget into $d$ gadgets in the ratio $p_1:\dots :p_d$ with $p_1+\dots +p_d=d$.
Throught this section, we will suppose $\delta_\Lt =\delta_\Rt =0$.
\begin{construction}\label{const:division_new}\rm
Consider a development as shown in Figure $\ref{fig:division_development}$, for which we require the following conditions.
\begin{enumerate}[(i)]
\item $\alpha <\beta_\Lt+ \beta_\Rt$, $\beta_\Lt <\alpha+ \beta_\Rt$ and $\beta_\Rt <\alpha+ \beta_\Lt$.
\item $\alpha +\beta_\Lt +\beta_\Rt <2\pi$.
\item $\alpha +\beta_\Lt +\beta_\Rt >\pi$.
\item $\beta_\sigma+\gamma /4\geqslant\pi /2\quad\text{for }\sigma =\Lt ,\Rt$.
\end{enumerate}
The crease pattern of the proportional division of our new $3$D gadget into $d$ gadgets in the ratio $p_1:\dots :p_d$ from the bottom with $p_1+\dots +p_d=d$
is constructed as follows, where all procedures are done for both $\sigma =\Lt ,\Rt$.
\begin{enumerate}
\item Draw a perpendicular to $\ell_\sigma$ through $B_\sigma$ for both $\sigma =\Lt ,\Rt$, letting $C$ be the intersection point.
\item Divide segment $CA$ (not $AC$) into $d$ parts proportionally in the ratio $p_1:\dots :p_d$, 
letting $A^{(1)},\dots ,A^{(d-1)}$ to be the divided points in order from the side of $C$.
\item For $n=1,\dots ,d-1$, draw a ray $\ell_\sigma^{(n)}$ parallel to $\ell_\sigma$, starting from $A^{(n)}$ and going in the same direction as $\ell_\sigma$,
letting $B_\sigma^{(n)}$ be the intersection point with $B_\sigma C$.
\item For $n=1,\dots ,d$, let $E_\sigma^{(n)}$ be the intersection point of 
a bisector of $\angle B_\sigma^{(n)}A_\sigma^{(n)}C$ and a perpendicular bisector of $B_\sigma^{(n-1)}B_\sigma^{(n)}$, 
where we set $A^{(n)}=A,B_\sigma^{(0)}=C$ and $B_\sigma^{(n)}=B_\sigma$. 
Also, draw a ray $m_\sigma^{(n)}$ parallel to $\ell_\sigma$, starting from $E_\sigma^{(n)}$ and going in the same direction as $\ell_\sigma$.
Thus we have $2d$ parallel rays $m_\sigma^{(1)},\ell_\sigma^{(1)},\dots ,m_\sigma^{(n)},\ell_\sigma^{(n)}=\ell_\sigma$ 
in order from the side of $C$.
\item For $n=1,\dots ,d$, let $F^{(n)}$ be the intersection point of segments $AC$ and $E_\Lt^{(n)}E_\Rt^{(n)}$.
\item For $n=1,\dots ,d-1$, draw a ray $k_\sigma^{(n)}$ parallel to $k_\sigma$ from $B_\sigma^{(n)}$ to the side of $m_\sigma^{(n+1)}$.
If $k_\sigma^{(n)}$ intersects segment $A^{(n)}E_\sigma^{(n+1)}$, then let ${G'_\sigma}^{\! (n)}$ be the intersection point.
If not, then let $J_\sigma^{(n+1)}$ be the intersection point with $m_\sigma^{(n)}$. 
\item For $n=1,\dots ,d$, draw a ray ${k'_\sigma}^{\! (n)}$ starting from $B_\sigma^{(n)}$ 
which is a reflection of $k_\sigma^{(n)}$ across $\ell_\sigma^{(n)}$, where we set $k_\sigma^{(d)}=k_\sigma$.
If ${k'_\sigma}^{\! (n)}$ intersects segment $A^{(n)}E_\sigma^{(n)}$, then let $G_\sigma^{(n)}$ be the intersection point.
If not, then it ${k'_\sigma}^{\! (n)}$ intersects $m_\sigma^{(n)}$ at $J_\sigma^{(n)}$ given in $(6)$.
Note that $G_\sigma^{(1)}$ always exists, and for $2\leqslant n\leqslant d$, $G_\sigma^{(n)}$ exists if and only if ${G'_\sigma}^{\! (n-1)}$ exists.
\item For $n=1,\dots ,d$, draw a circle with center $A^{(n)}$ through $B_\Lt^{(n)}$ and $B_\Rt^{(n)}$.
If the circle intersects segment $A^{(n)}F^{(n)}$ excluding endpoint $F^{(n)}$, then let $D^{(n)}$ be the intersection point, 
and if $n\leqslant d-1$ and it intersects segment $A^{(n)}F^{(n+1)}$ excluding endpoint $F^{(n+1)}$, then let ${D'}^{(n)}$ be the intersection point.
Note that $D^{(1)}$ always exists. Also, as we will see in $\eqref{AF-AB}$ in Lemma $\ref{lem:length_AF}$, we have
\begin{equation*}
\norm{A^{(n)}F^{(n)}}-\norm{A^{(n)}B_\sigma^{(n)}}=\norm{A^{(n-1)}F^{(n)}}-\norm{A^{(n-1)}B_\sigma^{(n-1)}}\quad\text{for }2\leqslant n\leqslant d,
\end{equation*}
which implies that $D^{(n)}$ exists if and only if ${D'}^{(n-1)}$ exists.
\item For $n=2,\dots ,d$ such that $D^{(n)}$ and ${D'}^{(n-1)}$ in $(8)$ exist, draw a parallel line to segment $D^{(1)}E_\sigma^{(1)}$ through $D^{(n)}$ 
and a parallel line to segment $D^{(1)}E_{\sigma'}^{(1)}$ through ${D'}^{(n-1)}$, 
letting $K_\sigma^{(n)}$ be the common intersection point with segment $E_\Lt^{(n)}E_\Rt^{(n)}$.
\item For $n=2,\dots ,d$ 
such that $D^{(n)}$ and ${D'}^{(n-1)}$ in $(8)$ exist, 
draw a ray starting from $D^{(n)}$ which is a reflection of ${k'_\sigma}^{\! (n)}$ across $A^{(n)}E_\sigma^{(n)}$.
If the ray intersects $A^{(n)}E_\sigma^{(n)}$, then $G_\sigma^{(n)}$ is the intersection point.
If not, then the ray intersects $E_\Lt^{(n)}E_\Rt^{(n)}$, letting $M_\sigma^{(n)}$ be the intersection point.
Also, draw a ray starting from ${D'_\sigma}^{\! (n-1)}$ which is a reflection of $k_\sigma^{(n-1)}$ across $A^{(n-1)}E_\sigma^{(n)}$.
Then the ray passes through ${G'_\sigma}^{\! (n-1)}$ (resp. $M_\sigma^{(n)}$) if ${G'_\sigma}^{\! (n-1)}$ exists (resp. does not exist). 
\item For $n=2,\dots ,d$ such that $G_\sigma^{(n)},{G'_\sigma}^{\! (n-1)}$ exist and $D^{(n)},{D'}^{(n-1)}$ do not, 
draw a line through $G_\sigma^{(n)}$ which is a reflection of ${k'_\sigma}^{\! (n)}$ across $A^{(n)}E_\sigma^{(n)}$,
and a line through ${G'_\sigma}^{\! (n-1)}$ which is a reflection of ${k'_\sigma}^{\! (n)}$ across $A^{(n)}E_\sigma^{(n)}$, 
letting $M_\sigma^{(n)}$ be their common intersection point with segment $E_\Lt^{(n)} E_\Rt^{(n)}$.
(If the two lines overlap $E_\Lt^{(n)} E_\Rt^{(n)}$, 
which happens only if $\norm{A^{(n)}F^{(n)}}=\norm{A^{(n)}B_\sigma^{(n)}}$ and $G_\sigma^{(n)}={G'_\sigma}^{\! (n-1)}=E_\sigma^{(n)}$, 
then let $M_\sigma^{(n)}=E_\sigma^{(n)}$.)
\item The crease pattern is shown as the solid lines in Figure $\ref{fig:division_new}$, 
and the assignment of mountain folds and valley folds is given in Table $\ref{tbl:assignment_new_division}$ if $\beta_\sigma +\gamma /4>\pi /2$,
and Table $\ref{tbl:assignment_new_division_phi=0}$ if $\beta_\sigma +\gamma /4=\pi /2$.
\end{enumerate}
\end{construction}
\begin{figure}[htbp]
  \begin{center}
\addtocounter{theorem}{1}
          \includegraphics[width=\hsize]{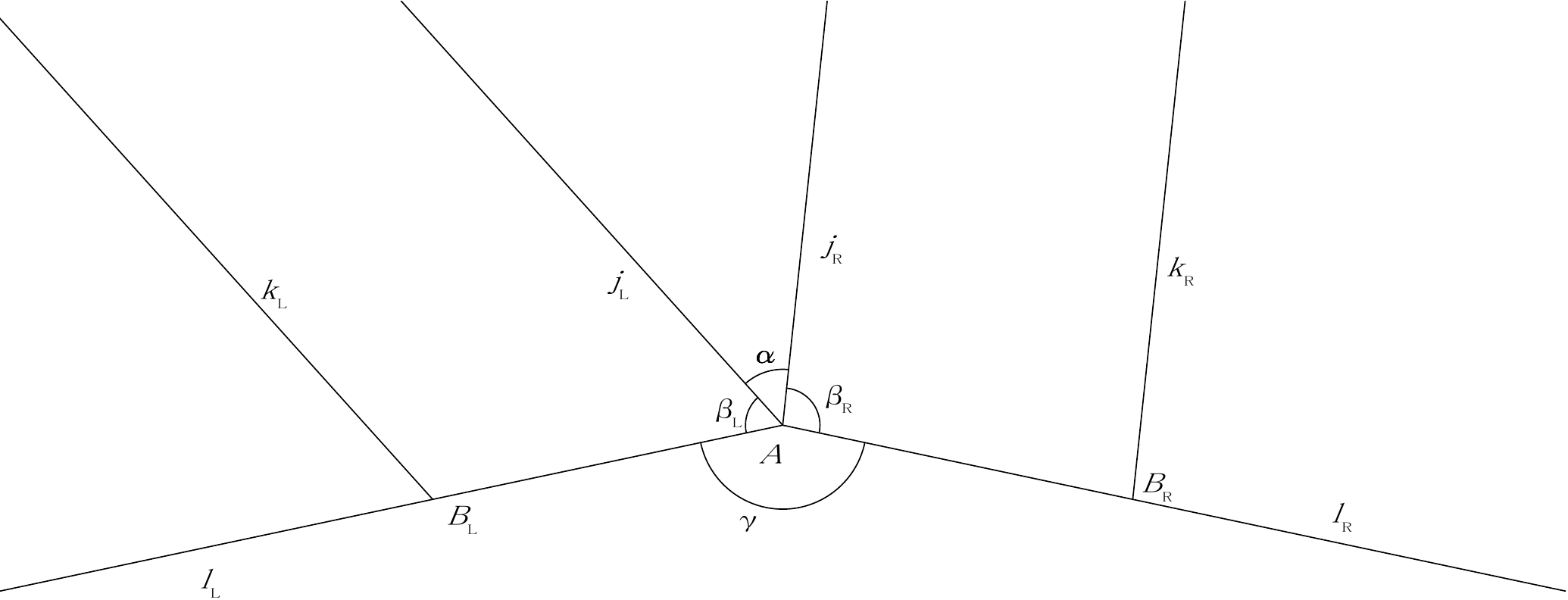}
    \caption{Development to which we apply $d$ $3$D gadgets successively}
    \label{fig:division_development}
\end{center}
\end{figure}
\begin{figure}[htbp]
  \begin{center}
\addtocounter{theorem}{1}
          \includegraphics[width=\hsize]{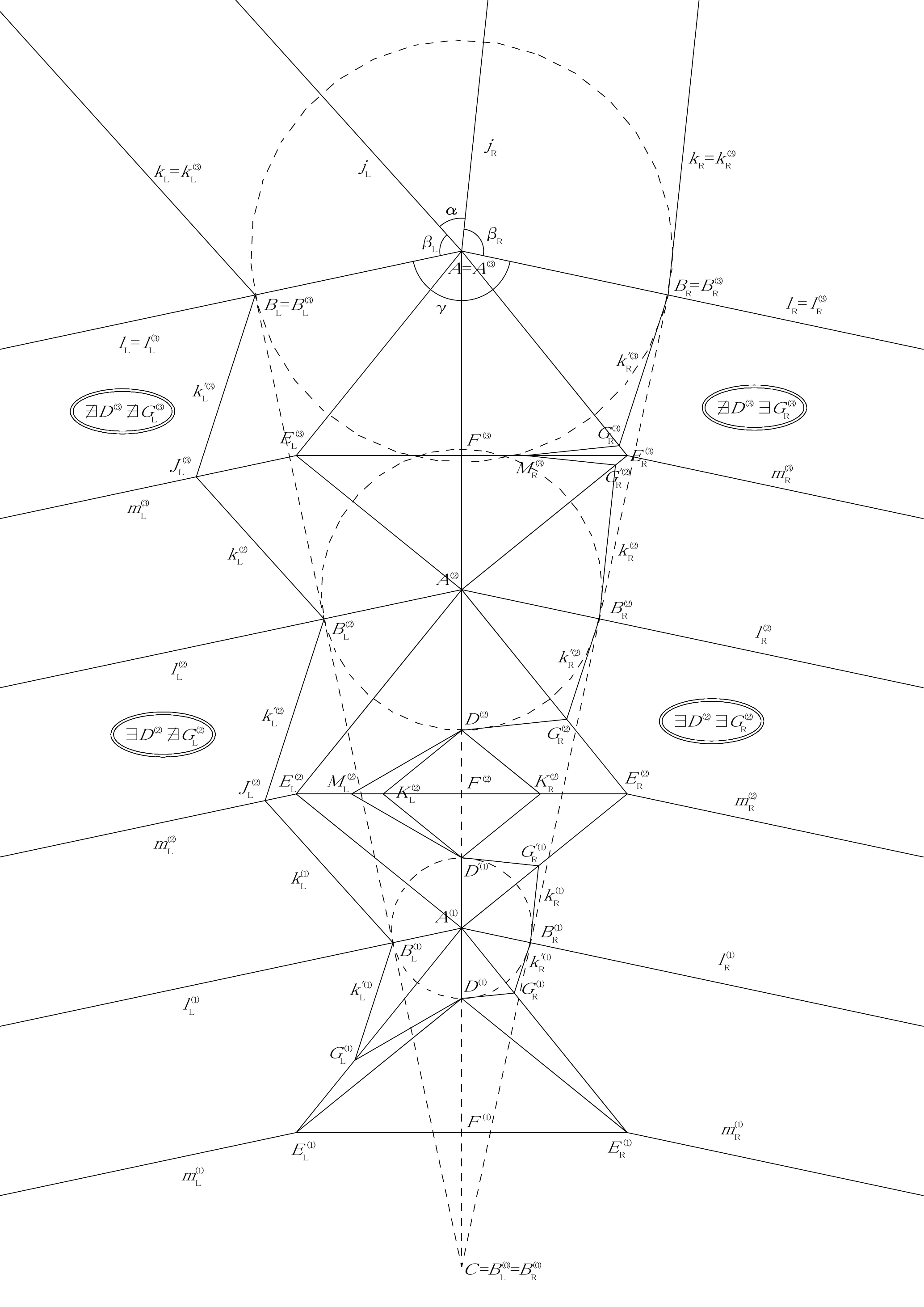}
    \caption{CP of the division of the new gadget}
    \label{fig:division_new}
\end{center}
\end{figure}
\addtocounter{theorem}{1}
\begin{table}[h]
\begin{tabular}{c|c|c|c|c|c}
&common&\multicolumn{4}{c}{$n$ with $2\leqslant n\leqslant d$ such that}\\
\cline{3-6}
&creases&$\exists D^{(n)}\exists G_\sigma^{(n)}$&$\exists D^{(n)}\nexists G_\sigma^{(n)}$
&$\nexists D^{(n)}\exists G_\sigma^{(n)}$&$\nexists D^{(n)}\nexists G_\sigma^{(n)}$\\
\hline
&$j_\sigma ,\ell_\sigma^{(n)},$&\multicolumn{4}{c}{$A^{(n-1)}E_\sigma^{(n)}$}\\
\cline{3-6}
mountain&$AB_\sigma ,A^{(1)}D^{(1)},$&\multicolumn{2}{c|}{$A^{(n)}D^{(n)},D^{(n)}K_\sigma^{(n)}$}
&\multicolumn{2}{c}{$A^{(n)}F^{(n)}$}\\
\cline{3-6}
folds&$B_\sigma^{(1)}G_\sigma^{(1)},D^{(1)}E_\sigma^{(1)}$&$B_\sigma^{(n-1)}{G'_\sigma}^{\! (n-1)},$
&$B_\sigma^{(n)}J_\sigma^{(n)},$&$B_\sigma^{(n-1)}{G'_\sigma}^{\! (n-1)},$&$B_\sigma^{(n)}J_\sigma^{(n)}$\\
&&${D'_\sigma}^{\! (n-1)}{G'_\sigma}^{\! (n-1)}$&${D'_\sigma}^{\! (n-1)}M_\sigma^{(n)}$
&${G'_\sigma}^{\! (n-1)}M_\sigma^{(n)}$& \\ \hline
valley&$k_\sigma ,m_\sigma^{(n)},$&\multicolumn{2}{c|}{$A^{(n-1)}{D'}^{(n-1)},{D'}^{(n-1)}K_\sigma^{(n)}$}
&\multicolumn{2}{c}{$A^{(n-1)}F^{(n)}$}\\
\cline{3-6}
folds&$A^{(n)}E_\sigma^{(n)},E_\Lt^{(n)}E_\Rt^{(n)},$&$B_\sigma^{(n-1)}G_\sigma^{(n)},$
&$B_\sigma^{(n-1)}J_\sigma^{(n)},$&$B_\sigma^{(n-1)}G_\sigma^{(n)},$&$B_\sigma^{(n-1)}J_\sigma^{(n)}$\\
&$D^{(1)}G_\sigma^{(1)}$&$D^{(n)}G_\sigma^{(n)}$&$D^{(n)}M_\sigma^{(n)}$&$G_\sigma^{(n)}M_\sigma^{(n)}$&
\end{tabular}\vspace{0.5cm}
\caption{Assignment of mountain folds and valley folds for the division of the new gadget for $\beta_\sigma +\gamma /4>\pi /2$}
\label{tbl:assignment_new_division}
\end{table}
\begin{lemma}\label{lem:length_AF}
Suppose $\norm{AB}=d$ and let $q_n=p_1+\dots +p_n$, so that $\norm{A^{(n)}B_\sigma^{(n)}}=q_n$. 
Then we have
\begin{equation*}
\norm{A^{(n-1)}F^{(n)}}=\frac{p_n}{2}\left(\frac{1}{\cos (\gamma /2)}-1\right) ,\quad
\norm{A^{(n)}F^{(n)}}=\frac{p_n}{2}\left(\frac{1}{\cos (\gamma /2)}+1\right) \quad\text{for }2\leqslant n\leqslant d.
\end{equation*}
Also, we have
\begin{equation}\label{AF-AB}
\norm{A^{(n-1)}F^{(n)}}-\norm{A^{(n-1)}B_\sigma^{(n-1)}}=\norm{A^{(n)}F^{(n)}}-\norm{A^{(n)}B_\sigma^{(n)}}
=\frac{p_n}{2}\left(\frac{1}{\cos (\gamma /2)}-1\right) -q_{n-1}.
\end{equation}
\end{lemma}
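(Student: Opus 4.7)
The plan is to place the whole configuration in Cartesian coordinates. By Proposition $\ref{prop:calc_length_new}$ specialized to $\delta_\Lt=\delta_\Rt=0$, the normalization $\norm{AB}=d$ gives $\norm{AC}=d/\cos (\gamma /2)$, so I set $A=(0,0)$, $C=(d/\cos (\gamma /2),0)$, and $B_\Rt=(d\cos (\gamma /2),-d\sin (\gamma /2))$ with $B_\Lt$ its reflection across the $x$-axis. The axis of symmetry of the development then coincides with the $x$-axis.

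The first step is a similarity observation. Since $\ell_\sigma^{(n)}$ is parallel to $\ell_\sigma$ and $A^{(n)}$ lies on segment $CA$ with $\norm{CA^{(n)}}=q_n/\cos (\gamma /2)$, the triangle $A^{(n)}B_\sigma^{(n)}C$ is a scaled copy of $AB_\sigma C$ with ratio $q_n/d$. This immediately yields $\norm{A^{(n)}B_\sigma^{(n)}}=q_n$ (confirming the statement's parenthetical) and $\norm{B_\sigma^{(n)}C}=q_n\tan (\gamma /2)$, so the $B_\sigma^{(n)}$ divide $CB_\sigma$ in proportions matching the $q_n$. Next, by the left–right symmetry of the construction across $AC$, the points $E_\Lt^{(n)}$ and $E_\Rt^{(n)}$ are mirror images, hence segment $E_\Lt^{(n)}E_\Rt^{(n)}$ is perpendicular to the $x$-axis and $F^{(n)}$ shares its $x$-coordinate with $E_\Rt^{(n)}$. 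Thus the lemma reduces to computing a single scalar $x_{F^{(n)}}$.

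The second step is to intersect the two lines defining $E_\Rt^{(n)}$. The bisector of $\angle B_\Rt^{(n)}A^{(n)}C$ issues from $A^{(n)}=((d-q_n)/\cos (\gamma /2),0)$ in direction $(\cos (\gamma /4),-\sin (\gamma /4))$, while the perpendicular bisector of $B_\Rt^{(n-1)}B_\Rt^{(n)}$ passes through its midpoint (located on $CB_\Rt$ at fraction $(q_{n-1}+q_n)/(2d)$ from $C$) in direction $(\cos (\gamma /2),-\sin (\gamma /2))$. Writing both as parametric lines, eliminating the two parameters, and simplifying with the half-angle identity $2\cos^2 (\gamma /4)=1+\cos (\gamma /2)$ produces
\begin{equation*}
x_{F^{(n)}}=\frac{d}{\cos (\gamma /2)}+\frac{p_n}{2}-\frac{q_{n-1}+q_n}{2\cos (\gamma /2)}.
\end{equation*}
Subtracting $x_{A^{(n)}}=(d-q_n)/\cos (\gamma /2)$ and $x_{A^{(n-1)}}=(d-q_{n-1})/\cos (\gamma /2)$ respectively delivers the two stated lengths $(p_n/2)(1/\cos (\gamma /2)\pm 1)$, and equation $\eqref{AF-AB}$ follows at once from $p_n=q_n-q_{n-1}$.

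The main obstacle is bookkeeping rather than substance: the construction introduces many auxiliary points, and one must keep in mind that in the $n$-th sub-gadget the relevant perpendicular bisector is of $B_\sigma^{(n-1)}B_\sigma^{(n)}$, not of $B_\sigma^{(n)}C$ as in the undivided case, so its midpoint drifts with $n$. Once the symmetry argument collapses the problem to locating one $x$-coordinate, what remains is a routine $2\times 2$ linear solve and no conceptual difficulty arises.
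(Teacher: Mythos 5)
Your proof is correct, but it takes a more computational route than the paper. The paper's own proof is a two-line synthetic argument: it observes that $\angle A^{(n)}E_\sigma^{(n)}A^{(n-1)}=\pi/2$, notes that $F^{(n)}$ is the foot of the perpendicular from $E_\sigma^{(n)}$ to $A^{(n-1)}A^{(n)}$ with the angle $\gamma/4$ at $A^{(n)}$, and then reads off $\norm{A^{(n-1)}F^{(n)}}=\norm{A^{(n-1)}A^{(n)}}\cos^2(\pi/2-\gamma/4)$ and $\norm{A^{(n)}F^{(n)}}=\norm{A^{(n-1)}A^{(n)}}\cos^2(\gamma/4)$ using $\norm{A^{(n-1)}A^{(n)}}=p_n/\cos(\gamma/2)$ from $\eqref{length_new_delta=0}$. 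You never use that right angle; instead you place coordinates, exploit the left--right symmetry (valid since $\delta_\Lt=\delta_\Rt=0$ gives $\gamma_\Lt=\gamma_\Rt=\gamma/2$, so $E_\Lt^{(n)}E_\Rt^{(n)}\perp AC$ and $F^{(n)}$ is the common foot), and solve the $2\times 2$ system for the intersection of the angle bisector from $A^{(n)}$ with the perpendicular bisector of $B_\sigma^{(n-1)}B_\sigma^{(n)}$; I have checked that the resulting $x_{F^{(n)}}$ is exactly the paper's value, and $\eqref{AF-AB}$ then follows as you say. What your approach buys is self-containedness: the right-angle fact the paper merely asserts with a ``Note that'' is implicitly verified by your linear solve, at the cost of coordinate bookkeeping; the paper's argument is shorter but leans on that unproved observation. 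One small point worth making explicit in your write-up: the similarity of $\triangle A^{(n)}B_\sigma^{(n)}C$ with $\triangle AB_\sigma C$ (and hence $\norm{A^{(n)}B_\sigma^{(n)}}=q_n$, $\norm{CB_\sigma^{(n)}}=q_n\tan(\gamma/2)$) uses that $\ell_\sigma$ is collinear with $AB_\sigma$, i.e.\ $\angle AB_\sigma C=\pi/2$, which holds precisely because $\delta_\sigma=0$ is assumed throughout Section $\ref{sec:8}$; as stated, ``$\ell_\sigma^{(n)}$ is parallel to $\ell_\sigma$'' alone does not give the similarity.
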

\begin{proof}
Note that $\angle A^{(n)}E_\sigma^{(n)}A^{(n-1)}=\pi /2$ for $n\geqslant 2$, so that $\angle A^{(n)}E_\sigma^{(n)}F^{(n)}=\pi /2-\gamma /4$.
Thus we see from $\eqref{length_new_delta=0}$ that
\begin{align*}
\norm{A^{(n-1)}F^{(n)}}&=\norm{A^{(n-1)}A^{(n)}}\cos^2\left(\frac{\pi}{2}-\frac{\gamma}{4}\right) 
=\frac{p_n}{\cos (\gamma /2)}\cdot\frac{1-\cos (\gamma /2)}{2}=\frac{p_n}{2}\left(\frac{1}{\cos (\gamma /2)}-1\right) ,\\
\norm{A^{(n)}F^{(n)}}&=\norm{A^{(n-1)}A^{(n)}}\cos^2\frac{\gamma}{4}
=\frac{p_n}{\cos (\gamma /2)}\cdot\frac{1+\cos (\gamma /2)}{2}=\frac{p_n}{2}\left(\frac{1}{\cos (\gamma /2)}+1\right) 
\end{align*}
as desired. 
The latter equation $\eqref{AF-AB}$ is then trivial.
\end{proof}
\begin{remark}\rm
In Construction $\ref{const:division_new}$, $(8)$,
if ${D'}^{(n)}$ exists, then ${D'}^{(n)}$ overlaps with $D^{(n)}$ because of $\norm{A^{(n)}D^{(n)}}=\norm{A^{(n)}{D'}^{(n)}}=p_1+\dots +p_n$.
Also, if ${D'}^{(n)}$ exists, then $D^{(n+1)}$ overlaps with ${D'}^{(n)}$ by $\eqref{AF-AB}$.
Thus $D^{(n)}$ and ${D'}^{(n-1)}$ overlap with $D^{(1)}$ as long as they exist.
In other words, if $D^{(n)}$ exists, then $\triangle A^{(n)}E_\Lt^{(n)}E_\Rt^{(n)}$ 
overlaps with the innermost point $D^{(1)}$ of the tongue of the lowest gadget. 
Also, if ${D'}^{(n-1)}$ exists, then $\triangle A^{(n-1)}E_\Lt^{(n)}E_\Rt^{(n)}$ overlaps with $D^{(1)}$.
\end{remark}
\begin{proposition}\label{prop:existence_D}
Suppose $n\geqslant 2$ and let $q_n=p_1+\dots +p_n$. 
Then points $D^{(n)}$ and ${D'}^{(n-1)}$ given in Construction $\ref{const:division_new}$, $(8)$ exist if and only if
\begin{equation}\label{existence_D}
q_{n-1}<\frac{p_n}{2}\left(\frac{1}{\cos (\gamma /2)}-1\right) . 
\end{equation}
In particular, if $p_1=\dots =p_d=1$ (equal division) and 
$0<\gamma \leqslant2\cos^{-1}(1/3)\approx 2.4619\approx 141.06^\circ$, then $D^{(n)}$ and ${D'}^{(n-1)}$ do not exist for $n\geqslant 2$.
\end{proposition}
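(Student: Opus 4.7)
The plan is to reduce the existence of $D^{(n)}$ and ${D'}^{(n-1)}$ to a single length comparison, which can then be handed off to Lemma~\ref{lem:length_AF}. Both points are defined in Construction~\ref{const:division_new}, $(8)$ as the intersection of a circle centered at an apex ($A^{(n)}$ or $A^{(n-1)}$, respectively) with a straight segment emanating from that same apex. Because the segment begins at the center of the circle, the supporting ray meets the circle at exactly one point, located at distance equal to the radius. Hence the defining intersection actually lies on the segment, and is distinct from the excluded endpoint, if and only if the radius is strictly smaller than the length of the segment.

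For $D^{(n)}$, the radius is $\norm{A^{(n)}B_\sigma^{(n)}}=q_n$, while the segment $A^{(n)}F^{(n)}$ has length $\tfrac{p_n}{2}\bigl(\tfrac{1}{\cos (\gamma /2)}+1\bigr)$ by Lemma~\ref{lem:length_AF}. Substituting $q_n=q_{n-1}+p_n$ and rearranging produces exactly the stated inequality $q_{n-1}<\tfrac{p_n}{2}\bigl(\tfrac{1}{\cos (\gamma /2)}-1\bigr)$. For ${D'}^{(n-1)}$, the radius is $q_{n-1}$ and the relevant segment is $A^{(n-1)}F^{(n)}$, whose length is precisely $\tfrac{p_n}{2}\bigl(\tfrac{1}{\cos (\gamma /2)}-1\bigr)$ by the same lemma; the same inequality therefore characterizes its existence. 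As a bonus, this confirms the parenthetical remark in Construction~\ref{const:division_new}, $(8)$ that $D^{(n)}$ and ${D'}^{(n-1)}$ either both exist or both fail to exist.

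For the equal-division case I would substitute $p_n=1$ and $q_{n-1}=n-1$, so that non-existence becomes $\cos (\gamma /2)\geqslant 1/(2n-1)$. Since $1/(2n-1)$ is decreasing in $n$, it is enough to verify the case $n=2$, which reads $\cos (\gamma /2)\geqslant 1/3$, equivalently $\gamma\leqslant 2\cos^{-1}(1/3)$. There is no real obstacle in this argument: once one notices that each segment starts at the center of its circle, so that the existence question is purely a comparison of two explicit lengths, Lemma~\ref{lem:length_AF} does all the work. The only delicate point is the strictness of the inequality, which corresponds precisely to excluding the endpoint $F^{(n)}$ in the construction.
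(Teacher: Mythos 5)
Your proposal is correct and follows essentially the same route as the paper: the paper's proof reduces existence of $D^{(n)}$ and ${D'}^{(n-1)}$ to the positivity of the quantity $\norm{A^{(n-1)}F^{(n)}}-\norm{A^{(n-1)}B_\sigma^{(n-1)}}=\norm{A^{(n)}F^{(n)}}-\norm{A^{(n)}B_\sigma^{(n)}}=\frac{p_n}{2}\left(\frac{1}{\cos (\gamma /2)}-1\right)-q_{n-1}$ from Lemma~\ref{lem:length_AF}, which is exactly your radius-versus-segment-length comparison made explicit. Your handling of the equal-division case (substituting $p_n=1$, $q_{n-1}=n-1$ and using monotonicity of $1/(2n-1)$ to reduce to $n=2$, i.e.\ $\cos (\gamma /2)\geqslant 1/3$) matches the paper's intended computation and is, if anything, stated more cleanly.
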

\begin{proof}
The first assertion is clear 
because it follows from Construction $\ref{const:division_new}$, $(8)$ 
that $\eqref{AF-AB}$ in Lemma $\ref{lem:length_AF}$ is positive if and only if $D^{(n)}$ and ${D'}^{(n-1)}$ exist.
The second assertion is derived by using $p_n=1,q_n=n$ and setting the right-hand side (RHS) of $\eqref{existence_D}$ to be $2$.
\end{proof}
In the case of equal division,bwe show in Table $\ref{tbl:n_D}$ the number $n_D$ for various values of $\gamma$ 
such that $D_\sigma^{(n)}$ and ${D'_\sigma}^{\! (n-1)}$ for $n\geqslant 2$ exist if and only if $n\leqslant n_D$.
\addtocounter{theorem}{1}
\begin{table}[h]
\begin{tabular}{r|ccc|cc|cc|cc|cc|cc|c}
$\gamma\backslash n_D$&\multicolumn{3}{c|}{none}&\multicolumn{2}{c|}{$2$}&\multicolumn{2}{c|}{$3$}&\multicolumn{2}{c|}{$4$}&\multicolumn{2}{c|}{$5$}&$\cdots$\\ 
\hline
$\gamma (\text{rad})$&$0$&$\cdots$&$2.4619$&$\cdots$&$2.7389$&$\cdots$&$2.8549$&$\cdots$&$2.9189$&$\cdots$&$2.9595$&$\cdots$\\
$\gamma ({}^\circ )$&$0$&$\cdots$&$141.06$&$\cdots$&$156.93$&$\cdots$&$163.57$&$\cdots$&$167.24$&$\cdots$&$169.57$&$\cdots$\\
$\cos (\gamma /2)$&$1$&$\cdots$&$1/3$&$\cdots$&$1/5$&$\cdots$&$1/7$&$\cdots$&$1/9$&$\cdots$&$1/11$&$\cdots$
\end{tabular}
\caption{The number $n_D$ for various $\gamma$ such that $D^{(n)}$ and ${D'}^{(n-1)}$ 
exist for $n\geqslant 2$ if and only if $n\leqslant n_D$ in the case of equal division}
\label{tbl:n_D}
\end{table}
\begin{proposition}\label{prop:existence_G}
Suppose $n\geqslant 2$ and let $q_n=p_1+\dots +p_n$ as before.
In Construction $\ref{const:division_new}$, points $G_\sigma^{(n)}$ and ${G'_\sigma}^{\! (n-1)}$ given in $(7)$ exist if and only if 
\begin{equation}\label{existence_G}
q_{n-1}<\left( 1-\frac{1}{1+\tan (\gamma /4)\tan\phi_\sigma}\right)\frac{p_n}{\cos (\gamma /2)}, 
\end{equation}
where we set $\phi_\sigma =\beta_\sigma +\gamma /4-\pi /2$.
In particular, if $\phi_\sigma =0$, 
then $G_\sigma^{(n)}$ and ${G'_\sigma}^{\! (n-1)}$ do not exist.
Also, in the case of equal division, if $\gamma\leqslant 2\pi /3$, then $G_\sigma^{(n)}$ and ${G'_\sigma}^{\! (n-1)}$ do not exist,
and if $2\pi /3<\gamma\leqslant 2\cos^{-1}(1/4)\approx 2.6362\approx 151.04^\circ$, 
then $G_\sigma^{(n)}$ and ${G'_\sigma}^{\! (n-1)}$ can exist only for $n=2$. 
\end{proposition}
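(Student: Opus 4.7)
The plan is to reduce the existence question to a comparison of two signed distances along the bisector of $\angle B_\sigma^{(n)}A^{(n)}C$, computed in the symmetric coordinates assumed throughout Section~\ref{sec:8} (here $\delta_\Lt=\delta_\Rt=0$, so $\gamma_\sigma=\gamma/2$ by Lemma~\ref{lem:tan_gamma_sigma}). I would place $A$ at the origin with $AC$ along the negative $y$-axis, giving $C=(0,-d/\cos(\gamma/2))$, $B_\sigma=d(\pm\sin(\gamma/2),-\cos(\gamma/2))$, $A^{(n)}=(0,-(d-q_n)/\cos(\gamma/2))$, and $B_\sigma^{(n)}=A^{(n)}+q_n(\pm\sin(\gamma/2),-\cos(\gamma/2))$, where $d=p_1+\dots+p_d$ and $q_n=p_1+\dots+p_n$. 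The bisector of $\angle B_\sigma^{(n)}A^{(n)}C$ is then the ray from $A^{(n)}$ with direction $(\pm\sin(\gamma/4),-\cos(\gamma/4))$, and $G_\sigma^{(n)}$ exists exactly when the intersection of this bisector line with ${k'_\sigma}^{(n)}$ lies on segment $A^{(n)}E_\sigma^{(n)}$.

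First I would compute the length $\norm{A^{(n)}E_\sigma^{(n)}}$: since $E_\sigma^{(n)}$ is cut out by the perpendicular bisector of $B_\sigma^{(n-1)}B_\sigma^{(n)}$ (which is parallel to $\ell_\sigma$ and passes through the midpoint), eliminating the perpendicular-bisector parameter and simplifying using $q_n-q_{n-1}=p_n$ together with $\cos(\gamma/2)=\cos^2(\gamma/4)-\sin^2(\gamma/4)$ yields the clean formula $\norm{A^{(n)}E_\sigma^{(n)}}=p_n\cos(\gamma/4)/\cos(\gamma/2)$. Next I would compute the signed distance $s^{\ast}$ from $A^{(n)}$ along the bisector to the intersection with ${k'_\sigma}^{(n)}$. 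Since ${k'_\sigma}^{(n)}$ starts at $B_\sigma^{(n)}$ making angle $\beta_\sigma$ with $\ell_\sigma^{(n)}$ extended into the gadget side, solving the resulting linear system and applying $\sin(\beta_\sigma+\gamma/4)+\sin(\beta_\sigma-\gamma/4)=2\sin\beta_\sigma\cos(\gamma/4)$ yields $s^{\ast}=q_n\sin\beta_\sigma/\sin(\beta_\sigma-\gamma/4)$. Existence of $G_\sigma^{(n)}$ is then equivalent to $s^{\ast}\leqslant\norm{A^{(n)}E_\sigma^{(n)}}$, i.e.,
\[
q_n\sin\beta_\sigma\cos(\gamma/2)\leqslant p_n\cos(\gamma/4)\sin(\beta_\sigma-\gamma/4).
\]

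Substituting $q_n=q_{n-1}+p_n$ and applying product-to-sum twice gives the identity $\cos(\gamma/4)\sin(\beta_\sigma-\gamma/4)-\sin\beta_\sigma\cos(\gamma/2)=\sin(\gamma/4)\sin\phi_\sigma$, so the condition becomes $q_{n-1}\sin\beta_\sigma\cos(\gamma/2)<p_n\sin(\gamma/4)\sin\phi_\sigma$. Combined with the one-line identity $1+\tan(\gamma/4)\tan\phi_\sigma=\sin\beta_\sigma/[\cos(\gamma/4)\sin(\beta_\sigma+\gamma/4)]$ (obtained from $\sin\phi_\sigma=-\cos(\beta_\sigma+\gamma/4)$ and $\cos\phi_\sigma=\sin(\beta_\sigma+\gamma/4)$), this identifies the right-hand side with $[\tan(\gamma/4)\tan\phi_\sigma/(1+\tan(\gamma/4)\tan\phi_\sigma)]\cdot p_n/\cos(\gamma/2)$, yielding the claimed inequality. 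The case $\phi_\sigma=0$ is immediate as the right-hand side vanishes while $q_{n-1}\geqslant p_1>0$ for $n\geqslant 2$. For equal division, condition~(i) of Construction~\ref{const:division_new} forces $\beta_\sigma<\pi-\gamma/2$, hence $\tan(\gamma/4)\tan\phi_\sigma<1$ and the bracketed factor is strictly less than $1/2$; therefore existence necessarily requires $n-1<1/(2\cos(\gamma/2))$, which forces $\gamma>2\pi/3$ for $n=2$ and $\gamma>2\cos^{-1}(1/4)$ for $n=3$, establishing the remaining claims. The main obstacle I anticipate is the two trigonometric simplifications above: each requires spotting the correct product-to-sum or angle-sum identity to telescope several terms into a single clean expression.
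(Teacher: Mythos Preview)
Your argument is correct and follows essentially the same route as the paper: reduce existence of $G_\sigma^{(n)}$ to a signed-distance comparison along a line through $E_\sigma^{(n)}$, simplify via the substitution $\phi_\sigma=\beta_\sigma+\gamma/4-\pi/2$, and then read off the equal-division bounds from the strict upper bound $\tan(\gamma/4)\tan\phi_\sigma<1$ (which the paper also derives, as its Lemma in the proof). The only difference is the choice of reference line: the paper measures the signed distance along $m_\sigma^{(n)}$ (intersecting it with $k_\sigma^{(n)}$ and comparing with $E_\sigma^{(n)}$), whereas you measure along the bisector $A^{(n)}E_\sigma^{(n)}$ itself via the sine rule in $\triangle A^{(n)}B_\sigma^{(n)}G_\sigma^{(n)}$; your choice is arguably more direct since it works with the very segment on which $G_\sigma^{(n)}$ is required to lie. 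One cosmetic point: you switch from $\leqslant$ to $<$ between your displayed inequality and the next line without comment; since the proposition is stated with a strict inequality you should keep it strict throughout (the boundary case $s^\ast=\norm{A^{(n)}E_\sigma^{(n)}}$ corresponds to $G_\sigma^{(n)}=E_\sigma^{(n)}$, which the paper treats separately).
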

\begin{proof}
Suppose $\norm{AB}=d$, so that $\norm{A^{(n)}B_\sigma^{(n)}}=q_n$. 
Let $N_\sigma^{(n)}$ be the intersection point of $k_\sigma^{(n)}$ and (possibly an extension of) $m_\sigma^{(n)}$.
Then we can define the \emph{signed distance} $sd(E_\sigma^{(n)},N_\sigma^{(n)})$
between $E_\sigma^{(n)}$ and $N_\sigma^{(n)}$ on $m_\sigma^{(n)}$ by
\begin{equation}\label{sd_EN}
sd(E_\sigma^{(n)},N_\sigma^{(n)})=q_n-\frac{p_n\tan (\gamma /2)}{2}\left(\frac{1}{\tan (\gamma /4)}+\frac{1}{\tan (\pi -\beta_\sigma )}\right) ,
\end{equation}
where $\tan (\gamma /2)/2$ is the distance between $\ell_\sigma^{(n)}$ and $m_\sigma^{(n)}$.
We see that ${k'_\sigma}^{\! (n)}$ intersects segment $A^{(n)}E_\sigma^{(n)}$, that is, $G_\sigma^{(n)}$ exists, 
if and only if $sd(E_\sigma^{(n)},N_\sigma^{(n)})<0$.
Thus to prove the first assertion of the proposition, it is sufficient to prove the following result.
\begin{lemma}\label{lem:calc_sd_EN_2}
We have
\begin{equation}\label{ineq:sd_EN_2}
\begin{aligned}
\frac{\tan (\gamma /2)}{2}&\left(\frac{1}{\tan (\gamma /4)}+\frac{1}{\tan (\pi -\beta_\sigma )}\right)\\
=1+&\left( 1-\frac{1}{1+\tan (\gamma /2)\tan\phi_\sigma}\right)\frac{1}{\cos (\gamma /2)}<1+\frac{1}{2\cos (\gamma /2)}.
\end{aligned}
\end{equation}
\end{lemma}
\begin{proof}
As in the proof of Proposition $\ref{prop:calc_length_new}$, $\theta =\gamma /2$ and $\phi_\sigma =\beta_\sigma +\gamma /4 -\pi /2$ move in the range 
given by $\eqref{range_theta_phi}$.
If we set $t=\tan\theta ,u=\tan\phi_\sigma$ also, then the left-hand side (LHS) of $\eqref{ineq:sd_EN_2}$ becomes 
\begin{align*}
\text{LHS of $\eqref{ineq:sd_EN_2}$}&=\frac{\tan 2\theta}{2}\left(\frac{1}{\tan\theta}-\tan (\theta -\phi_\sigma )\right)
=\frac{t}{1-t^2}\left(\frac{1}{t}-\frac{t-u}{1+tu}\right)\\
&=\frac{1}{1+tu}\left( 1+\frac{2tu}{1-t^2}\right)
=\frac{1}{1+tu}\left\{ 1+tu\left( 1+\frac{1+t^2}{1-t^2}\right)\right\}\\
&=1+\frac{tu}{1+tu}\cdot\frac{1+t^2}{1-t^2}
=1+\left( 1-\frac{1}{1+\tan\theta\tan\phi_\sigma}\right)\frac{1}{\cos 2\theta},
\end{align*}
which gives the equality in $\eqref{ineq:sd_EN_2}$.
Since $\tan\theta\tan\phi_\sigma$ is monotonically increasing with respect to both $\theta$ and $\phi_\sigma$ in the range $\eqref{range_theta_phi}$, we have
\begin{equation*}
0\leqslant\tan\theta\tan\phi_\sigma <\tan\theta\tan (\pi /2-\theta )=1,
\end{equation*}
which gives the inequality in $\eqref{ineq:sd_EN_2}$.
\end{proof}
It follows from Lemma $\ref{lem:calc_sd_EN_2}$ that $sd(E_\sigma^{(n)},N_\sigma^{(n)})$ is bounded from below as
\begin{equation}\label{bound_sd}
sd(E_\sigma^{(n)},N_\sigma^{(n)})>q_{n-1}-\frac{p_n}{2\cos (\gamma /2)}.
\end{equation}
If $\phi_\sigma =0$, then the RHS of $\eqref{existence_G}$ vanishes, so that no $G_\sigma^{(n)}$ and ${G'_\sigma}^{\! (n-1)}$ exist.
The assertion of the proposition in the case of equal division is derived by setting $p_n=1,q_n=n$ and 
considering the condition that the RHS of $\eqref{bound_sd}$ is positive or zero for $n=2,3$.
This completes the proof of Proposition $\ref{prop:existence_G}$.
\end{proof}
In the case of equal division, we show in Table $\ref{tbl:n_G}$ the smallest number $n_G$ 
that makes the RHS of $\eqref{bound_sd}$ positive or zero for various values of $\gamma$.
In other words, $n_G$ is the smallest number such that
if $G_\sigma^{(n)}$ and ${G'_\sigma}^{\! (n-1)}$ exist for $n\geqslant 2$, then $n$ satisfies $n\leqslant n_G$ for any $\beta_\sigma$.
\addtocounter{theorem}{1}
\begin{table}[h]
\begin{tabular}{c|ccc|cc|cc|cc|cc|cc|c}
$\gamma\backslash n_G$&\multicolumn{3}{c|}{none}&\multicolumn{2}{c|}{$2$}&\multicolumn{2}{c|}{$3$}&\multicolumn{2}{c|}{$4$}&\multicolumn{2}{c|}{$5$}&$\cdots$\\ 
\hline
$\gamma (\text{rad})$&$0$&$\cdots$&$2\pi /3$&$\cdots$&$2.6362$&$\cdots$&$2.8067$&$\cdots$&$2.8909$&$\cdots$&$2.9413$&$\cdots$\\
$\gamma ({}^\circ )\phantom{ad}$&$0$&$\cdots$&$120$&$\cdots$&$151.04$&$\cdots$&$160.81$&$\cdots$&$165.64$&$\cdots$&$168.52$&$\cdots$\\
$\cos (\gamma /2)$&$1$&$\cdots$&$1/2$&$\cdots$&$1/4$&$\cdots$&$1/6$&$\cdots$&$1/8$&$\cdots$&$1/10$&$\cdots$
\end{tabular}
\caption{The smallest number $n_G$ for various $\gamma$ such that if $G^{(n)}$ and ${G'}^{(n-1)}$ exist for $n\geqslant 2$, then $n\leqslant n_G$
for any $\beta_\sigma$ in the case of equal division}
\label{tbl:n_G}
\end{table}
\begin{remark}\rm
We chose $\alpha =48^\circ ,\beta_\Lt =60^\circ ,\beta_\Rt =96^\circ$ and $\gamma =156^\circ$ in Figure $\ref{fig:division_development}$
so that all cases in Table $\ref{tbl:assignment_new_division}$ appear in Figure $\ref{fig:division_new}$. 
Indeed, we see from Proposition $\ref{prop:existence_D}$ and Table $\ref{tbl:n_D}$ that $D^{(n)}$ exists only for $n=2$.
Also, we have
\begin{equation*}
\phi_\Lt =\beta_\Lt +\frac{\gamma}{4}-\frac{\pi}{2}=9^\circ ,\quad\phi_\Rt =\beta_\Rt +\frac{\gamma}{4}-\frac{\pi}{2}=45^\circ ,
\end{equation*}
for which the RHS of $\eqref{existence_G}$ in Proposition $\ref{prop:existence_G}$ becomes
\begin{align*}
1+\left( 1-\frac{1}{1+\tan 39^\circ \tan\phi_\sigma}\right)\frac{1}{\cos 78^\circ}\approx
\begin{cases}
1.5468&\text{for }\sigma =\Lt ,\\
3.1521&\text{for }\sigma =\Rt .
\end{cases}
\end{align*}
Thus $G_\Lt^{(n)}$ exists for $n=1,2$, while no $G_\Rt^{(n)}$ exists.
\end{remark}
\begin{proposition}\label{prop:K=M}
Suppose $\beta_\sigma +\gamma /4=\pi /2$ in Construction $\ref{const:division_new}$. 
If $D^{(n)}$ and ${D'}^{(n-1)}$ exist in $(8)$, then $K_\sigma^{(n)}$ given in $(9)$ and $M_\sigma^{(n)}$ given in $(10)$ are identical points.
\end{proposition}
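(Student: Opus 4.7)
The plan is to verify that the two lines defining $K_\sigma^{(n)}$ in step $(9)$ coincide with the two rays defining $M_\sigma^{(n)}$ in step $(10)$, so that their common intersection with segment $E_\Lt^{(n)} E_\Rt^{(n)}$ is the same point.

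First I would establish the underlying reflection identifications. By Lemma \ref{lem:length_AF}, $F^{(n)}$ is the midpoint of $D^{(n)}{D'}^{(n-1)}$ on line $AC$, so $E_\Lt^{(n)} E_\Rt^{(n)}$ (perpendicular to $AC$ at $F^{(n)}$) is the perpendicular bisector of $D^{(n)}{D'}^{(n-1)}$, giving $\norm{E_\sigma^{(n)} D^{(n)}} = \norm{E_\sigma^{(n)} {D'}^{(n-1)}}$. Combined with the perpendicular bisector of $B_\sigma^{(n-1)}B_\sigma^{(n)}$ (to which $E_\sigma^{(n)}$ belongs by construction) and the angle bisector $A^{(n)}E_\sigma^{(n)}$ in the isosceles $\triangle A^{(n)}B_\sigma^{(n)}D^{(n)}$, the point $E_\sigma^{(n)}$ is equidistant from all four of $B_\sigma^{(n-1)}, B_\sigma^{(n)}, D^{(n)}, {D'}^{(n-1)}$. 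Since $A^{(n-1)}$ and $A^{(n)}$ are likewise equidistant from the respective pairs $\{B_\sigma^{(n-1)},{D'}^{(n-1)}\}$ and $\{B_\sigma^{(n)}, D^{(n)}\}$, the lines $A^{(n-1)}E_\sigma^{(n)}$ and $A^{(n)}E_\sigma^{(n)}$ are the perpendicular bisectors of those segments, and reflections across them swap $B_\sigma^{(n-1)} \leftrightarrow {D'}^{(n-1)}$ and $B_\sigma^{(n)} \leftrightarrow D^{(n)}$.

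Next I would apply the boundary condition $\beta_\sigma + \gamma/4 = \pi/2$. For $n=1$ this forces $G_\sigma^{(1)} = E_\sigma^{(1)}$ (the special-case column of Table \ref{tbl:assignment_new}), so $D^{(1)}E_\sigma^{(1)} = D^{(1)}G_\sigma^{(1)}$ is the reflection of $B_\sigma^{(1)}G_\sigma^{(1)}$ across $A^{(1)}E_\sigma^{(1)}$, and $B_\sigma^{(1)}G_\sigma^{(1)}$ lies along ${k'_\sigma}^{(1)}$ by Construction \ref{const:division_new}, $(7)$. Because the rays ${k'_\sigma}^{(n)}$ are mutually parallel (reflections of the parallel family $k_\sigma^{(n)}$ across the parallel family $\ell_\sigma^{(n)}$) and the axes $A^{(n)}E_\sigma^{(n)}$ are mutually parallel (each bisects the congruent angle $\gamma/2 = \angle B_\sigma^{(n)}A^{(n)}C$ with arms in fixed directions), the reflection of ${k'_\sigma}^{(n)}$ across $A^{(n)}E_\sigma^{(n)}$ has the same direction as $D^{(1)}E_\sigma^{(1)}$. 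Since this reflected ray starts at $D^{(n)}$, it is exactly the line through $D^{(n)}$ parallel to $D^{(1)}E_\sigma^{(1)}$ used to define $K_\sigma^{(n)}$.

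The analogous identification of the ray through ${D'}^{(n-1)}$ with the line through ${D'}^{(n-1)}$ parallel to $D^{(1)}E_{\sigma'}^{(1)}$ is more subtle: the reflection axis $A^{(n-1)}E_\sigma^{(n)}$ is not parallel to the other axes $A^{(m)}E_\sigma^{(m)}$ (the angle $\angle B_\sigma^{(n-1)}A^{(n-1)}{D'}^{(n-1)}$ equals $\pi - \gamma/2$ rather than $\gamma/2$, since ${D'}^{(n-1)}$ lies above $A^{(n-1)}$ on segment $A^{(n-1)}F^{(n)}$), and the target direction lies on the opposite side of line $AC$, so the correspondence mixes the two halves of the construction. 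I expect this to be the main obstacle. I would resolve it by direct angle computation in coordinates with line $AC$ as an axis: expressing the reflected direction of $k_\sigma^{(n-1)}$ across $A^{(n-1)}E_\sigma^{(n)}$ in terms of $\beta_\sigma$, $\gamma$, and the position of $E_\sigma^{(n)}$ on $E_\Lt^{(n)}E_\Rt^{(n)}$, and verifying using the boundary condition $\beta_\sigma + \gamma/4 = \pi/2$ that it is parallel to $D^{(1)}E_{\sigma'}^{(1)}$. Once both defining pairs are identified, $K_\sigma^{(n)} = M_\sigma^{(n)}$ follows at once.
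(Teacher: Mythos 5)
Your second paragraph is exactly the paper's argument: under $\beta_\sigma+\gamma/4=\pi/2$ one has $G_\sigma^{(1)}=E_\sigma^{(1)}$, and since the rays ${k'_\sigma}^{(n)}$ are mutually parallel and the reflection axes $A^{(n)}E_\sigma^{(n)}$ are mutually parallel, the reflected ray from $D^{(n)}$ defining $M_\sigma^{(n)}$ is parallel to $D^{(1)}G_\sigma^{(1)}=D^{(1)}E_\sigma^{(1)}$ and hence coincides with the line through $D^{(n)}$ used in step $(9)$ (you even make explicit the parallelism of the axes, which the paper leaves implicit). The ``main obstacle'' of your third paragraph is not needed for Proposition \ref{prop:K=M}: by step $(9)$ of Construction \ref{const:division_new}, $K_\sigma^{(n)}$ already lies on the line through $D^{(n)}$ parallel to $D^{(1)}E_\sigma^{(1)}$ and on $E_\Lt^{(n)}E_\Rt^{(n)}$, and since that line is not parallel to $E_\Lt^{(n)}E_\Rt^{(n)}$ it meets it in a single point, so the identification on the $D^{(n)}$ side alone gives $K_\sigma^{(n)}=M_\sigma^{(n)}$; the corresponding claim about the ray through ${D'}^{(n-1)}$ is a consistency assertion of the construction itself (stated without proof in the paper), not part of this proposition.
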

\begin{proof}
Recall that $D^{(n)}K_\sigma^{(n)}$ is parallel to $D^{(1)}E_\sigma^{(1)}$, 
while $D^{(n)}M_\sigma^{(n)}$ is contained in the reflection of ${k'_\sigma}^{\! (n)}$ across $A^{(n)}E_\sigma^{(n)}$.
Also recall that $D^{(1)}G_\sigma^{(1)}$ is contained in the reflection of ${k'_\sigma}^{\! (1)}$.
Since ${k'_\sigma}^{(n)}, n=1,\dots ,d$ are parallel to each other, $D^{(n)}M_\sigma^{(n)}$ is parallel to $D^{(1)}G_\sigma^{(1)}$.
But it follows from $\beta_\sigma +\gamma /4=\pi /2$ that $E_\sigma^{(1)}=G_\sigma^{(1)}$, 
and therefore both $D^{(n)}K_\sigma^{(n)}$ and $D^{(n)}M_\sigma^{(n)}$ are parallel to $D^{(1)}E_\sigma^{(1)}$, 
which implies that $K_\sigma^{(n)}=M_\sigma^{(n)}$. 
This completes the proof of Proposition $\ref{prop:K=M}$.
\end{proof}
Thus from Propositions $\ref{prop:existence_D}$, $\ref{prop:existence_G}$ and $\ref{prop:K=M}$, we obtain Table $\ref{tbl:assignment_new_division_phi=0}$ 
which gives the assignment of mountain folds and valley folds in the case of $\beta_\sigma +\gamma /4 =\pi /2$
in the resulting crease pattern in Construction $\ref{const:division_new}$.
\addtocounter{theorem}{1}
\begin{table}[h]
\begin{tabular}{c|c|c|c}
&common&\multicolumn{2}{c}{$n$ with $2\leqslant n\leqslant d$ such that}\\
\cline{3-4}
&creases&$\exists D^{(n)}$&$\nexists D^{(n)}$\\
\hline
mountain&$j_\sigma ,\ell_\sigma^{(n)},A^{(1)}D^{(1)},$&\multicolumn{2}{c}{$A^{(n-1)}E_\sigma^{(n)},B_\sigma^{(n)}J_\sigma^{(n)}$}\\
\cline{3-4}
folds&$B_\sigma^{(1)}E_\sigma^{(1)},D^{(1)}E_\sigma^{(1)}$&$A^{(n)}D^{(n)},D^{(n)}K_\sigma^{(n)}$&$A^{(n)}F^{(n)}$\\
\hline
valley&$k_\sigma ,m_\sigma^{(n)},$&\multicolumn{2}{c}{$B_\sigma^{(n-1)}J_\sigma^{(n)}$}\\
\cline{3-4}
folds&$A^{(n)}E_\sigma^{(n)},E_\Lt^{(n)}E_\Rt^{(n)}$&$A^{(n-1)}{D'}^{(n-1)},{D'}^{(n)}K_\sigma^{(n)}$&$A^{(n-1)}F^{(n)}$\\
\end{tabular}\vspace{0.5cm}
\caption{Assignment of mountain folds and valley folds for the division of the new gadget for $\beta_\sigma +\gamma /4=\pi /2$}
\label{tbl:assignment_new_division_phi=0}
\end{table}
\begin{remark}\rm
We may be able to assign another combination of of mountain folds and valley folds to the gadgets
corresponding to the folding order of the gadgets.
Note that in Construction $\ref{const:division_new}$, if we assign mountain folds and valley folds  
to the extrusion as in Table $\ref{tbl:assignment_new_division}$ or $\ref{tbl:assignment_new_division_phi=0}$, 
then the layer of each gadget lies under that of the subsequent gadget. 
For $n=2,\dots d$, if points $D^{(n)}$ and ${D'}^{(n-1)}$ do not exist, then we can exchange the order of 
the layer including $A^{(n)}F^{(n)}$ in the $n$-th gadget and the layer including $A^{(n-1)}F^{(n-1)}$ or $A^{(n-1)}D^{(n-1)}$ in the $(n-1)$-th gadget.
Correspondingly, creases $A^{(n-1)}E_\Lt^{(n)},A^{(n-1)}E_\Rt^{(n)}$ change to valley folds
and $E_\Lt^{(n)}E_\Rt^{(n)}$ to a mountain fold
from the assignment given in Table $\ref{tbl:assignment_new_division}$ or $\ref{tbl:assignment_new_division_phi=0}$,
while the assignment to the $(n-1)$-th gadget is unchanged.
This exchange of the layers makes the appearance of the ridge better, but the folding becomes a little more difficult in exchange.
\end{remark}
\begin{remark}\rm
More or less surprisingly, upper gadgets do not interfere with adjacent gadgets 
in the sense that we can widen the outgoing pleats as we like at least theoretically,
although we have to fold back their pleats appropriately so that they do not collide with adjacent pleats.
Therefore, only the sizes of the lowest gadgets are restricted by their interference coefficients.
As an example, we give in Figrure $\ref{fig:widening_pleats}$
two crease patterns of an extruded square prism five times as high as a cube with the same base,
where we choose $d=2,p_1=2/5,p_2=8/5$ in the crease pattern on the left side, and $d=3,p_1=3/5,p_2=p_3=6/5$ on the right side.
Observe that each of the double pleats of the upper gadgets in the crease pattern on the left side behaves as a single pleat with double thickness,
and does not engage with other pleats in between its layers, 
while all layers of the pleats in the crease pattern on the right side are independent and engage with other pleats.
As another example, we can extrude a square pyramid with a sharp end which cannot be extruded without repetition of new gadgets,
whose crease pattern is given in Figure $\ref{fig:sharp_pyramid}$.
\end{remark}
\begin{figure}[htbp]
  \begin{center}
\addtocounter{theorem}{1}
          \includegraphics[width=\hsize]{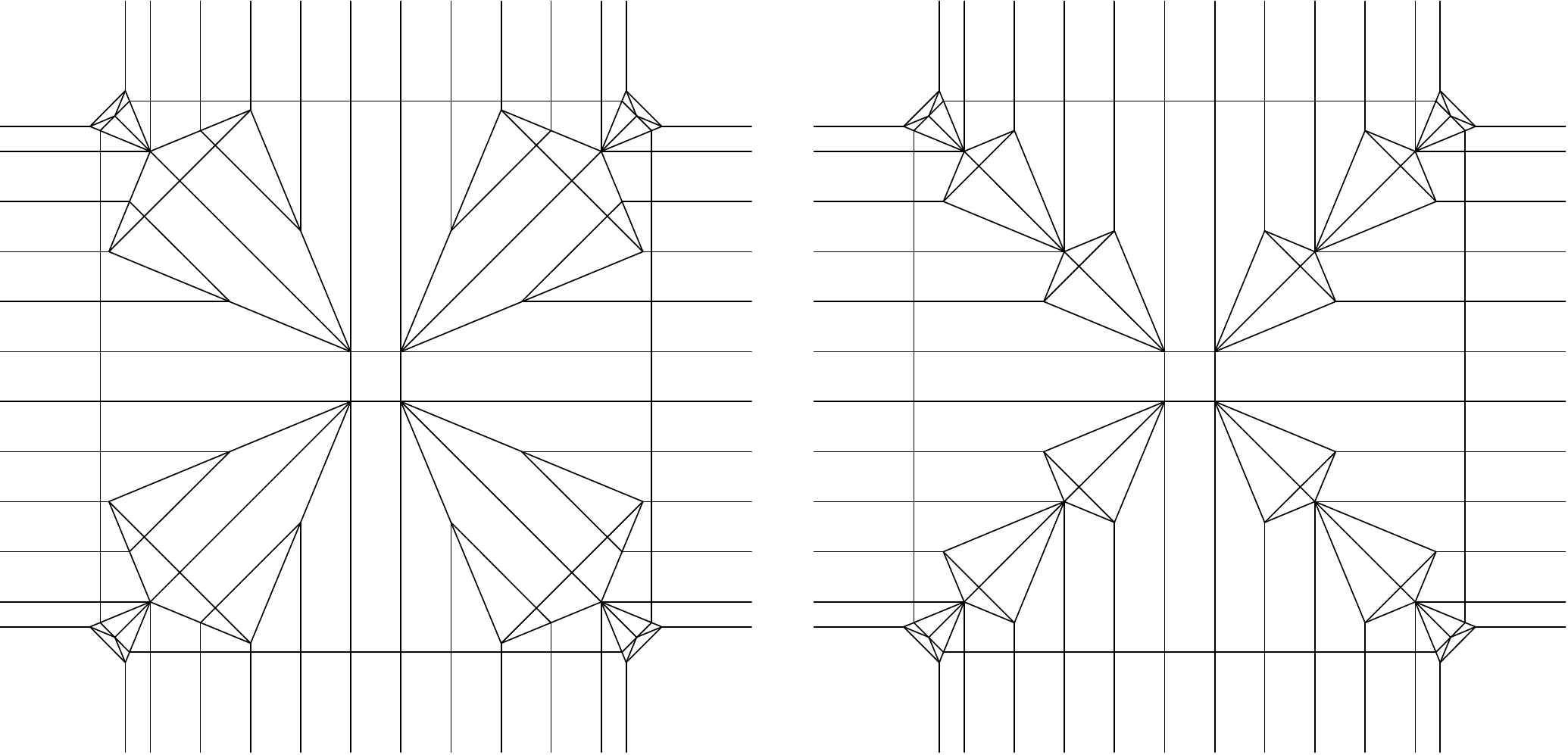}
    \caption{CPs of an extruded square prism of dimensions $1\times 1\times 5$}
    \label{fig:widening_pleats}
\end{center}
\end{figure}
\begin{figure}[htbp]
  \begin{center}
\addtocounter{theorem}{1}
          \includegraphics[width=0.8\hsize]{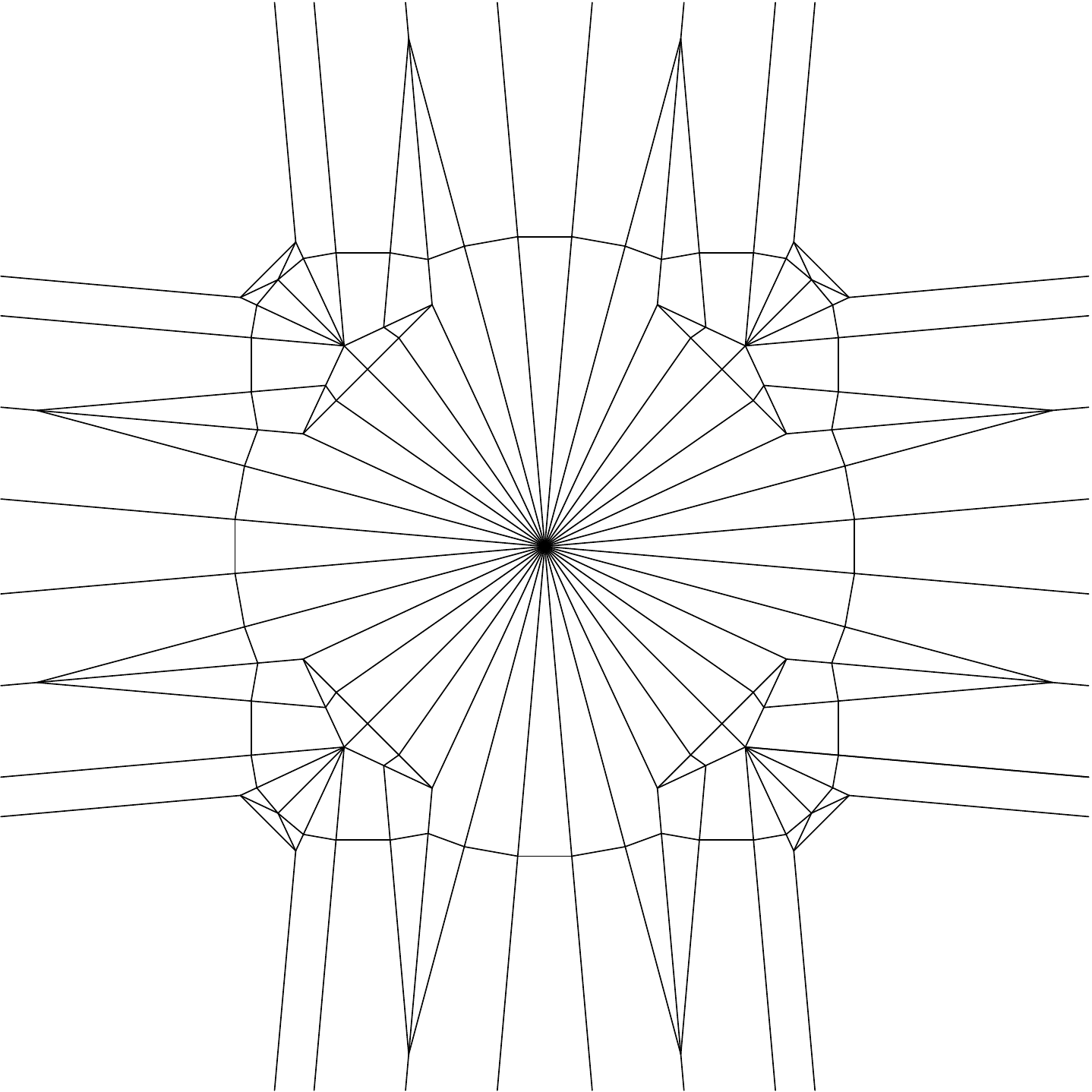}
    \caption{CP of an extruded square pyramid with a sharp end, where each side triangular face has an apex angle of $10^\circ$}
    \label{fig:sharp_pyramid}
\end{center}
\end{figure}

We end this section with the proportional division of a conventional gadget into $d$ gadgets. 
We use the same development as in the case of our new gadgets for comparison.
\begin{construction}\label{const:division_conv}\rm
Consider a development as shown in Figure $\ref{fig:division_development}$, for which we require the following conditions.
\begin{enumerate}[(i)]
\item $\alpha <\beta_\Lt+ \beta_\Rt$, $\beta_\Lt <\alpha+ \beta_\Rt$ and $\beta_\Rt <\alpha+ \beta_\Lt$.
\item $\alpha +\beta_\Lt +\beta_\Rt <2\pi$.
\item $\alpha +\beta_\Lt +\beta_\Rt >\pi$.
\end{enumerate}
The crease pattern of the proportional division of the conventional $3$D gadget into $d$ gadgets
in the ratio $p_1:\dots :p_d$ from the bottom 
is constructed as follows, where all procedures are done for both $\sigma =\Lt ,\Rt$.
\begin{enumerate}
\item Draw a perpendicular to $\ell_\sigma$ through $B_\sigma$ for both $\sigma =\Lt ,\Rt$, letting $C$ be the intersection point.
\item Divide segment $CA$ into $d$ parts proportionally in the ratio $p_1:\dots :p_d$, 
letting $A^{(1)},\dots ,A^{(d-1)}$ to be the divided points in order from the side of $C$.
\item For $n=1,\dots ,d-1$, draw a ray $\ell_\sigma^{(n)}$ parallel to $\ell_\sigma$, starting from $A^{(n)}$ and going in the same direction as $\ell_\sigma$,
letting $B_\sigma^{(n)}$ be the intersection point with $B_\sigma C$.
\item For $n=1,\dots ,d-1$, draw a ray $k_\sigma^{(n)}$ parallel to $k_\sigma$ from $B_\sigma^{(n)}$ to the side of $\ell_\sigma^{(n+1)}$,
where we set $\ell_\sigma^{(d)}=\ell_\sigma$.
Also, draw a ray ${k'_\sigma}^{\! (n)}$ starting from $B_\sigma^{(n)}$ 
which is a reflection of $k_\sigma^{(n)}$ across $\ell_\sigma^{(n)}$, where we set $k_\sigma^{(d)}=k_\sigma$ and $B_\sigma^{(d)}=B_\sigma$.
\item For $n=2,\dots ,d$, let $J_\sigma^{(n)}$ be the intersection point of $k_\sigma^{(n-1)}$ and ${k'_\sigma}^{(n)}$.
\item Draw a perpendicular bisector $m_\sigma^{(1)}$ of $B_\sigma^{(1)}C$, letting $D_\sigma^{(1)}$ be the intersection point with ${k'_\sigma}^{(1)}$.
Also, restrict $m_\sigma^{(1)}$ to the ray starting from $D_\sigma^{(1)}$ and going to the same direction as $\ell_\sigma$.
\item For $n=2,\dots ,d$, draw a perpendicular bisector $m_\sigma^{(n)}$ of $B_\sigma^{(n-1)}B_\sigma^{(n)}$ 
and a line through $A^{(n)}$ parallel to $A^{(1)}D_\sigma^{(1)}$, letting $D_\sigma^{(n)}$ be the intersection point of the two lines.
Also, restrict $m_\sigma^{(n)}$ to the ray starting from $D_\sigma^{(n)}$ and going to the same direction as $\ell_\sigma$, 
which passes through $J_\sigma^{(n)}$.
\item The desired crease pattern is shown as the solid lines in Figure $\ref{fig:division_conv}$, 
and the assignment of mountain folds and valley folds is given in Table $\ref{tbl:assignment_conv_division}$.
\end{enumerate}
\end{construction}
\begin{figure}[htbp]
  \begin{center}
\addtocounter{theorem}{1}
          \includegraphics[width=\hsize]{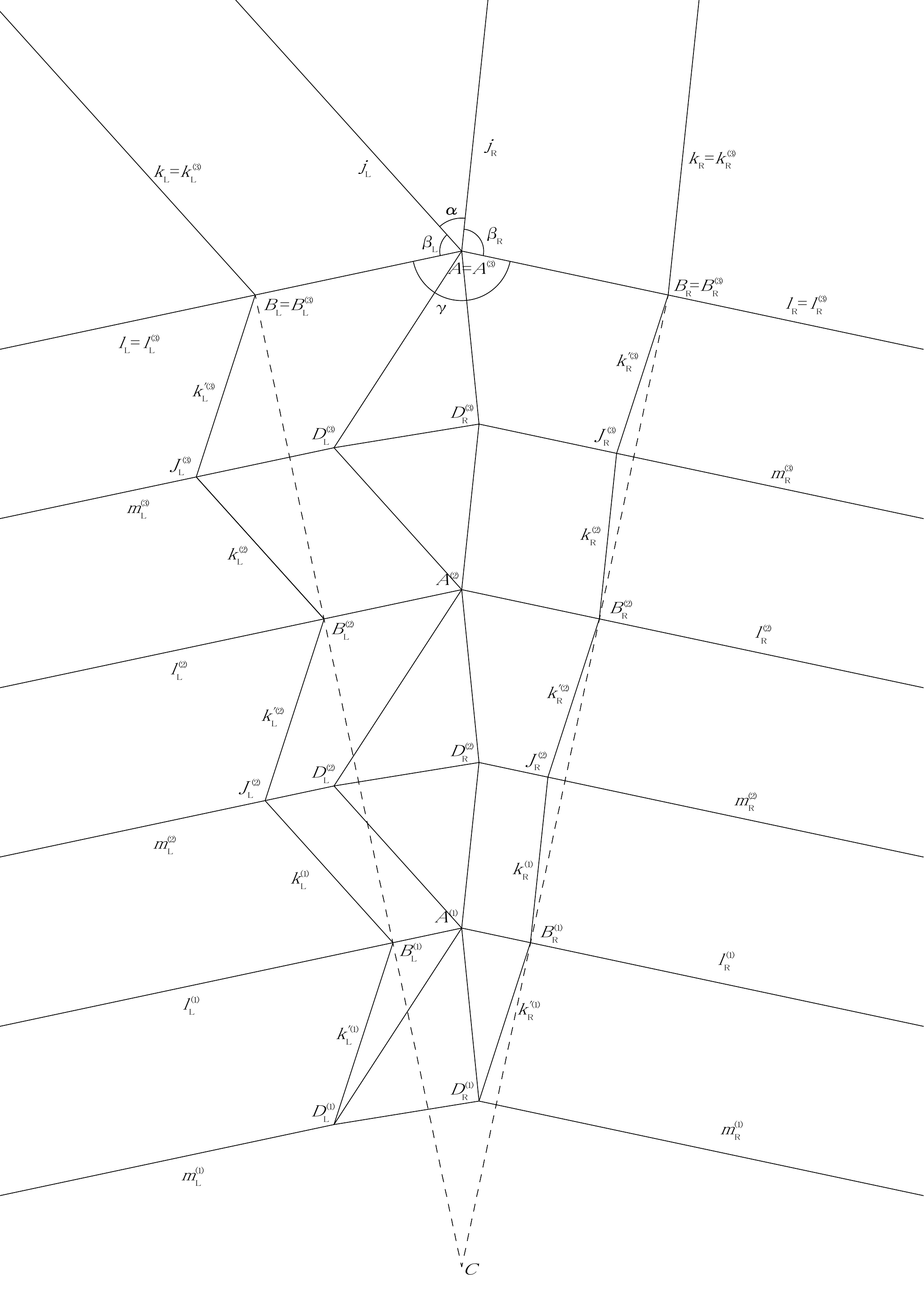}
    \caption{CP of the division of the conventional gadget}
    \label{fig:division_conv}
\end{center}
\end{figure}
\addtocounter{theorem}{1}
\begin{table}[h]
\begin{tabular}{r|c|c}
&common creases&$n$ with $2\leqslant n\leqslant d$\\ \hline
mountain folds&$j_\sigma ,\ell_\sigma^{(n)},A^{(n)}B_\sigma^{(n)},B_\sigma^{(1)}D_\sigma^{(1)}$
&$A^{(n-1)}D_\sigma^{(n)},B_\sigma^{(n)}J_\sigma^{(n)}$\\ \hline
valley folds&$k_\sigma ,m_\sigma^{(n)},A^{(n)}D_\sigma^{(n)},D_\Lt^{(n)}D_\Rt^{(n)}$
&$B_\sigma^{(n-1)}J_\sigma^{(n)}$\\
\end{tabular}
\caption{Assignment of mountain folds and valley folds for the division of the conventional gadget}
\label{tbl:assignment_conv_division}
\end{table}
\section{Remarks and further applications}\label{sec:9}
In this section, we give some remarks on the constructions of the new $3$D gadgets.
We also briefly review further applications of the new gadgets, some of which will be studied in more detail in a sequel to this paper.\\

\noindent$\bullet$ {\it Modification of the crease pattern.}
Here consider the resulting crease pattern in Construction $\ref{const:new}$ 
and let us forget the crease patten on the side of $\sigma'$, i.e., the other side of $\sigma$.
If $\alpha_\sigma >0$, then we can duplicate the crease patten on the side of $\sigma$ by reflection across $AC$, to obtain a new $3$D gadget.
There is another way of constructing a new gadget. 
Draw a circle with center $A$ through $B_\sigma$.
Draw a tangent line to the circle to the side of $\sigma'$, letting $T_{\sigma'}$ to be the tangency point.
Choose a point $B_{\sigma'}$ on the circular arc $DT_{\sigma'}$ excluding $D$ so that the region on the side of $\sigma$ bounded 
by the bisector of $\angle B_\Lt A B_\Rt$ includes $j_\sigma\setminus\{A\}$ properly, 
which is necessary for condition $\mathrm{(i)}$ of Construction $\ref{const:new}$ to hold. 
In other words, we choose $B_{\sigma'}$ on the circular arc $DT_{\sigma'}$ excluding $D$ so that $\gamma_{\sigma'}=\angle DAB_{\sigma'}$ satisfies
$(\gamma_{\sigma'}-\gamma_\sigma )/2<\alpha_\sigma$. 
Then $\ell_{\sigma'}$ is determined as a ray starting from $B_{\sigma'}$ and perpendicular to $CB_{\sigma'}$.
If we choose $\beta_{\sigma'}$ so that $(\pi -\gamma_{\sigma'})/2\leqslant\beta_{\sigma'}<\pi -(\gamma_\sigma +\gamma_{\sigma'})/2$, 
then conditions $\mathrm{(i)}$--$\mathrm{(iv)}$ of Construction $\ref{const:new}$ hold, and thus we obtain another $3$D gadget.\\

\noindent$\bullet$ {\it Insertion of a face.}
Consider the resulting crease pattern in Construction $\ref{const:new}$. 
Then we can insert a rectangular face of any width between ridge $AB$
as in Figure $\ref{fig:insertion}$, where the newly added creases $A_\Lt A_\Rt$ and $D_\Lt D_\Rt$ are mountain folds. 
When we insert a face between the ridge of the new cube gadget, we reproduce a gadget created by Natan in \cite{Natan}.
Using this operation, we can extrude a prism of a convex polygon with $2N(>4)$ sides with $N$ new gadgets.
Also, we can repeat each new gadget to make the extrusion higher, although there may appear `horizontal streaks' on inserted faces. 
As an example,
we show in Figure $\ref{fig:hexagonal_prism}$ the crease pattern of an extruded hexagonal prism made by inserting three side faces to a triangular prism. \\
\begin{figure}[htbp]
  \begin{center}
\addtocounter{theorem}{1}
          \includegraphics[width=0.8\hsize]{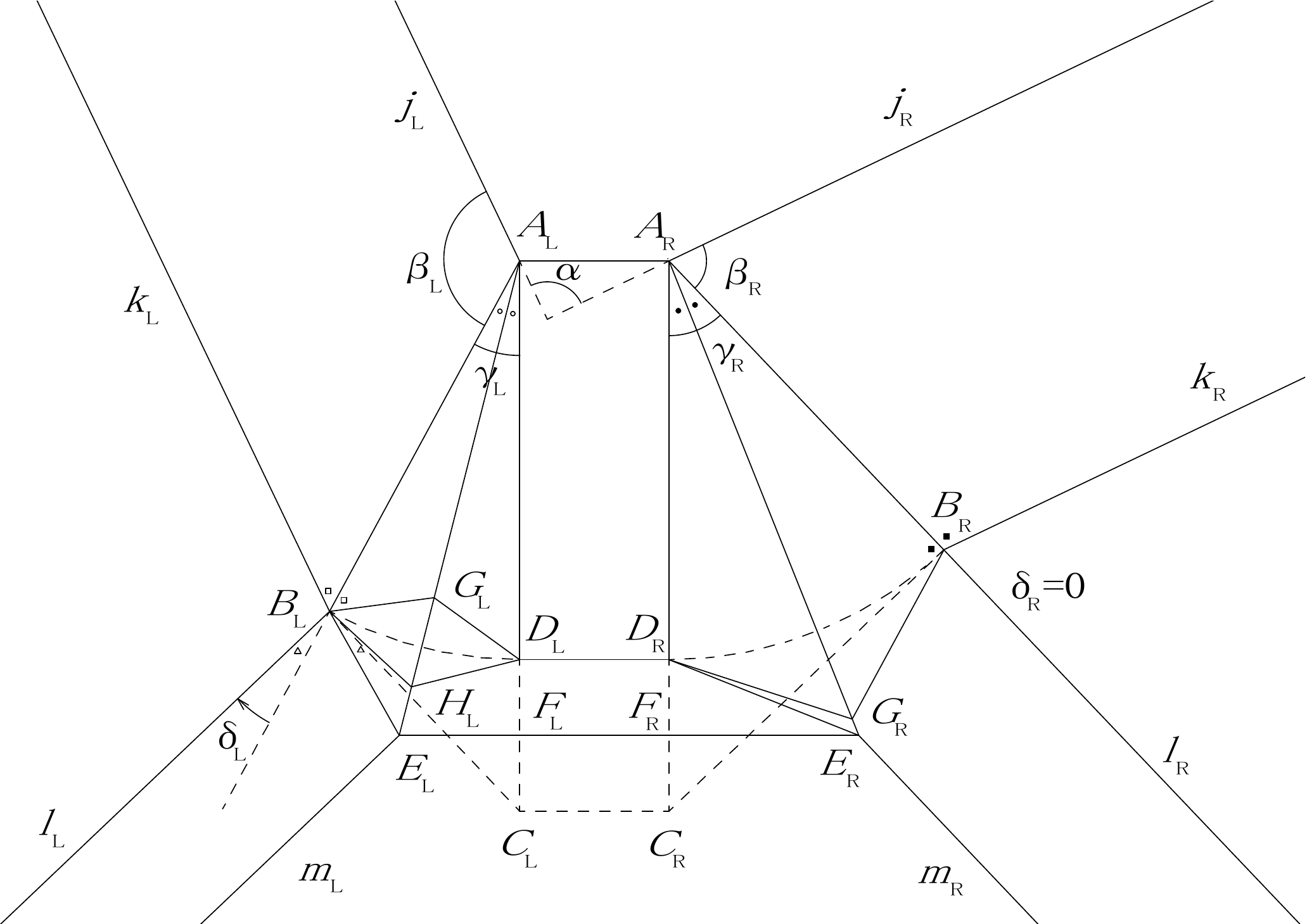}
    \caption{Insertion of a face between the ridge in a new gadget}
    \label{fig:insertion}
\end{center}
\end{figure}
\begin{figure}[htbp]
  \begin{center}
\addtocounter{theorem}{1}
          \includegraphics[width=0.8\hsize]{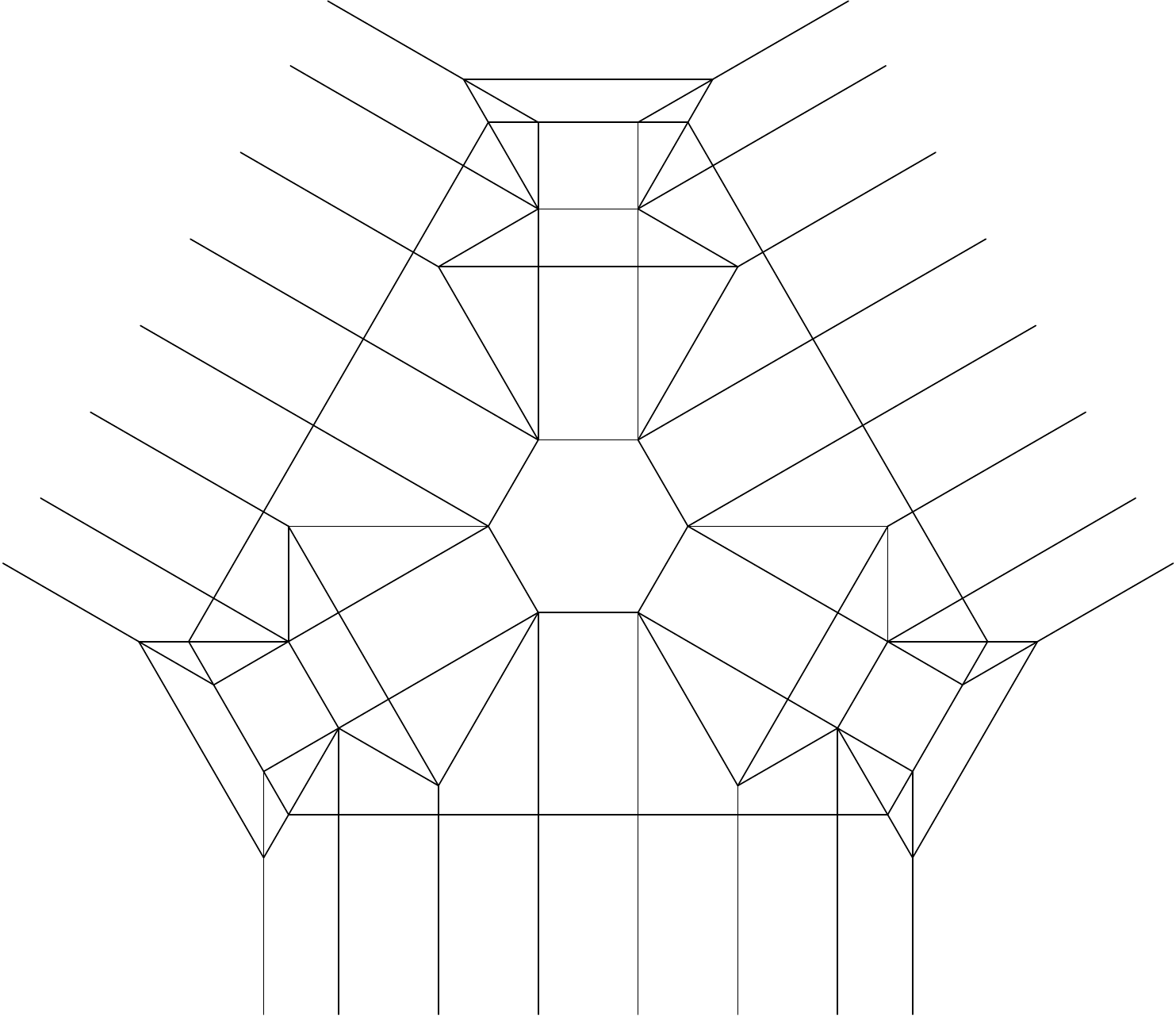}
    \caption{CP of an extruded hexagonal prism made from a triangular prism by inserting three side faces}
    \label{fig:hexagonal_prism}
\end{center}
\end{figure}

\noindent$\bullet$ {\it Negative $3$D gadgets.}
In the developments in Figures $\ref{fig:development_conv}$ and $\ref{fig:development_new}$ we have considered the face bounded by $j_\Lt ,j_\Rt$ 
with inner angle $\alpha$ as lying in $H_h=\set{z=h}$ for some $h>0$, and the face bounded by $k_\Lt ,k_\Rt$ as lying in $H_0$.
Alternatively, even if we suppose the faces bounded by $j_\Lt ,j_\Rt$ and $k_\Lt ,k_\Rt$ lie in $H_0$ and $H_h$ respectively,
the development is the same.
Thus we expect naturally that we can construct a `negative' $3$D gadget by modifying the resulting crease pattern in Construction $\ref{const:new}$.
Here we shall give two constructions of negative $3$D gadgets in the case $\delta_\Lt =\delta_\Rt =0$, which differ in scope of application.

The first construction applies to cases where $\gamma <2\pi /3$ holds and
$\beta_\sigma\leqslant\pi /2+\gamma /4$ holds for either $\sigma =\Lt$ or $\sigma =\Rt$.
We may assume $\beta_\Lt\leqslant\pi /2+\gamma /4$. 
Let $G'_\Rt$ be the intersection point of segment $AE_\Rt$ and the ray starting from $G_\Lt$ through $D$.
Also, let $P_\Rt$ be the intersection point of ray $m_\Rt$ and the ray starting from $E_\Lt$ through $C$.
These points $G'_\Rt$ and $P_\Rt$ exist because we have that
\begin{align*}
\angle ADG'_\Rt&=\beta_\Lt\leqslant\frac{\pi}{2}+\frac{\gamma}{4}=\angle ADE_\Rt ,\quad\text{and}\\
\angle P_\Rt E_\Lt E_\Rt +\angle P_\Rt E_\Rt E_\Lt&=2\angle DE_\Lt E_\Rt +\angle AB_\Rt E_\Rt\\
&=2\cdot\frac{\gamma}{4}+\left(\frac{\pi}{2}+\frac{\gamma}{4}\right) =\frac{\pi}{2}+\frac{3}{4}\gamma <\pi 
\end{align*}
by the above conditions.
Then the desired crease pattern is constructed as in Figure $\ref{fig:negative_gadget_1}$, 
and the assignment of mountain folds and valley folds is given in Table $\ref{tbl:assignment_negative_1}$,
where the subscript $\sigma$ is taken for both $\sigma =\Lt ,\Rt$.
Note that in this construction, $\triangle AE_\Lt E_\Rt$ is rotated to the side of $k_\Lt$ 
so that $\triangle AG_\Lt G'_\Rt$ overlap with the left side face.

\begin{figure}[htbp]
  \begin{center}
\addtocounter{theorem}{1}
          \includegraphics[width=0.8\hsize]{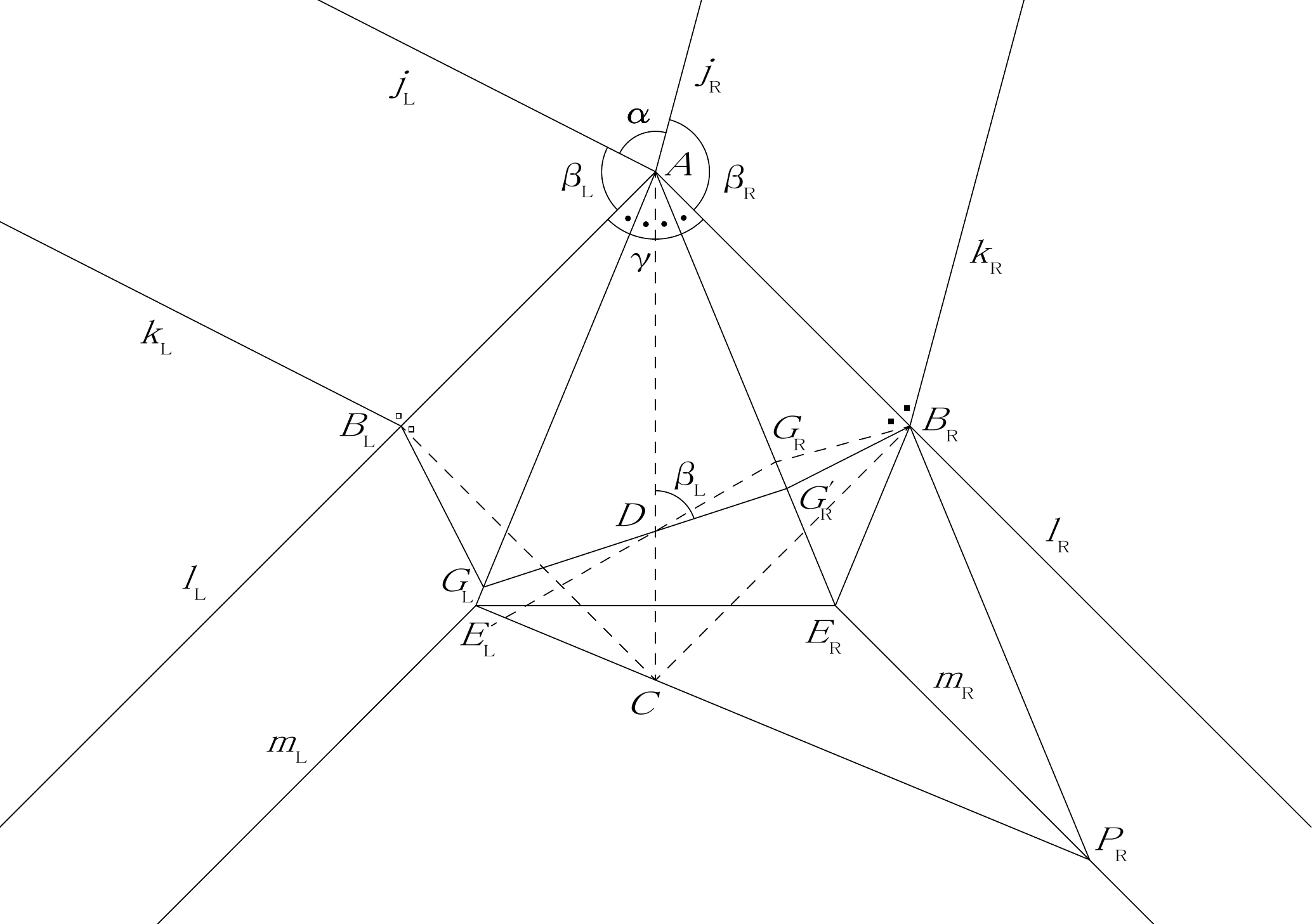}
    \caption{CP of the negative $3$D gadget by the first construction}
    \label{fig:negative_gadget_1}
\end{center}
\end{figure}
\addtocounter{theorem}{1}
\begin{table}[h]
\begin{tabular}{c|c}
mountain folds&$k_\sigma ,\ell_\sigma ,AB_\sigma ,B_\Rt E_\Rt ,E_\Lt P_\Rt ,G_\Lt G'_\Rt$\\ \hline
valley folds&$j_\sigma ,m_\sigma ,AE_\sigma ,B_\Lt G_\Lt ,B_\Rt G'_\Rt ,B_\Rt P_\Rt ,E_\Lt E_\Rt$
\end{tabular}\vspace{0.5cm}
\caption{Assignment of mountain folds and valley folds to the negative gadget by the first construction}
\label{tbl:assignment_negative_1}
\end{table}
For the second construction, suppose either $\beta_\Lt ,\beta_\Rt\leqslant\pi /2$ or $\beta_\Lt ,\beta_\Rt\geqslant\pi /2$ holds. 
Let us consider the crease pattern in Figure $\ref{fig:wrong_negative_CP}$,
where we take $G'_\Lt ,G'_\Rt ,P_\Rt ,P_\Lt$ so that segments $G'_\Lt G'_\Rt$ and $P_\Lt P_\Rt$ are parallel to segment $E_\Lt E_\Rt$.
This is \emph{not} the correct crease pattern because $\triangle AG'_\Lt G'_\Rt$ does not face the correct direction.
\begin{figure}[htbp]
  \begin{center}
\addtocounter{theorem}{1}
          \includegraphics[width=0.8\hsize]{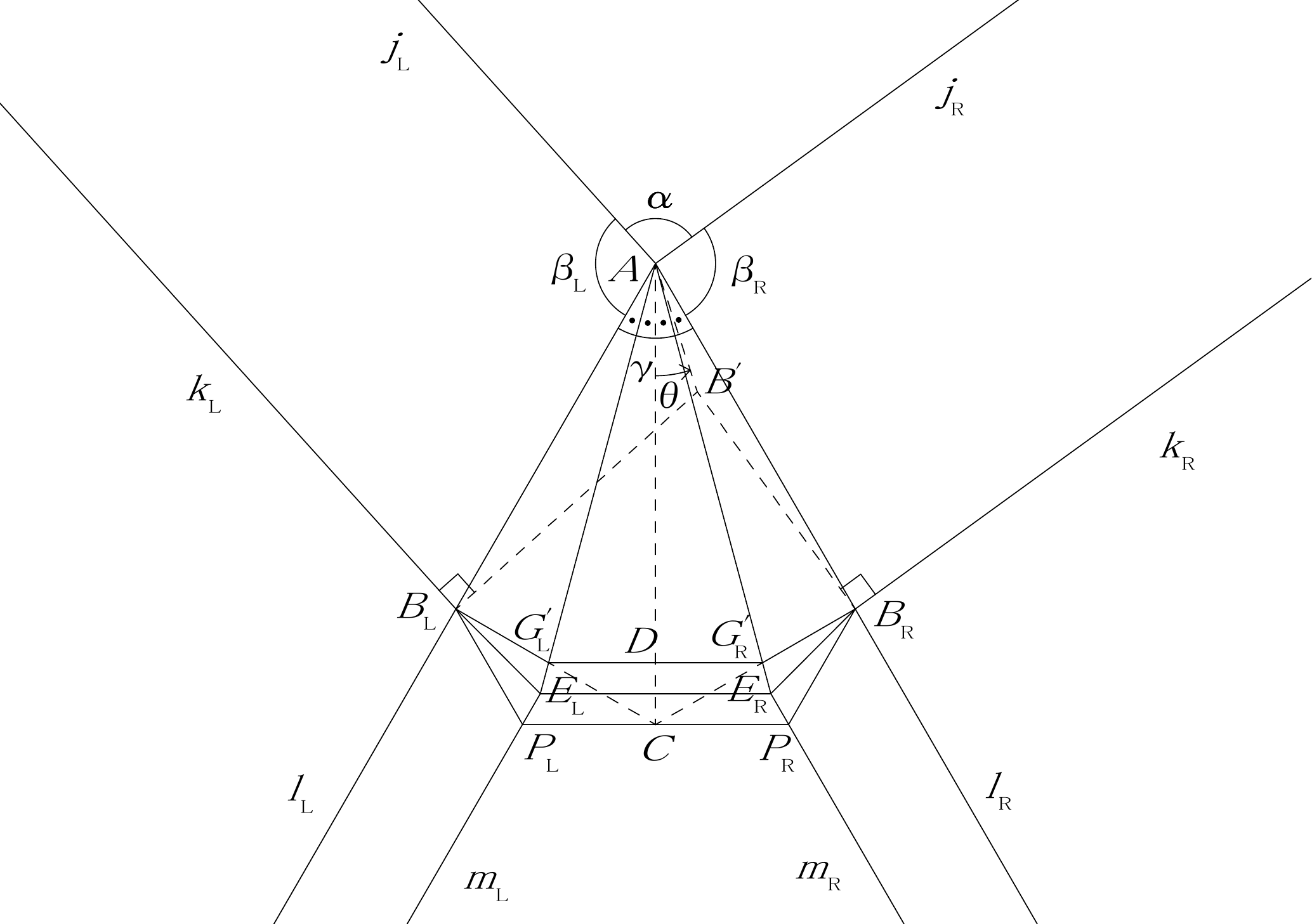}
    \caption{Wrong CP of the negative $3$D gadget which is to be revised}
    \label{fig:wrong_negative_CP}
\end{center}
\end{figure}
Let $B'$ be the intersection point of a perpendicular to $k_\Lt$ and a perpendicular to $k_\Rt$ in the crease pattern.
Also, let $\theta$ be the angle such that if $AD$ rotates counterclockwise around $A$ by $\theta\in (-\pi/2 ,\pi /2]$, 
then $AD$ overlaps with an extension of $AB'$.
Then $\theta$ is calculated as
\begin{equation}\label{eq:theta}
\theta =\tan^{-1}\left\{\frac{\tan\beta_\Rt -\tan\beta_\Lt}{2+(\tan\beta_\Lt +\tan\beta_\Rt )/\tan (\gamma /2)}\right\} .
\end{equation}
Let us define $\proj_{A,0}$ to be the projection of the resulting extrusion to the crease pattern in the bottom plane $\set{z=0}$ with $\proj_{A,0}(A)=A$.
Then we have $\proj_{A,0}(B)=B'$.
Since $G'_\Lt G'_\Rt$ in the bottom plane $\set{z=0}$ must be perpendicular to $AB$ in the resulting extrusion, and also perpendicular to $AB'$,
we have to rotate $G'_\Lt G'_\Rt$ couterclockwise around $D$ by $\theta$.

However, the supremum angle $\theta_\Rt$ (resp. $\theta_\Lt$) by which we \emph{can} rotate 
$G'_\Lt G'_\Rt$ counterclockwise (resp. clockwise) is given by
\begin{equation}\label{eq:theta_sigma}
\theta_\sigma =\min\left\{2\angle E_\sigma CP_\sigma ,\angle P_\sigma B_\sigma \ell_\sigma ,2\angle E_{\sigma'}P_{\sigma'}C\right\}
=\min\left\{\frac{\gamma}{2},\beta_\sigma +\frac{\gamma}{2}-\frac{\pi}{2},\pi -\gamma\right\} ,
\end{equation}
where we exclude $\theta_\sigma =\pi -\gamma$ for which we have to fold back the pleat formed by $\ell_\sigma$ and $m_\sigma$.
Hence this construction is possible if $\theta\in [-\theta_\Lt ,\theta_\Rt ]\cap (\gamma -\pi ,\pi -\gamma )$, 
where $\theta$ and $\theta_\sigma$ are given by $\eqref{eq:theta}$ and $\eqref{eq:theta_sigma}$ respectively,
and the crease pattern is shown in Figure $\ref{fig:negative_gadget_2}$.
Also, the assignment of mountain folds and valley folds is given in Table $\ref{tbl:assignment_negative_2}$,
where the subscript $\sigma$ is taken for both $\sigma =\Lt ,\Rt$.
\begin{figure}[htbp]
  \begin{center}
\addtocounter{theorem}{1}
          \includegraphics[width=0.8\hsize]{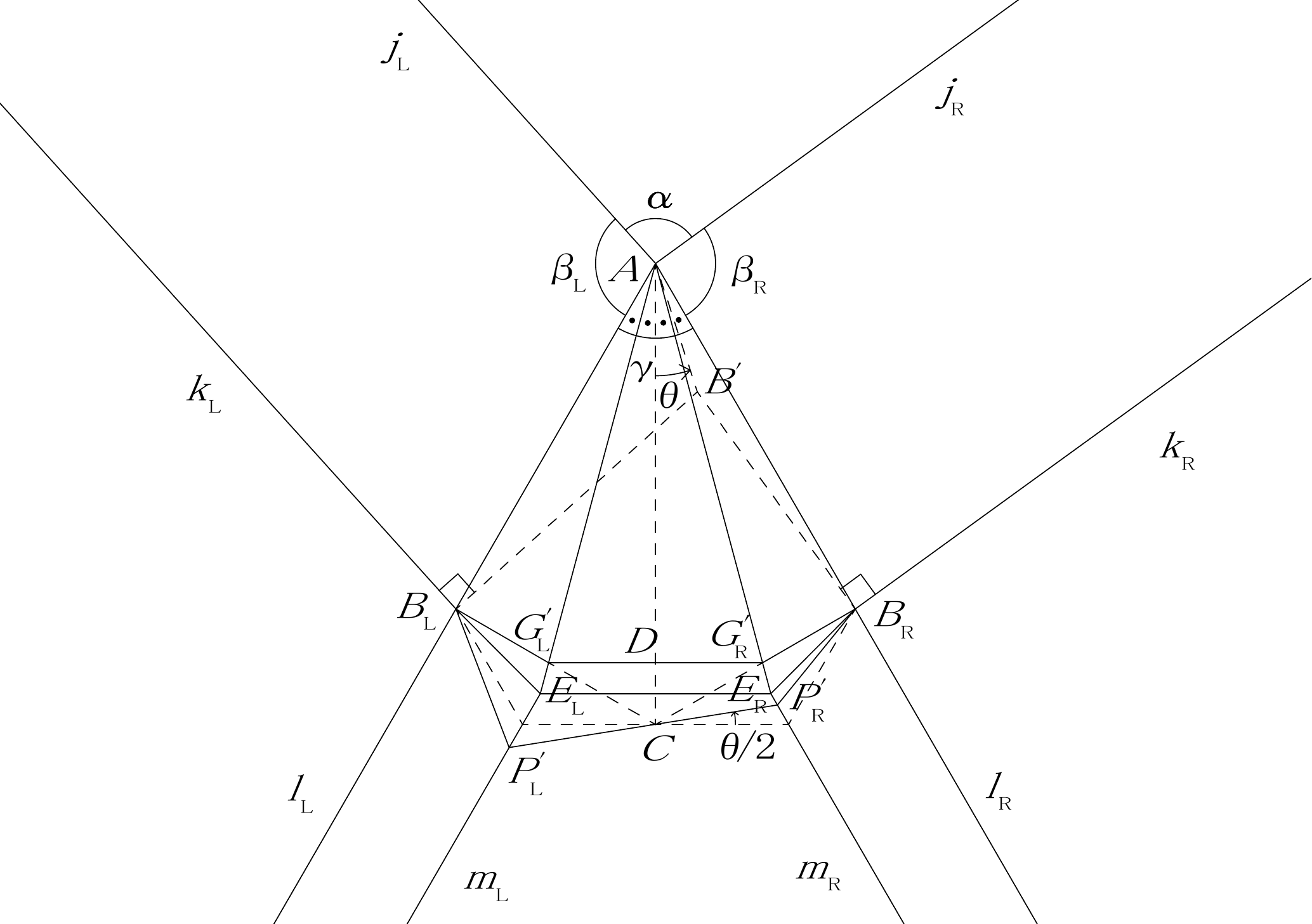}
    \caption{CP of the negative $3$D gadget by the second construction}
    \label{fig:negative_gadget_2}
\end{center}
\end{figure}
\addtocounter{theorem}{1}
\begin{table}[h]
\begin{tabular}{c|c}
mountain folds&$k_\sigma ,\ell_\sigma ,AB_\sigma ,B_\sigma E_\sigma (\text{for }\beta_\sigma\neq\pi /2),G'_\Lt G'_\Rt ,P'_\Lt P'_\Rt$\\ \hline
valley folds&$j_\sigma ,m_\sigma ,AE_\sigma ,B_\sigma G'_\sigma ,B_\sigma P'_\sigma (\text{for }\beta_\sigma\neq\pi /2),E_\Lt E_\Rt$
\end{tabular}\vspace{0.5cm}
\caption{Assignment of mountain folds and valley folds to the negative gadget by the second construction}
\label{tbl:assignment_negative_2}
\end{table}

\begin{proposition}\label{prop:appl_scope_negative_2}
Suppose either $\beta_\Lt ,\beta_\Rt\leqslant\pi /2$ or $\beta_\Lt ,\beta_\Rt\geqslant\pi /2$ holds.
If $\gamma <2\pi /3$, then the above second construction of negative $3$D gadgets is always possible.
\end{proposition}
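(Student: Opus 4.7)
The plan is to verify directly that $\theta\in [-\theta_\Lt,\theta_\Rt]\cap (\gamma-\pi,\pi-\gamma)$, the condition from $\eqref{eq:theta_sigma}$ for the second construction to succeed. Since $\gamma<2\pi/3$ implies $\pi-\gamma>\gamma/2$ and each $\theta_\sigma\leqslant\gamma/2$ (the value $\pi-\gamma$ is never the minimum in $\eqref{eq:theta_sigma}$ under our hypothesis), the interval $[-\theta_\Lt,\theta_\Rt]$ is automatically contained in $(\gamma-\pi,\pi-\gamma)$, and only the first containment requires work. Writing $a=\tan\beta_\Lt$, $b=\tan\beta_\Rt$, and $c=\tan(\gamma/2)$, formula $\eqref{eq:theta}$ becomes $\tan\theta=c(b-a)/(2c+a+b)$, and I would split along the two alternatives of the hypothesis.

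In the case $\beta_\Lt,\beta_\Rt\geqslant\pi/2$ one has $\theta_\sigma=\gamma/2$, so the target is $|\tan\theta|\leqslant c$. The identity $(2c+a+b)^2-(b-a)^2=4(c+a)(c+b)$ reduces this to $(c+a)(c+b)\geqslant 0$. The crux is to extract from condition $\mathrm{(i)}$ the sharper bound $\beta_\sigma<\pi-\gamma/2$, obtained by rearranging $\beta_\Rt<\alpha+\beta_\Lt$ with $\alpha=2\pi-\gamma-\beta_\Lt-\beta_\Rt$; since $\tan$ is increasing on $(\pi/2,\pi)$ this yields $a,b<-c$, so both factors $c+a,c+b$ are negative and the product is positive.

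In the case $\beta_\Lt,\beta_\Rt\leqslant\pi/2$ one has $\theta_\sigma=\beta_\sigma+\gamma/2-\pi/2$. Assuming without loss of generality $\beta_\Lt\leqslant\beta_\Rt$ so that $\theta\geqslant 0$, the lower bound is free, and the upper bound $\tan\theta\leqslant\tan\theta_\Rt=(bc-1)/(b+c)$ (positive by $\mathrm{(iv)}$) cross-multiplies to
\begin{equation*}
(a+b)(c^2-1)+2c(ab-1)\geqslant 0.
\end{equation*}
The key step is the substitution $x=\pi/2-\beta_\Lt$, $y=\pi/2-\beta_\Rt$, which by condition $\mathrm{(iv)}$ of Construction $\ref{const:new}$ places $x,y$ in $[0,\gamma/4]$. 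Then $a=\cot x$, $b=\cot y$, and combining $\cot x+\cot y=\sin(x+y)/(\sin x\sin y)$, $\cot x\cot y-1=\cos(x+y)/(\sin x\sin y)$ with the double-angle identities $c^2-1=-\cos\gamma/\cos^2(\gamma/2)$ and $2c=\sin\gamma/\cos^2(\gamma/2)$, the left-hand side collapses to
\begin{equation*}
\frac{\sin(\gamma-x-y)}{\cos^2(\gamma/2)\sin x\sin y},
\end{equation*}
whose numerator is positive because $\gamma-x-y\in [\gamma/2,\gamma]\subset (0,\pi)$.

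The main obstacle is spotting the substitution $x=\pi/2-\beta_\sigma$ in the second case and recognizing that the resulting expression telescopes via double-angle formulas to a single sine whose argument is pinned to $(0,\pi)$. The bound $\gamma<2\pi/3$ enters both to trivialize the intersection with $(\gamma-\pi,\pi-\gamma)$ in the first paragraph and to keep $\gamma-x-y$ strictly inside $(0,\pi)$ in the second case, while condition $\mathrm{(iv)}$ is essential for pinning $x,y\in [0,\gamma/4]$.
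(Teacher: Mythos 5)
Your argument is essentially sound and follows the same overall strategy as the paper: reduce the claim to $\theta\in[-\theta_\Lt,\theta_\Rt]$ (the intersection with $(\gamma-\pi,\pi-\gamma)$ being automatic since $\theta_\sigma\leqslant\gamma/2<\pi-\gamma$ for $\gamma<2\pi/3$), split according to the sign of $\beta_\sigma-\pi/2$, and verify a tangent inequality coming from $\eqref{eq:theta}$ and $\eqref{eq:theta_sigma}$. The execution differs in both cases, though. For $\beta_\Lt,\beta_\Rt>\pi/2$ the paper shows $\theta\leqslant 0$ and computes $\tan\theta+\tan\theta_\Lt=2c(c+b_\Rt)/(2c+b_\Lt+b_\Rt)>0$, while you prove the two-sided bound $|\tan\theta|\leqslant c$ via the identity $(2c+a+b)^2-(b-a)^2=4(c+a)(c+b)$ together with $a,b<-c$, which is a clean equivalent. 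For $\beta_\Lt,\beta_\Rt<\pi/2$ the paper bounds the same numerator $(a+b)(c^2-1)+2c(ab-1)$ using the inequality $0<a+b<(ab-1)c$ extracted from Lemma $\ref{lem:condition_gamma}$ $(\mathrm{i}')$, whereas you substitute $a=\cot x$, $b=\cot y$ and collapse the numerator to $\sin(\gamma-x-y)/(\cos^2(\gamma/2)\sin x\sin y)$; this is arguably slicker, and note that you do not actually need condition $\mathrm{(iv)}$ here: condition $(\mathrm{i}')$ already gives $x+y<\gamma/2$, hence $\gamma-x-y\in(\gamma/2,\gamma]\subset(0,\pi)$, so the argument works under the same hypotheses the paper uses.

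The one genuine (though minor) omission is the degenerate set where $\beta_\Lt=\pi/2$ or $\beta_\Rt=\pi/2$, which your hypothesis permits in both alternatives. There $\tan\beta_\sigma$ is undefined, so $\eqref{eq:theta}$, your cross-multiplication, and the cotangent substitution (where $\sin y=0$) all break down literally and must be read as limits or handled separately. The paper does exactly this: for $\beta_\Lt=\beta_\Rt=\pi/2$ any admissible $\theta$ works because $\ora{BA}=(0,0,1)$ is normal to every vector of the bottom plane, and for the mixed cases $\beta_\Lt<\pi/2=\beta_\Rt$ (resp. $\beta_\Lt=\pi/2<\beta_\Rt$) one gets $\theta=\gamma/2=\theta_\Rt$ (resp. $\theta=-\gamma/2=-\theta_\Lt$), which lies on the boundary of the closed interval and is admissible precisely because $\gamma<2\pi/3$. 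Adding a sentence treating these equality cases (or arguing by continuity) would make your proof complete.
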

\begin{proof}
We may assume $\beta_\Lt\leqslant\beta_\Rt$.
Also, we set $b_\sigma =\tan\beta_\sigma$ for $\sigma =\Lt ,\Rt$  and $c=\tan (\gamma /2)$. 
Note that $\beta_\Lt =\beta_\Rt =\pi /2$ we can choose any $\theta\in [-\theta_\Lt ,\theta_\Rt]\cap (\gamma -\pi ,\pi -\gamma )$
because $\ora{BA}=(0,0,1)$ is normal to any vector in the bottom plane.

First suppose $\beta_\Lt ,\beta_\Rt <\pi /2$.
Then we have $\theta\geqslant 0$ and $\theta_\Rt =\beta_\Rt +\gamma /2-\pi /2$.
Also, it follows from Lemma $\ref{lem:condition_gamma}$ that
\begin{equation*}
0<\pi -\beta_\Lt -\beta_\Rt <\gamma /2<\pi /2,
\end{equation*}
so that taking the tangents gives 
\begin{equation*}
0<\frac{b_\Lt +b_\Rt}{b_\Lt b_\Rt -1}<c.
\end{equation*}
Thus we see that $b_\Lt b_\Rt -1>0$ and 
\begin{equation}\label{ineq:b_L+b_R}
0<b_\Lt +b_\Rt <(b_\Lt b_\Rt -1)c.
\end{equation}
Hence we calculate as
\begin{align*}
\tan\theta_\Rt -\tan\theta&=\frac{b_\Rt c-1}{c+b_\Rt}-\frac{(b_\Rt -b_\Lt )c}{2c+b_\Lt +b_\Rt}\\
&=\frac{(b_\Lt +b_\Rt )(c^2-1)+2(b_\Lt b_\Rt -1)c}{(c+b_\Rt )(2c+b_\Lt +b_\Rt)}>\frac{(b_\Lt +b_\Rt )(c^2+1)}{(c+b_\Rt )(2c+b_\Lt +b_\Rt)}>0,
\end{align*}
where we used $\eqref{ineq:b_L+b_R}$ in the last line.
This gives $-\theta_\Lt <0\leqslant\theta <\theta_\Rt$.

Next suppose $\beta_\Lt ,\beta_\Rt >\pi /2$.
Then it follows from Lemma $\ref{lem:condition_gamma}$ that
\begin{equation*}
\pi /2<\beta_\Lt\leqslant\beta_\Rt <\pi -\gamma /2,
\end{equation*}
so that
\begin{equation*}
c+b_\Lt\leqslant c+b_\Rt <0,
\end{equation*}
which gives $\theta\leqslant 0$.
Thus we calculate as
\begin{equation*}
\tan\theta +\tan\theta_\Lt =\frac{(b_\Rt -b_\Lt )c}{2c+b_\Lt +b_\Rt}+c=\frac{2c(c+b_\Rt )}{2c+b_\Lt +b_\Rt}>0,
\end{equation*}
which gives $-\theta_\Lt <\theta\leqslant 0<\theta_\Rt$.

Lastly, suppose $\beta_\Lt <\pi /2=\beta_\Rt$ (resp. $\beta_\Lt =\pi /2<\beta_\Rt$).
Then we see from $\eqref{eq:theta}$ that $\theta =\gamma /2=\theta_\Rt$ (resp. $\theta =-\gamma /2=-\theta_\Lt$) as desired.
This completes the proof of Proposition $\ref{prop:appl_scope_negative_2}$.
\end{proof}
Note that if $\beta_\Lt =\pi /2$, $\beta_\Rt\neq\pi /2$ and $\gamma <2\pi /3$, the first and the second constructions give the same result.

However, the second construction of negative $3$D gadgets is not always possible for $\gamma\geqslant 2\pi /3$.
Indeed, if $\gamma\geqslant 2\pi /3$ and $\beta_\Lt <\pi /2=\beta_\Rt$ (resp. $\beta_\Lt =\pi /2<\beta_\Rt$),
then we have $\theta =\gamma /2$ (resp. $\theta =-\gamma /2$) and $\theta_\sigma =\pi -\gamma\leqslant\gamma /2$,
and consequently $\theta\notin [-\theta_\Lt ,\theta_\Rt ]\cap (\gamma -\pi ,\pi -\gamma )$.\\

\noindent$\bullet$ {\it Treatment of intersections of outgoing pleats.}
If $\beta_{i,\Rt}-\delta_{i,\Rt}+\beta_{i+1,\Lt}-\delta_{i+1,\Lt}<\pi /2$, then the two pairs of 
the simple outgoing pleats $\ell_{i,\Rt},m_{i,\Rt}$ and $\ell_{i+1,\Lt},m_{i+1,\Lt}$ intersect.
Although there are many ways of treating the outgoing pleats, let us describe a useful way of changing the outgoing pleats into one direction.

Let $P$ be the intersection point of $\ell =\ell_{i,\Rt}$ and $\ell'=\ell_{i+1,\Lt}$.
Let $m=m_{i,\Rt}$ and $m'=m_{i+1,\Rt}$, and let $Q=Q_{i,\Rt}$ (resp. $Q'=Q_{i+1,\Lt}$) be the intersection point of 
$m$ (resp. $m'$) and its perpendicular through $P$.
Also, let $R$ be the intersection point of $m$ and $m'$.
Without loss of generality, we may assume that $Q=(-1,0),Q'=(1,0)$ 
and $P=(x,y)$ lies in the range $\set{x\geqslant 0,y>0}$ by interchanging the left and the right sides if necessary.
Then the $x$-coordinate of $P$ satisfies one of the following: $\mathrm{(i)}$ $0\leqslant x<1$, $\mathrm{(ii)}$ $x=1$, or $\mathrm{(iii)}$ $x>1$.
Examples of the crease patterns of the outgoing pleats in these cases are given as solid lines in Figure $\ref{fig:outgoing_pleats}$,
where $s,s',t,t',r$ are the rays starting from $S,S',T,T',R$ respectively and perpendicular to $QQ'$, and
we can choose any line $u$ parallel to and between $s$ and $s'$ as a reference line by which $t$ (resp. $t'$) is determined 
as a bisector of $u$ and $t$ (resp. $u$ and $t'$).
Also, the assignment of mountain folds and valley folds in all cases is commonly given in Table $\ref{tbl:assignment_outgoing}$.
\begin{figure}[htbp]
  \begin{center}
\addtocounter{theorem}{1}
          \includegraphics[width=\hsize]{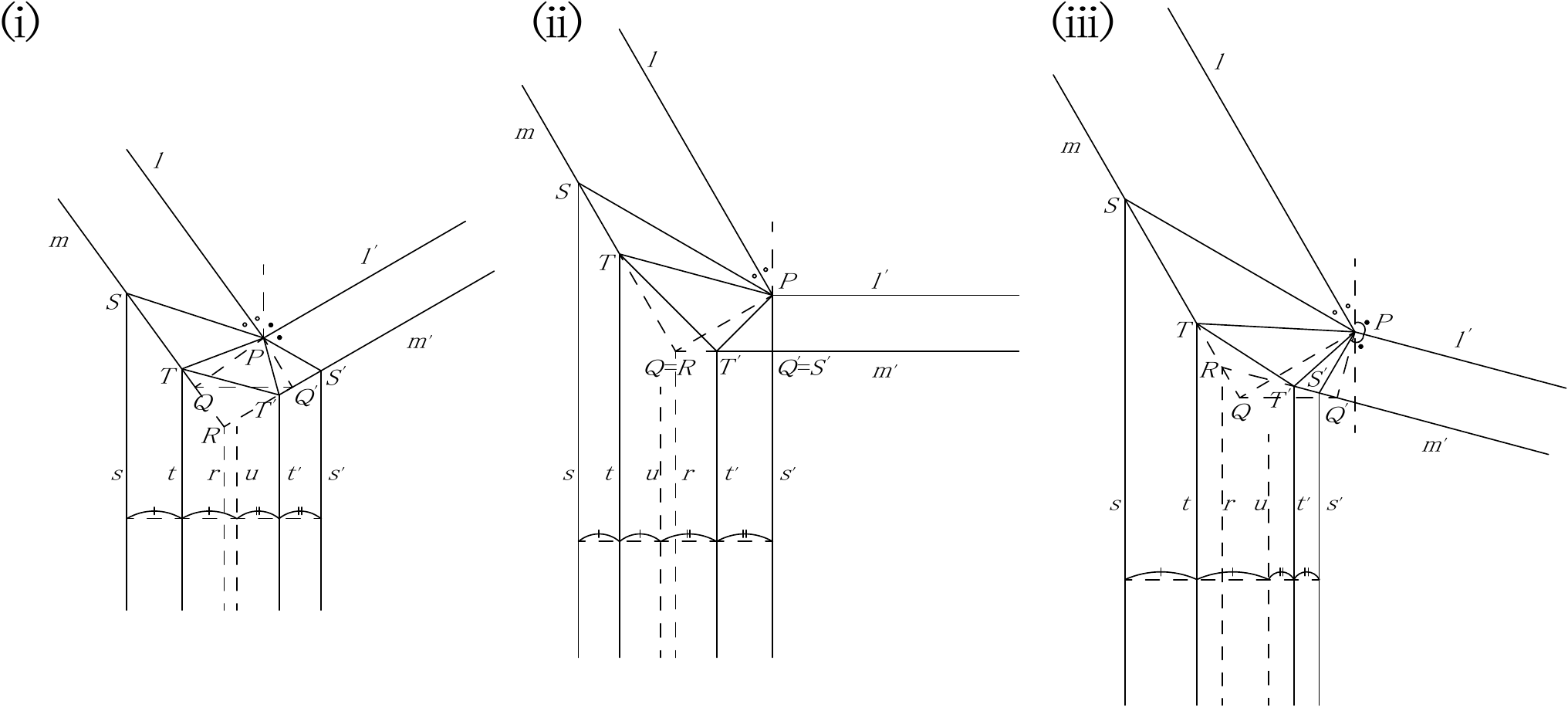}
    \caption{Treatment of outgoing pleats}
    \label{fig:outgoing_pleats}
\end{center}
\end{figure}
\addtocounter{theorem}{1}
\begin{table}[h]
\begin{tabular}{c|c}
mountain folds&$\ell ,\ell' ,s,s',PT,PT'$\\ \hline
valley folds&$m$ (incl. $ST$)$,m'$ (incl. $S'T'$)$,t,t',PS,PS',TT'$
\end{tabular}\vspace{0.5cm}
\caption{Assignment of mountain folds and valley folds to outgoing pleats}
\label{tbl:assignment_outgoing}
\end{table}

Note that the sum of the widths of the outgoing pleats is given by $\norm{QQ'}$ in all cases, 
and the distance between $s$ and $s'$ is given by $2\norm{QQ'}$.
Note also that the ray $r$ starting from $R$ bisects $s$ and $s'$. 
Thus instead of using reflections across $\ell$ and $\ell'$,
we can find $s$ and $s'$ as the lines parallel to $r$ at a distance of $\norm{QQ'}$, and then find $S$ and $S'$ as their intersection points
with $m$ and $m'$ respectively. 

One advantage of our treatment of outgoing pleats is that 
we can shift the positions of intersecting pleats as in Figure $\ref{fig:shift_outgoing}$,
which is useful when we make origami $3$D tessellations.
As another advantage, the intersecting pleats tuck the altered pleats inside so as to prevent them from fluttering easily, 
leaving only Y-shaped streaks on the front side.\\
\begin{figure}[htbp]
  \begin{center}
\addtocounter{theorem}{1}
          \includegraphics[width=\hsize]{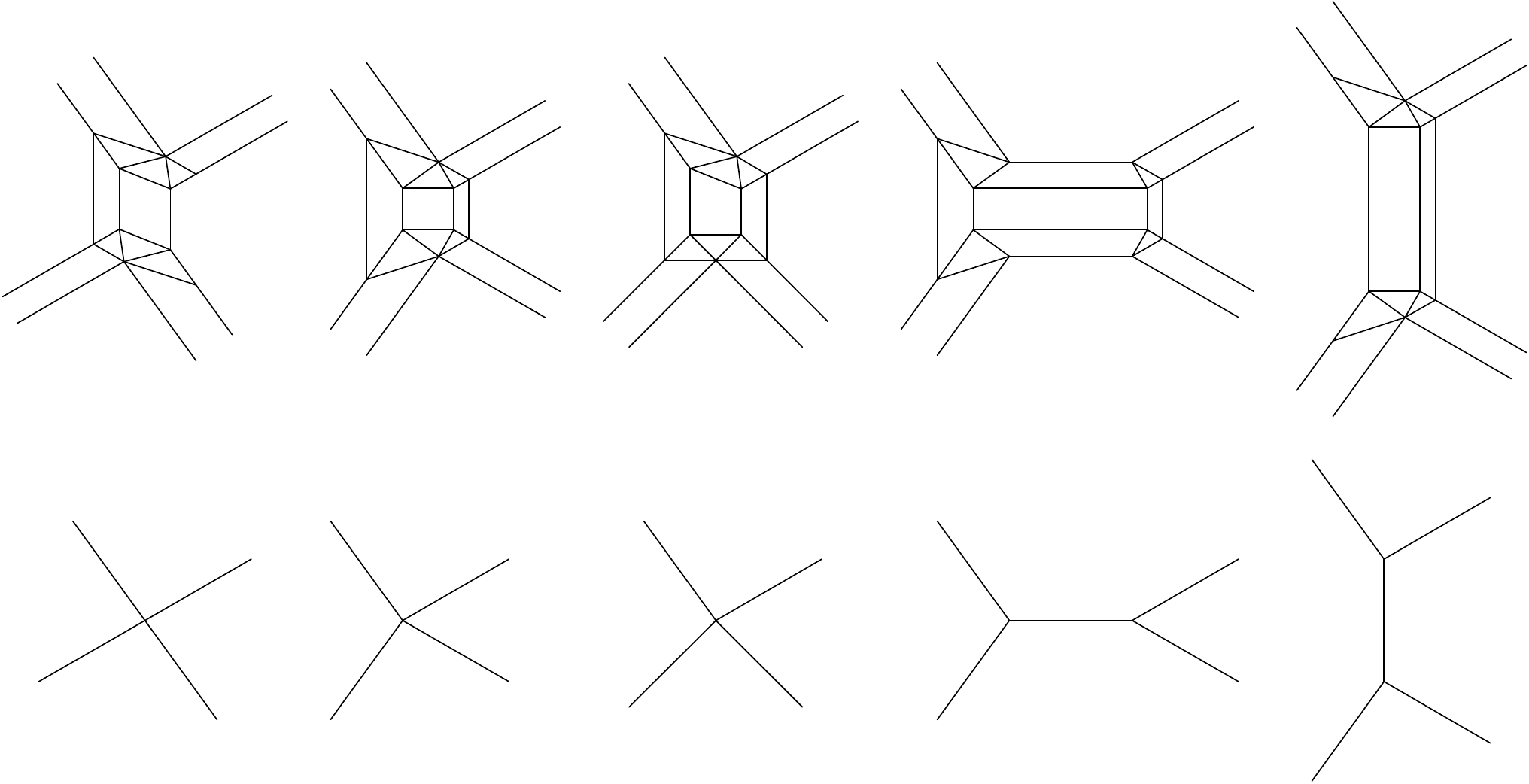}
    \caption{Upper: CPs of various shifts of outgoing pleats, Lower: Streaks appearing on the front side}
    \label{fig:shift_outgoing}
\end{center}
\end{figure}

\noindent$\bullet$ {\it Curved creases.}
Since the extrusion with our new $3$D gadgets has flat back sides, we can add new creases, which may be even curved, to the existing creases,
and deform the extruded object with the new creases. 
For example, as described in Introduction, we can extrude a flat-foldable cube using the crease pattern shown in Figure $\ref{fig:flat-foldable}$.
As another example, we can make an extrusion with contours consisting of semicircles by adding sine curves to the crease pattern of a cuboid.
The crease pattern of the extrusion and the side views are shown in Figure $\ref{fig:extrusion_curved}$, 
where the dotted lines are unfolded when we deform the cuboid with the curved creases.\\
\begin{figure}[htbp]
  \begin{center}
\addtocounter{theorem}{1}
          \includegraphics[width=0.8\hsize]{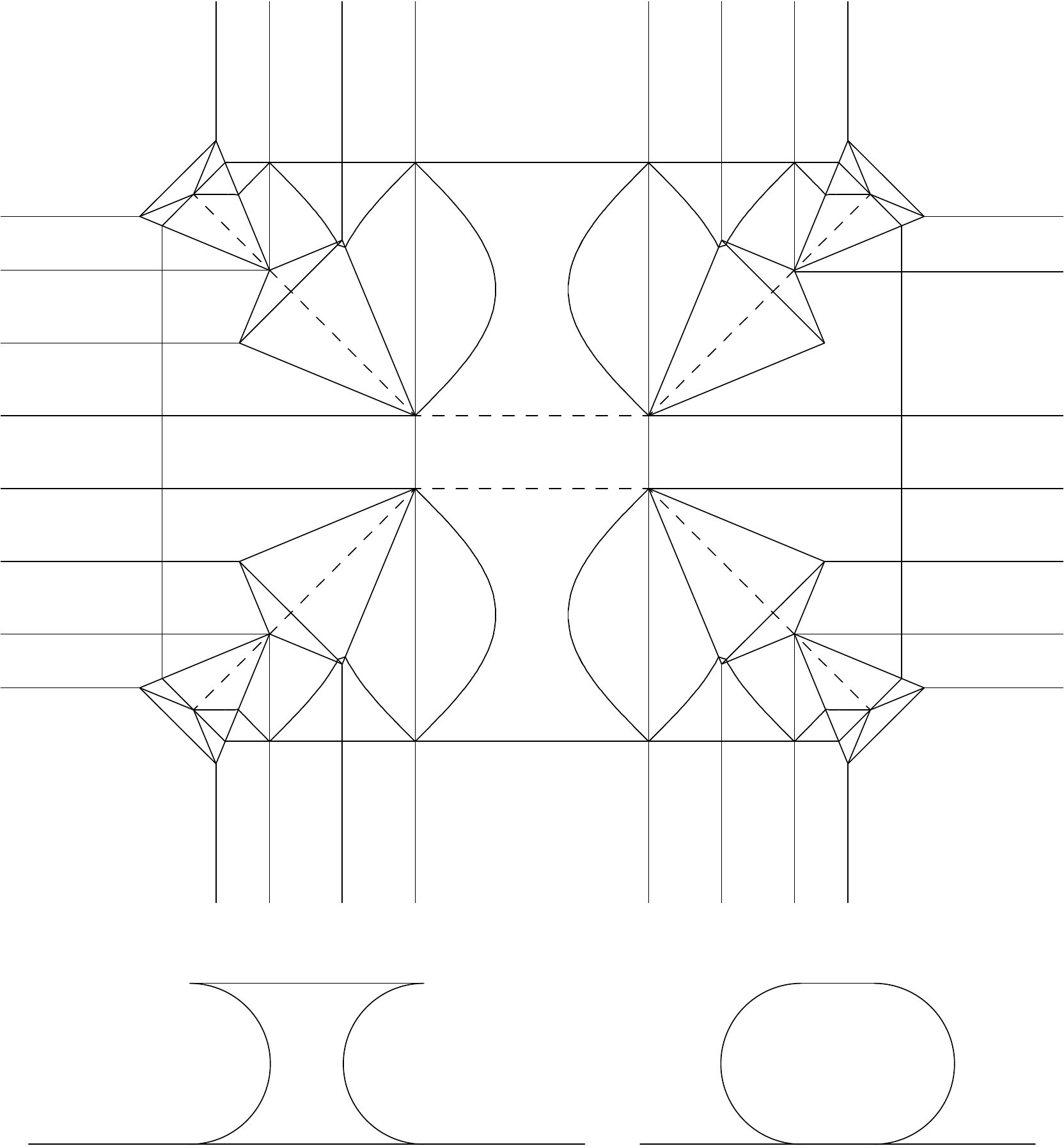}
    \caption{Upper: CP of an extrusion whose contour consists of semicircles, Lower: Side views of the resulting extrusion}
    \label{fig:extrusion_curved}
\end{center}
\end{figure}

\section{Conclusion}\label{sec:conclusion}
In the preceding sections, we presented the construction algorithms of our new $3$D gadgets and their variantions in origami extrusions
and described how to compute the interference coefficients to maximize the efficiency when extruding a polyhedron with the $3$D gadgets.
Using the interference coefficients, we proved the downward compatibility theorem of the new $3$D gadgets with the conventional ones,
and particularly we compared numerically the maximal height of the prism of a given convex polygon which can be extruded with the new $3$D gadgets
with that which can be extruded with the conventional ones.

Since for a given polyhedron there are many ways of designing the crease pattern of its extrusion,
selecting the optimal folding method and controlling the angles of the outgoing pleats will be of some interest in computer-aided design.
The new $3$D gadgets are theoretically always more efficient than the conventional ones when both kinds of gadgets are applicable, 
but the conventional gadgets are practically more suited when $\gamma$ is very small
because the new gadgets need finer creases and are more difficult to fold.
Also, if $\beta_\sigma +\gamma /4<\pi /2$, then we can only apply the conventional $3$D gadgets. 
On the other hand, when $\alpha$ is very small, the new gadgets may be the only practical solution because of the efficiency.
Thus we should use the two kinds of $3$D gadgets selectively in accordance with the situation, using the compatibility.

Althouth our new $3$D gadgets enable us to extrude a polyhedron which was impossible or very difficult to extrude with the conventional methods,
our methods are still very limited to extrude a general polyhedron such that many edges assemble at its vertex.
Studying the examples of a regular octahedron and a regular icosahedron extruded by Natan \cite{Natan} and the author \cite{Doi} respectively
may be a key to overcoming this difficulty.
There are still a lot to be studied including the above, and we expect further developments in future research on the basis of our results in this paper.

\end{document}